\documentclass[11pt,letterpaper]{article}
\usepackage[margin=1.1in]{geometry}
\usepackage{amsmath,amssymb,amsfonts,setspace,url,enumitem}
\usepackage{amsthm,hyperref}
\usepackage{tikz}
\usepackage{color}
\usepackage{bigstrut}
\usepackage{multirow}
\usepackage{cleveref} 
\usepackage{ascmac}
\usepackage{mathpazo}

\theoremstyle{plain}
\newtheorem{theorem}{Theorem}
\newtheorem{lemma}{Lemma}

\newtheorem{corollary}{Corollary}

\newtheorem{definition}{Definition}

\newtheorem{proposition}{Proposition}

\theoremstyle{definition}


\newcommand{\norm}[1]{\left\|#1\right\|} 
\newcommand{\smallnorm}[1]{\|#1\|} 
\newcommand{\estnorm}[1]{\widetilde{\left\|#1\right\|}} 
\newcommand{\fnorm}[1]{\norm{#1}_{\mathrm {F}}}    
\newcommand{\set}[2]{\{#1,\ldots,#2\}}
\newcommand{\stat}[1]{\vert#1\vert_{\mathrm{tv}}} 

\newcommand{\ex}[1]{\mathbb{E}\!\left[#1\right]} 
\newcommand{\supp}[1]{\mathrm{supp}(#1)} 
\newcommand{\var}[1]{\mathrm{Var}\left[#1\right]} 
\newcommand{\Prob}[1]{\mathrm{Pr}\left[#1\right]} 
\newcommand{\spec}{\mathrm{Spec}} 
\newcommand{\Tr}{\mathrm{Tr}} 
\newcommand{\ket}[1]{|#1\rangle}

\newcommand{\abs}[1]{\left|#1\right|}
\newcommand{\sample}{\mathsf{Sample}}
\newcommand{\olsi}[1]{\,\overline{\!{#1}\!}\,}


\newcommand{\Comp}{\mathbb{C}}
\newcommand{\Real}{\mathbb{R}}

\providecommand{\Qq}{\mathcal{Q}}

\newcommand{\myp}{\mathfrak{p}}
\newcommand{\myq}{\mathfrak{q}}

\newcommand{\BQP}{\mathsf{BQP}}

\newcommand{\ceil}[1]{\left\lceil #1 \right\rceil}

\newcommand{\poly}{\mathrm{poly}}
\newcommand{\row}[1]{\mathrm{r_{#1}}}
\newcommand{\brow}[1]{\mathrm{{r}_{\bar #1}}}
\newcommand{\browolsi}[1]{\mathrm{{r}_{\olsi #1}}}


\newcommand{\Q}[1]{\mathsf{Q}_{#1}}   
\newcommand{\SQ}[1]{\mathsf{SQ}_{#1}}   
   
\newcommand{\OSQ}[1]{\mathsf{OSQ}_{#1}}
\newcommand{\RSQ}[1]{{\mathsf{RSQ}_{#1}}}
\newcommand{\Samp}[1]{\mathsf{S}_{#1}}

\newcommand{\costq}[1]{\textbf{q}(#1)}
\newcommand{\costs}[1]{\textbf{s}(#1)}
\newcommand{\costsq}[1]{\textbf{sq}(#1)}
\newcommand{\costosq}[1]{\textbf{osq}(#1)}
\newcommand{\costrsq}[1]{\textbf{rsq}(#1)}

\begin{document}

\title{Robust Dequantization of the \\Quantum Singular Value Transformation and \\Quantum Machine Learning Algorithms}
\author{
Fran{\c c}ois Le Gall\\
Nagoya University\\
legall@math.nagoya-u.ac.jp
}

\date{}
\maketitle
\thispagestyle{empty}
\setcounter{page}{1}
\begin{abstract}
Several quantum algorithms for linear algebra problems, and in particular quantum machine learning problems, have been ``dequantized'' in the past few years. These dequantization results typically hold when classical algorithms can access the data via \emph{length-squared sampling}. This assumption, which is standard in the field of randomized linear algebra, means that for a unit-norm vector $u\in\Comp^{n}$, we can sample from the distribution $p_u\colon\{1,\ldots,n\}\to [0,1]$ defined as $p_u(i)=|u(i)|^2$ for each $i\in\{1,\ldots,n\}$. Since this distribution corresponds to the distribution obtained by measuring the quantum state $\ket{u}$ in the computational basis, length-squared sampling access gives a reasonable classical analogue to the kind of quantum access considered in many quantum algorithms for linear algebra problems.

In this work we investigate how robust these dequantization results are. We introduce the notion of \emph{approximate length-squared sampling}, where classical algorithms are only able to sample from a distribution close to the ideal distribution in total variation distance. While quantum algorithms are natively robust against small perturbations, current techniques in dequantization are not. Our main technical contribution is showing how many techniques from randomized linear algebra can be adapted to work under this weaker assumption as well. We then use these techniques to show that the recent low-rank dequantization framework by Chia, Gily\'en, Li, Lin, Tang and Wang (JACM 2022) and the dequantization framework for sparse matrices by Gharibian and Le Gall (STOC 2022), which are both based on the Quantum Singular Value Transformation, can be generalized to the case of approximate length-squared sampling access to the input. We also apply these results to obtain a robust dequantization of many quantum machine learning algorithms, including quantum algorithms for recommendation systems, supervised clustering and low-rank matrix inversion.
\end{abstract}
\newpage
\section{Introduction}

\subsection{Background}
The quantum algorithm for matrix inversion by Harrow, Hassidim and Lloyd \cite{HHL09}, often called simply the HHL algorithm, is a milestone in quantum algorithms for linear algebra. Given a well-conditioned invertible matrix $A\in\Comp^{n\times n}$ and a vector $u\in\Comp^n$, the algorithm prepares in time $\poly(\log n)$ a quantum state proportional to a vector $\hat x\in\Comp^n$ close to the solution of the system of equations $Ax=u$. This quantum state can then be used to compute some partial information about $\hat x$. The HHL algorithm has been very influential for the development of quantum algorithms solving problems related to linear algebra, and has in particular lead to several quantum machine learning algorithms (\cite{Cong+16,Kerenidis+ITCS17,Liu+17,Lloyd+14,Rebentrost+18,Rebentrost+14,Rebentrost+18B,Wiebe+12} for instance). 

An important assumption in the HHL algorithm, and in most of these extensions as well, is that the input vector should be accessible as a quantum state: the quantum computer has access to a few copies of the quantum state $\ket{v}=\sum_{i=1}^n v(i)\ket{i}$ on $O(\log n)$ qubits, for $v=u/\norm{u}$. In order for the HHL algorithm to be useful, it is thus crucial to be able to prepare this quantum state efficiently. One concrete proposal is having the vector $u$ stored in Quantum Random Access Memory (QRAM), in which case
(when using the definition of QRAM in \cite{Kerenidis+ITCS17}) 
one copy of $\ket{v}$ can be created in time $\poly(\log n)$. 
Similar assumptions are needed on how the quantum algorithm can access the matrix $A$. As discussed in~\cite{Aaronson15}, due to all these assumptions it is difficult to directly compare the performance of these quantum algorithms to the performance of classical methods.

A series of works initiated by Tang \cite{TangSTOC19} has been investigating the importance of such assumptions for quantum machine learning. These results have shown that for most quantum machine learning algorithms in the QRAM model, if classical algorithms are allowed to access the input via \emph{length-squared\footnote{Length-squared sampling is also called $\ell_2$-norm importance sampling in the literature.} sampling-and-query access} (which we abbreviate as \emph{sampling-and-query access} in this paper) then the quantum advantage vanishes: classical algorithms can achieve performance similar (i.e., polynomially related) to the performance of quantum algorithms. The concept of (length-squared) sampling-and-query access is a central concept in the literature on linear algebra algorithms based on random sampling \cite{Drineas+04,Drineas+FOCS01,Drineas+SICOMP06,Drineas+06B,Drineas+06C,Frieze+JACM04,Kannan+09, Kannan2017}. 
For a nonzero vector $u\in\Comp^n$, sampling-and-query access to $u$ means that the following two operations can be implemented in $\poly(\log n)$ time: on input $i\in\{1,\ldots,n\}$ we can query the entry $u(i)$; we can get one sample from the distribution $p_u\colon\{1,\ldots,n\}\to [0,1]$ defined as 
$
p_u(i)=|u(i)|^2/\norm{u}^2
$
for each $i\in\{1,\ldots,n\}$. The first operation is standard and similar to the assumption (made in most classical algorithms) that the input is stored in Random Access Memory (RAM). It is the second operation that makes linear algebra algorithms based on random sampling extremely powerful. In the perspective of dequantization, this second operation can be seen as a natural classical analogue of the assumption that the quantum state $\ket{v}$ (with $v=u/\norm{u}$) can be created at cost $\poly(\log n)$, since measuring $\ket{v}$ in the computational basis gives precisely a sample from the distribution $p_u$. 

The main thesis of the line of research on dequantization initiated in \cite{TangSTOC19} is that the performance of quantum machine learning algorithms in the QRAM model should be compared to the performance of classical algorithms with sampling-and-query access to the input. More precisely, the main objective of this approach is to investigate if some quantum advantage persists even when compared to such classical algorithms. The answers obtained so far are unfortunately negative. After Tang's celebrated dequantization \cite{TangSTOC19} of the quantum algorithm for recommendation systems by Kerenidis and Prakash \cite{Kerenidis+ITCS17}, which was one of the main candidates for quantum advantage in machine learning, several works have successively dequantized most known quantum machine learning algorithms in the QRAM model \cite{Chen+19, Chia+JACM22,Chia+20,Chia+MFCS20,Ding+22,Du+20,GG22, Gilyen+18,Jethwani+MFCS20,TangSTOC19}.  The coup de gr\^ace was recently given by Chia, Gily\'en, Li, Lin, Tang and Wang \cite{Chia+JACM22}, who developed a general framework for dequantization applying to almost all known quantum machine learning algorithms, and gave strong evidence for the lack of exponential speedup for quantum machine learning algorithms in the QRAM model.

\subsection{Motivation: Approximate sampling-and-query access}\label{sec:intro2}
In this work, we explore the possibility of dequantizing quantum algorithms in the QRAM model by classical algorithms using a weaker version of sampling-and-query access, which we call $\varepsilon$-approximate (length-squared) sampling-and-query access, where $\varepsilon\in[0,1]$ is a parameter. In $\varepsilon$-approximate sampling-and-query access we weaken the definition as follows: While the query operation is unchanged (on input $i\in\{1,\ldots,n\}$ we can query the entry $u(i)$), for the sampling operation we can only get one sample from a distribution $\tilde p_u\colon\set{1}{n}\to [0,1]$ that is close to $p_u$. More precisely, the only guarantee is that the total variation distance between $\tilde p_u$ and $p_u$ is at most~$\varepsilon$. (Standard sampling-and-query access corresponds to the case $\varepsilon=0$.) This variant corresponds to the setting where query access to $u$ can be easily implemented (e.g., when the input is stored in RAM) but perfect sampling access is not available or simply too costly to be implemented. We stress that in this setting, we do \emph{not} know the value $\tilde p_u(i)$ (and cannot compute it efficiently from the samples and queries).\footnote{If we had access to $\tilde p_u(i)$, we could simply define a vector $\tilde u$ such that $\tilde p_u(i)=|\tilde u(i)|^2/\norm{\tilde u}^2$ and apply the framework from \cite{Chia+JACM22} on $\tilde u$ instead of $u$ (since we could then easily obtain \emph{perfect} sampling-and-query access to~$\tilde u$).}

It happens that this seemingly minor change is problematic for known dequantization techniques. Consider for instance the inner product estimation from \cite{TangSTOC19}, a technique also used in several other ``quantum-inspired'' algorithms, which dequantizes the SWAP test (a basic quantum algorithm that can be used to estimate the inner product between two pure states). Given two unit-norm vectors $u,v\in\Real^n$, where $u$ is given via  sampling-and-query access, the approach typically works as follows: sample an index $i\in\set{1}{n}$ according to the distribution~$p_u$ and output the value $\frac{v(i)}{u(i)}$. Note that the expectation of this value is 
\begin{equation}\label{eq:expectation}
\sum_{i=1}^n p_u(i)\frac{v(i)}{u(i)}=\sum_{i=1}^n \abs{u(i)}^2\frac{v(i)}{u(i)}=\sum_{i=1}^n u(i)v(i)=(u,v).
\end{equation}
It is easy to show that the variance of this estimator is small, and thus taking the mean for a few samples gives a good approximation of the inner product $(u,v)$. Note that this test, however, is not ``robust'' due to the division by $u(i)$. In particular, when only able to sample from $\tilde p_u$ instead of $p_u$, the same strategy does not work since the ratio $\tilde p_u(i)\frac{1}{u(i)}$ can become arbitrarily large, which significantly compromises the quality of the approximation.

Quantum algorithms in the QRAM model, on the other hand, are natively robust: since quantum evolutions are trace-preserving, for any quantum algorithm $\Qq$ and any two quantum states $\ket{\phi}$ and $\ket{\psi}$ such that the trace distance\footnote{We consider the trace distance since it directly gives an upper bound on the total variation distance of the corresponding probability distributions. Using another closeness measure between quantum states (e.g., the fidelity) would lead to another (possibly weaker) upper bound on the total variation distance.}
between $\ket{\phi}$ and $\ket{\psi}$ is at most $\varepsilon$ (which implies that the total variation distance between the distributions $p_{\ket{\phi}}$ and $p_{\ket{\psi}}$ is at most $\varepsilon$), the distance between the output of $\Qq$ on input $\ket{\phi}$ and the output of $\Qq$ on input $\ket{\psi}$ is at most $\varepsilon$. The central motivation of the present work is to understand whether this property of quantum algorithms can be exploited to show a quantum advantage in machine learning. 

To our knowledge, this notion of approximate sampling-and-query access has never been considered in the literature on linear algebra via random sampling.\footnote{The slightly relation notion of oversampling access has been considered in \cite{Drineas+SICOMP06}, and in \cite{Chia+JACM22} in the context of dequantization. 
This notion is conceptually different from ours (our notion only guarantees that the distribution is close to the ideal distribution in total variation distance).}
 The main reason is that in those works, sampling access to the vectors and matrices is usually not given as input. Instead, the algorithm makes one pass over the whole data as a preliminary step in order to collect enough information to be able to implement efficient sampling access in the second phase of the algorithm. The time required by the preliminary step is typically linear in the input size (i.e., the number of nonzero entries of the vectors or matrices), which is enough to get an implementation of perfect sampling-and-query access to the input. In dequantization algorithms, on the other hand, the main goal is typically to construct algorithms with running time exponential faster than the input size, and thus making a pass on the whole data is not possible. In this setting, the implementation of sampling access becomes extremely challenging, and it makes sense to consider the scenario where only approximate sampling access can be implemented.

\subsection{Our results}\label{sec:intro3}
In this work we show that for essentially all known quantum algorithms that have been dequantized in the past few years, robust dequantization is also possible: there exist classical algorithms matching the performance of these quantum algorithms (up to polynomial factors)  even when having only approximate sampling-and-query access to the input. In particular, this gives evidence for the lack of exponential speedup for quantum machine learning algorithms in the QRAM model even when classical algorithms have only approximate sampling-and-query access to the input.\vspace{-2mm}

\paragraph{Approximate oversampling and closeness properties.} Our first contribution is to develop a framework for working with vectors and matrices given by approximate sampling-and-query access. Following the approach developed in \cite{Chia+JACM22} (which used the concept of oversampling-and-query access), we develop our framework by introducing the concept of \emph{approximate} over-sampling-and-query access. In this setting, instead of being able to generate samples from a distribution $\tilde p_u$ close to $p_u$, we can only generate samples from a distribution $\tilde q_u\colon\set{1}{n}\to[0,1]$ such that $\tilde q_u(i)\ge \frac{1}{\phi} \tilde p_u(i)$ holds for all $i\in\set{1}{n}$, for some $\phi\ge 1$ called the oversampling parameter. (Note that taking $\phi=1$ gives the usual approximate sampling-and-query access.) We prove the following two properties for approximate oversampling-and-query access.
\begin{itemize}
\item
Closeness property: given approximate oversampling-and-query access to a small number of vectors, we can implement efficiently approximate oversampling-and-query access to a linear combination of these vectors (with a slightly worse oversampling parameter).
\item
Conversion property:
given approximate oversampling-and-query access to a vector, we can implement efficiently  approximate sampling-and-query access with an overhead in the complexity corresponding to the oversampling parameter. 
\end{itemize}

These two properties generalize the two properties shown (for the setting of perfect sampling-and-query access) in \cite{Chia+JACM22}. 
The closeness property, which usually does not hold for approximate sampling-and-query access, is the reason why the notion of approximate oversampling-and-query access is very convenient to deal with. In addition, the conversion property shows that working with oversampling-and-query access is meaningful, since at the end of the computation it is possible to recover sampling-and-query access from the output (with an overhead in the complexity corresponding to the oversampling parameter).\vspace{-2mm}

\paragraph{Matrix multiplication.} Our second contribution is to develop a robust version of one of the central techniques in the area of randomized numerical linear algebra: matrix multiplication based on importance matrix sketches. Given two matrices $X\in\Comp^{m\times n}$ and $Y\in\Comp^{m\times n'}$, the approach by Drineas, Kannan and Mahoney \cite{Drineas+SICOMP06}, on which the dequantization framework by Chia, Gily\'en, Li, Lin, Tang and Wang \cite{Chia+JACM22} is also based, creates a matrix $\Sigma\in\Real^{r\times m}$ with only one nonzero entry per row, and estimates the matrix product $X^\dagger Y$ by the matrix product $X^\dagger \Sigma^\dagger \Sigma Y$. Note that $X^\dagger \Sigma^\dagger$ is an $n\times r$ matrix obtained by selecting $r$ columns of $X^\dagger$ (i.e., $r$ rows of $X$) and renormalizing them, while $ \Sigma Y$ is an $r\times n'$ matrix obtained by selecting $r$ rows of $Y$ and renormalizing them. The matrix product $X^\dagger \Sigma^\dagger \Sigma Y$ can thus be considered as a ``sketch'' of the matrix product $X^\dagger Y$. The quality of the approximation, i.e., the Frobenius norm
\[
\fnorm{X^\dagger \Sigma^\dagger \Sigma Y-X^\dagger Y}
\]
depends naturally on the choice of $\Sigma$. Drineas, Kannan and Mahoney \cite{Drineas+SICOMP06} analyzed the optimal choice for~$\Sigma$, and showed that this $\Sigma$ can be constructed using (length-squared) sampling-and-query access to the matrices $X$ and $Y$.\footnote{Sampling-and-query access to a matrix is defined using the concept of sampling-and-query access to a vector: we say that we have sampling-and-query access to a matrix $A\in\Comp^{m\times n}$ if we have sampling-and-query access to each row of $A$ and we have sampling-and-query access to the vector of row norms of the matrices $A$, i.e., the vector $(\norm{A(1,\cdot)},\ldots,\norm{A(m,\cdot)})$. This notion has been extended to oversampling-and-query access to a matrix in the literature \cite{Chia+JACM22,Drineas+SICOMP06}. In this work we further extend it to $\varepsilon$-approximate (over)sampling-and-query access to a matrix (see Definitions \ref{def:qs-access-mat} and \ref{def:qos-access-mat} in Section \ref{sub:def2}).} In the dequantization setting, this result has been used in particular to obtain a sketch of the singular value transformation of a matrix \cite[Theorem~5.1]{Chia+JACM22}.

In our setting, where we only have approximate (over)sampling-and-query access to the matrices $X$ and $Y$, several challenges appear when trying to adapt this technique. First, constructing efficiently unbiased estimators for matrix multiplication seems hopeless since the probability distribution from which we can sample is biased. Second, and more importantly, it is even unclear whether the approach from \cite{Chia+JACM22, Drineas+SICOMP06} can be adapted, for the same reasons as mentioned in Section \ref{sec:intro2} for the inner product (observe that inner product is a special case of matrix product). 

Our main technical contribution is to show that a robust version of matrix multiplication is possible, with a small bias depending on the parameter $\varepsilon$. Here are  informal versions of our two results related to matrix multiplication (we refer to Sections \ref{sub:MM1} and \ref{sub:MM} for the formal statements):

\begin{theorem}[Informal version]\label{th:MM1}
Given two matrices $X\in \Comp^{m\times n}$ and $Y\in\Comp^{m\times n'}$, assume that we have $\varepsilon$-approximate sampling-and-query access to $X$, for any $\varepsilon\in[0,1]$. For any $\xi>0$ and any $\delta\in(0,1]$, we can construct efficiently a matrix sketch $\Sigma\in \Real^{r\times m}$ with $r=\Theta\left(\frac{1}{\delta\xi^2}\right)$ such that 
\[
\Prob{
\fnorm{X^\dagger \Sigma^\dagger \Sigma Y-X^\dagger Y}\le \left(2\sqrt{2\varepsilon}+\xi\right)\fnorm{X}\fnorm{Y}}\ge 1-\delta.
\]
\end{theorem}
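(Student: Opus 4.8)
The plan is to follow the classical Drineas--Kannan--Mahoney importance-sampling strategy, but to carefully track how the bias introduced by sampling from $\tilde p_X$ (rather than the ideal $p_X$) propagates through the variance computation. Recall that the sketch $\Sigma\in\Real^{r\times m}$ is built by choosing $r$ i.i.d.\ row-indices $i_1,\dots,i_r$, where each $i_k$ is drawn from the row-norm distribution of $X$ — in the ideal case this is $p(i)=\norm{X(i,\cdot)}^2/\fnorm{X}^2$, but here we only have a distribution $\tilde p$ with $\stat{\tilde p - p}\le\varepsilon$. The $k$-th row of $\Sigma$ has its single nonzero entry in column $i_k$, equal to $\frac{1}{\sqrt{r\,\tilde p(i_k)}}$ (note: we renormalize by the distribution we actually sampled from, which is computable since we can query the row norms of $X$). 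Then $X^\dagger\Sigma^\dagger\Sigma Y=\frac{1}{r}\sum_{k=1}^r \frac{1}{\tilde p(i_k)}X(i_k,\cdot)^\dagger Y(i_k,\cdot)$ is an average of $r$ i.i.d.\ matrix-valued random variables $M_k:=\frac{1}{\tilde p(i_k)}X(i_k,\cdot)^\dagger Y(i_k,\cdot)$.

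First I would compute the mean: $\ex{M_k}=\sum_i \tilde p(i)\cdot\frac{1}{\tilde p(i)}X(i,\cdot)^\dagger Y(i,\cdot)=\sum_{i:\tilde p(i)>0}X(i,\cdot)^\dagger Y(i,\cdot)$, which equals $X^\dagger Y$ provided $\tilde p$ has full support on the nonzero rows of $X$ (if not, one adds a negligible uniform floor to $\tilde p$, costing only a polynomially small additive term absorbed into $\xi$). So the estimator is essentially \emph{unbiased} — the subtlety is entirely in the variance. Next I would bound $\ex{\fnorm{X^\dagger\Sigma^\dagger\Sigma Y-X^\dagger Y}^2}=\frac{1}{r}\var{M_1}\le\frac{1}{r}\ex{\fnorm{M_1}^2}=\frac{1}{r}\sum_i\frac{1}{\tilde p(i)}\norm{X(i,\cdot)}^2\norm{Y(i,\cdot)}^2$. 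Here is the crux: in the ideal case $\tilde p=p$ this sum telescopes to $\fnorm{X}^2\fnorm{Y}^2$, but with $\tilde p\ne p$ we get $\sum_i\frac{p(i)}{\tilde p(i)}\fnorm{X}^2\norm{Y(i,\cdot)}^2$. The ratio $p(i)/\tilde p(i)$ can be large on indices where $\tilde p$ is much smaller than $p$ — this is exactly the non-robustness phenomenon flagged in Section~\ref{sec:intro2}.

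The key move to control this is to split the index set according to whether $\tilde p(i)\ge \frac12 p(i)$ or not. On the "good" set the ratio $p(i)/\tilde p(i)\le 2$, contributing at most $2\fnorm{X}^2\fnorm{Y}^2$ to the bound. On the "bad" set $B=\{i:\tilde p(i)<\frac12 p(i)\}$ we have $p(i)-\tilde p(i)>\frac12 p(i)$, so $\sum_{i\in B}p(i)\le 2\sum_{i\in B}(p(i)-\tilde p(i))\le 2\smallnorm{p-\tilde p}_1=4\stat{p-\tilde p}\le 4\varepsilon$ — but this alone does not bound $\sum_{i\in B}\frac{1}{\tilde p(i)}\norm{X(i,\cdot)}^2\norm{Y(i,\cdot)}^2$ because $1/\tilde p(i)$ is unbounded there. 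The resolution is to not use this crude $\ell_2$-style variance bound on the bad set at all; instead, observe that the contribution of bad indices to the \emph{estimator itself} is small with high probability. Concretely, the probability that any sampled $i_k$ lands in $B$ is at most $r\sum_{i\in B}\tilde p(i)\le r\cdot 2\varepsilon$ — wait, that is not small enough for constant $\varepsilon$; so instead I would argue directly on the quantity of interest: write $X^\dagger\Sigma^\dagger\Sigma Y-X^\dagger Y$ and bound its Frobenius norm by separating the empirical average into the contribution from good-set samples and bad-set samples. For the good-set part, the standard second-moment-plus-Chebyshev argument (using the factor-$2$ bound) gives concentration to within $\xi\fnorm{X}\fnorm{Y}$ with $r=\Theta(1/(\delta\xi^2))$ samples. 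For the bad-set part, I would bound its expected Frobenius norm directly: $\ex{\fnorm{\text{bad part}}}\le\sum_{i\in B}\norm{X(i,\cdot)}\norm{Y(i,\cdot)}\le\sqrt{\sum_{i\in B}p(i)}\cdot\fnorm{X}\fnorm{Y}\le 2\sqrt{\varepsilon}\fnorm{X}\fnorm{Y}$ by Cauchy--Schwarz (since $\sum_{i\in B}p(i)\le 4\varepsilon$ and $\sum_i\norm{Y(i,\cdot)}^2=\fnorm{Y}^2$), then apply Markov to make this at most $2\sqrt{2\varepsilon}\fnorm{X}\fnorm{Y}$ with probability $\ge 1-\delta/2$; combining with the good-set concentration via a union bound yields the claimed $(2\sqrt{2\varepsilon}+\xi)\fnorm{X}\fnorm{Y}$ with probability $\ge 1-\delta$, after rescaling $\delta$ by a constant.

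The main obstacle, then, is precisely this good/bad split and the realization that on the bad set one must abandon the variance-based analysis and instead bound the bias contribution in expectation via Cauchy--Schwarz — getting the exact constant $2\sqrt{2\varepsilon}$ rather than something weaker requires being a little careful about whether one measures the total variation distance of $\tilde p$ to $p$ directly, or (as is more natural here, since $\varepsilon$-approximate sampling-and-query access is defined via the normalized distributions) of $\tilde p_u$ to $p_u$ for the relevant vectors, and tracking the factor of $2$ between $\ell_1$ distance and total variation distance consistently. A secondary technical point is that, since we want $\Sigma$ to be \emph{constructible} efficiently, the renormalization in $\Sigma$ must use $\tilde p$, which we can evaluate pointwise (we can query row norms of $X$, but we only know the \emph{sampling} distribution is $\tilde p$, not its explicit values) — this forces a mild reformulation where $\Sigma$ uses the \emph{queryable} ideal weights $p(i)$ for renormalization while sampling from $\tilde p$, in which case the mean becomes $\sum_i\frac{\tilde p(i)}{p(i)}X(i,\cdot)^\dagger Y(i,\cdot)$ and the bias must be controlled the same way; I expect the cleanest route is in fact to keep the $\tilde p$-renormalization and handle the "$\tilde p$ not explicitly known" issue by noting that the analysis above only uses $\sum$-bounds that follow from $\stat{\tilde p-p}\le\varepsilon$, never the individual values.
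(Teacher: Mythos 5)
Your overall instinct (split the indices into a ``good'' part handled by a second-moment/Chebyshev argument and a ``bad'' part handled via Cauchy--Schwarz and the total variation bound) is close in spirit to the paper's two ``complementary'' bias bounds, but as written the argument has a genuine gap at the step where you convert the bad-set contribution into a high-probability statement. Your bad part $Z_B$ (the samples landing in $B=\{i:\tilde p(i)<\tfrac12 p(i)\}$) has expectation of order $\sqrt{\varepsilon}\,\fnorm{X}\fnorm{Y}$, but this expectation does not shrink as $r$ grows, and its variance is uncontrolled (as you note, $1/\tilde p(i)$ is unbounded on $B$). Markov's inequality with mean $\approx 2\sqrt{\varepsilon}\,\fnorm{X}\fnorm{Y}$ and threshold $2\sqrt{2\varepsilon}\,\fnorm{X}\fnorm{Y}$ gives failure probability about $1/\sqrt{2}$, not $\delta/2$; to force failure probability $\delta/2$ you must raise the threshold to $\Theta(\sqrt{\varepsilon}/\delta)\fnorm{X}\fnorm{Y}$, which destroys the claimed $(2\sqrt{2\varepsilon}+\xi)$ bound. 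This is not a fixable constant: because the heavy-tailed part is left inside the estimator, no purely analytic split can turn the first-moment bound into the stated guarantee. The paper's key move, which your proposal is missing, is to build the truncation into the estimator itself: it defines $\Gamma=\{j:\norm{Y(j,\cdot)}\ge \frac{\fnorm{Y}}{\sqrt{2\varepsilon}\fnorm{X}}\norm{X(j,\cdot)}\}$ using only queryable row norms, zeroes out the corresponding rows of the sketch (the matrix $\bar\Sigma$), and then the dropped indices contribute a \emph{deterministic} bias at most $\sqrt{2\varepsilon}\fnorm{X}\fnorm{Y}$ (Cauchy--Schwarz), the sampling-distribution mismatch contributes another $\sqrt{2\varepsilon}\fnorm{X}\fnorm{Y}$ (because the ratio $\norm{Y(j,\cdot)}/\norm{X(j,\cdot)}$ is bounded off $\Gamma$), and the variance of what remains is at most $2\fnorm{X}^2\fnorm{Y}^2/r$, so Chebyshev alone gives the theorem with $r=\Theta(1/(\delta\xi^2))$.

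A second, related problem is the renormalization. The entries of $\Sigma$ must be computable from the access model, so they have to use the queryable weights $p_{\row{X}}(i)=\norm{X(i,\cdot)}^2/\fnorm{X}^2$ (this is what Definition \ref{def:S} does), not $\tilde p(i)$, which you can sample from but cannot evaluate; your closing suggestion that one can ``keep the $\tilde p$-renormalization'' because the analysis only uses sum-bounds conflates the analysis with the construction -- building $\Sigma$ literally requires the numbers $1/\sqrt{r\tilde p(i_k)}$. Once you switch to the implementable $p$-renormalization, the bias and second-moment terms involve the ratio $\norm{Y(j,\cdot)}/\norm{X(j,\cdot)}$ rather than $p(i)/\tilde p(i)$, your bad-set computation (which relied on $\tilde p(i_k)$ cancelling against $1/\tilde p(i_k)$) no longer goes through as written, and one is led back to exactly the paper's device: a threshold on $\norm{Y(j,\cdot)}/\norm{X(j,\cdot)}$ chosen as $\fnorm{Y}/(\sqrt{2\varepsilon}\fnorm{X})$, enforced in the estimator rather than in the analysis.
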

\begin{theorem}[Informal version]\label{th:MM}
Given two matrices $X\in \Comp^{m\times n}$ and $Y\in\Comp^{m\times n'}$, assume that we have $\varepsilon$-approximate sampling-and-query access to both $X$ and $Y$, for any $\varepsilon\in[0,1]$. For any $\xi>0$ and any $\delta\in(0,1]$, we can construct efficiently a matrix sketch $\Sigma\in \Real^{r\times m}$ with $r=\Theta\left(\frac{\log(1/\delta)}{\xi^2}\right)$ such that 
\[
\Prob{
\fnorm{X^\dagger \Sigma^\dagger \Sigma Y-X^\dagger Y}\le \left(2\varepsilon+\xi\right)\fnorm{X}\fnorm{Y}}\ge 1-\delta.
\]
\end{theorem}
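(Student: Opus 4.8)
The plan is to build the importance-sampling sketch of Drineas, Kannan and Mahoney \cite{Drineas+SICOMP06} (the one underlying the framework of \cite{Chia+JACM22}) out of the \emph{ideal} length-squared row distributions, while feeding it samples drawn from the available approximate distributions. Write $p_X(k)=\norm{X(k,\cdot)}^2/\fnorm{X}^2$ and $p_Y(k)=\norm{Y(k,\cdot)}^2/\fnorm{Y}^2$ for $k\in\set{1}{m}$, and take as sampling distribution the symmetric mixture $q(k)=\tfrac12\bigl(p_X(k)+p_Y(k)\bigr)$. The one elementary fact used throughout is that, by AM--GM, $q(k)\ge\sqrt{p_X(k)p_Y(k)}$, i.e.
\[
\frac{\norm{X(k,\cdot)}\,\norm{Y(k,\cdot)}}{q(k)}\ \le\ \fnorm{X}\,\fnorm{Y}\qquad\text{whenever }q(k)>0;
\]
the rows with $q(k)=0$ are identically zero in both $X$ and $Y$, are recognized by a single query, and are ignored. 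From $\varepsilon$-approximate sampling-and-query access to $X$ (resp.\ $Y$) we can sample from some distribution $\tilde p_X$ (resp.\ $\tilde p_Y$) with $\stat{\tilde p_X-p_X}\le\varepsilon$ (resp.\ $\stat{\tilde p_Y-p_Y}\le\varepsilon$); flipping a fair coin and sampling from the appropriate one produces a sample from $\tilde q:=\tfrac12(\tilde p_X+\tilde p_Y)$, and by joint convexity of total variation distance $\stat{\tilde q-q}\le\varepsilon$. Note that the exact value $q(k)$ remains computable, since $\norm{X(k,\cdot)}$, $\norm{Y(k,\cdot)}$, $\fnorm{X}$ and $\fnorm{Y}$ are all obtained through the (exact) query part of the access.

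The sketch is then the following: sample $s_1,\dots,s_r$ i.i.d.\ from $\tilde q$ and let $\Sigma\in\Real^{r\times m}$ have, in its $t$-th row, the single nonzero entry $1/\sqrt{r\,q(s_t)}$ in column $s_t$ (the $t$-th row is set to zero if $q(s_t)=0$). Building $\Sigma$ is efficient: it costs $r$ samples and $O(r)$ queries. Writing $X(k,\cdot)^\dagger$ for the conjugate of the $k$-th row of $X$ viewed as a column vector, we obtain $X^\dagger\Sigma^\dagger\Sigma Y=\sum_{t=1}^r\frac{1}{r\,q(s_t)}X(s_t,\cdot)^\dagger Y(s_t,\cdot)$. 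The point of sampling $s_t$ from $\tilde q$ but using the \emph{ideal} weight $1/q(s_t)$ is that the estimator is then exactly computable and has the tractable (biased) expectation
\[
\ex{X^\dagger\Sigma^\dagger\Sigma Y}=\sum_{k\,:\,q(k)>0}\frac{\tilde q(k)}{q(k)}\,X(k,\cdot)^\dagger Y(k,\cdot)\ =:\ \hat M .
\]
Since $\fnorm{X(k,\cdot)^\dagger Y(k,\cdot)}=\norm{X(k,\cdot)}\norm{Y(k,\cdot)}$, the triangle inequality together with the displayed AM--GM bound gives the bias estimate
\[
\fnorm{\hat M-X^\dagger Y}\ \le\ \sum_{k\,:\,q(k)>0}\abs{\tilde q(k)-q(k)}\,\frac{\norm{X(k,\cdot)}\norm{Y(k,\cdot)}}{q(k)}\ \le\ 2\,\stat{\tilde q-q}\,\fnorm{X}\fnorm{Y}\ \le\ 2\varepsilon\,\fnorm{X}\fnorm{Y}.
\]

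It remains to establish concentration of $X^\dagger\Sigma^\dagger\Sigma Y$ around $\hat M$ to within $\xi\fnorm{X}\fnorm{Y}$. Here the symmetric choice of $q$ helps a second time: by the same AM--GM bound, each summand satisfies the \emph{deterministic} inequality $\fnorm{\tfrac{1}{r\,q(s_t)}X(s_t,\cdot)^\dagger Y(s_t,\cdot)}\le\tfrac1r\fnorm{X}\fnorm{Y}$, so $X^\dagger\Sigma^\dagger\Sigma Y-\hat M$ is a sum of $r$ i.i.d.\ mean-zero random matrices, each of Frobenius norm at most $\tfrac2r\fnorm{X}\fnorm{Y}$, and a short computation (the AM--GM bound squared) shows that the second moments of these centered summands sum to at most $\tfrac1r\fnorm{X}^2\fnorm{Y}^2$. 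Identifying an $n\times n'$ matrix with Frobenius norm with a vector in $\Comp^{nn'}$ with the Euclidean norm, a standard Bernstein-type tail bound for sums of independent bounded random vectors then yields $\Prob{\fnorm{X^\dagger\Sigma^\dagger\Sigma Y-\hat M}\le\xi\fnorm{X}\fnorm{Y}}\ge1-\delta$ as soon as $r=\Theta\bigl(\log(1/\delta)/\xi^2\bigr)$. Combining this with the bias estimate through one more application of the triangle inequality gives $\fnorm{X^\dagger\Sigma^\dagger\Sigma Y-X^\dagger Y}\le(2\varepsilon+\xi)\fnorm{X}\fnorm{Y}$ with probability at least $1-\delta$, which is the claim.

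The step I expect to be the real obstacle is the control of the bias $\hat M-X^\dagger Y$, which simply does not arise in the perfect-sampling analyses of \cite{Chia+JACM22,Drineas+SICOMP06} (there the estimator is unbiased and only the variance has to be handled). Two ingredients are needed to tame it: (i) using the computable \emph{ideal} weights $1/\sqrt{r\,q(s_t)}$ instead of (uncomputable) weights matching the distribution actually sampled from, so that the bias has the clean closed form above and is proportional to $\stat{\tilde q-q}$; and (ii) the symmetric mixture $q$, for which the ratio $\norm{X(k,\cdot)}\norm{Y(k,\cdot)}/q(k)$ is uniformly at most $\fnorm{X}\fnorm{Y}$, which is what makes that proportionality constant dimension-free and also drives the deterministic per-term bound used in the concentration step. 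Ingredient (ii) requires sampling access to \emph{both} $X$ and $Y$; with access to only one of the two matrices --- the setting of Theorem~\ref{th:MM1} --- that ratio is no longer bounded, which is precisely why the bias there degrades to $2\sqrt{2\varepsilon}$ and one is pushed to a Chebyshev-type argument with the larger sample size $\Theta(1/(\delta\xi^2))$.
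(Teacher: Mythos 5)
Your proposal is correct, and it follows essentially the same route as the paper: the same joint mixture distribution $q=\tfrac12(p_X+p_Y)$ sampled via $\tilde q=\tfrac12(\tilde p_X+\tilde p_Y)$, the same device of keeping the \emph{ideal} weights $1/\sqrt{r\,q(s_t)}$ while sampling from $\tilde q$, the same AM--GM bound $q(k)\ge\sqrt{p_X(k)p_Y(k)}$, and the identical bias computation giving $2\varepsilon\fnorm{X}\fnorm{Y}$ (this is exactly the paper's notion of an approximate joint importance matrix sketch, specialized to $\phi=\phi'=1$). The only divergence is the concentration step: the paper bounds $\ex{\fnorm{X^\dagger\Sigma^\dagger\Sigma Y-\ex{X^\dagger\Sigma^\dagger\Sigma Y}}}$ by $\fnorm{X}\fnorm{Y}/\sqrt{r}$ via Jensen and then applies McDiarmid's bounded difference inequality (the bounded-difference constant $2\sqrt{\phi\phi'}\fnorm{X}\fnorm{Y}/r$ coming from the same AM--GM bound you use), whereas you invoke a dimension-free Bernstein/Pinelis-type inequality for sums of independent bounded Hilbert-space-valued random variables, after checking the per-term bound $\tfrac2r\fnorm{X}\fnorm{Y}$ and total second moment $\tfrac1r\fnorm{X}^2\fnorm{Y}^2$. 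These two tools are interchangeable here—both yield $r=\Theta(\log(1/\delta)/\xi^2)$ from the same two quantitative inputs—so the substance of the argument, in particular the handling of the bias that is the genuinely new ingredient relative to the perfect-sampling analyses, matches the paper's proof.
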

Theorem \ref{th:MM} gives a smaller bias and an exponentially better dependence on $\delta$ than Theorem~\ref{th:MM1}. On the other hand, it requires $\varepsilon$-approximate sampling-and-query access to both $X$ and~$Y$. We note that the dependence in $\delta$ in Theorem \ref{th:MM1} is likely to be inevitable since even with perfect sampling-and-quantum access, it is not known how to achieve a better dependence on~$\delta$ when given access to only one matrix \cite{Drineas+SICOMP06}. 

We then use these techniques to show how to compute a sketch of the singular value transformation of a matrix, similarly to what \cite{Chia+JACM22} has done in the perfect sampling-and-query access setting. Here is the informal version of our result (see Section \ref{sub:SVT} for the formal version):

\begin{theorem}[Informal version]\label{th:SVT}
Let $X\in\Comp^{m\times n}$ be a matrix and $f\colon\Real\to \Comp$ be a smooth function. Assume that we have $\varepsilon$-approximate sampling-and-query access to $X$. For any parameters $\delta\in(0,1]$ and $\gamma>0$, we can efficiently construct two matrices $\Sigma\in \Comp^{r\times m}$ and $C=\Comp^{r\times c}$ with $r,c=\poly(\log(1/\delta),1/\gamma,\fnorm{X})$ such that if $\varepsilon$ is small enough then the inequality
\[
\fnorm{X^\dagger \Sigma^\dagger \bar{f}(CC^\dagger)\Sigma X -f(X^\dagger X)}\le \gamma
\]
holds with probability at least $1-\delta$, where $\bar f(x):=f(x)/x$.
\end{theorem}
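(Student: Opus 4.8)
The plan is to follow the construction underlying \cite[Theorem~5.1]{Chia+JACM22}, replacing every use of perfect sampling-and-query access and unbiased matrix multiplication by the robust primitives developed in this paper, namely the approximate oversampling-and-query framework (the closeness and conversion properties) together with Theorems~\ref{th:MM1} and~\ref{th:MM}. The starting point is the identity $f(X^\dagger X)=X^\dagger\bar f(XX^\dagger)X$, which follows from the singular value decomposition of $X$ and the definition $\bar f(x)=f(x)/x$. Since the eigenvalues of $X^\dagger X$ and $XX^\dagger$ lie in $[0,\norm{X}^2]\subseteq[0,\fnorm{X}^2]$, the first step is to approximate $\bar f$ uniformly on an interval $[0,K]$ with $K=\Theta(\fnorm{X}^2)$ — slightly larger than $\fnorm{X}^2$, to leave room for the spectrum of the sketched matrices — by a polynomial $q(y)=\sum_{k=0}^{d}c_k y^k$ whose degree $d$ and coefficient norm $\norm{q}_1=\sum_k\abs{c_k}$ are both $\poly(1/\gamma,\fnorm{X})$; the required degree and approximation quality are supplied by standard results on polynomial approximation of smooth functions. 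After this reduction it suffices to sketch $X^\dagger\bar f(XX^\dagger)X$ up to an error governed by $q$.

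Second, I would construct the two sketches. Applying Theorem~\ref{th:MM} with both input matrices equal to $X$ (we have $\varepsilon$-approximate sampling-and-query access to $X$) produces $\Sigma\in\Real^{r\times m}$ with $r=\Theta(\log(1/\delta)/\xi^2)$ such that $\fnorm{X^\dagger\Sigma^\dagger\Sigma X-X^\dagger X}\le(2\varepsilon+\xi)\fnorm{X}^2$ with probability at least $1-\delta/3$. Writing $M:=\Sigma X$, we cannot form the $r\times r$ matrix $MM^\dagger$ directly, so we approximate it by a matrix product $CC^\dagger$ with $C:=MR^\dagger$, where the column sketch $R\in\Real^{c\times n}$ is built from approximate oversampling-and-query access to $(\Sigma X)^\dagger$ (equivalently, to the columns of $\Sigma X$). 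This access is obtained from the $\varepsilon$-approximate sampling-and-query access to $X$ via the closeness property: each column of $\Sigma X$ is a rescaling of a column of $X$ restricted to the $r$ known sampled rows, and sampling a column index reduces to sampling from a fixed convex combination of the row distributions $p_{X(i_t,\cdot)}$, which the approximate oversampling framework supports with an oversampling parameter depending only on $r$. Applying Theorem~\ref{th:MM} (or~\ref{th:MM1}) again yields $\fnorm{CC^\dagger-MM^\dagger}\le(2\varepsilon'+\xi')\fnorm{M}^2$ with probability at least $1-\delta/3$, where $\varepsilon'=O(\varepsilon)$ is the quality of the derived access; and $\fnorm{M}^2=\fnorm{\Sigma X}^2=O(\fnorm{X}^2)$ with probability at least $1-\delta/3$ (its expectation equals $\fnorm{X}^2$). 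We condition on all three events.

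Third is the error propagation. Expanding the polynomial gives $X^\dagger\Sigma^\dagger q(CC^\dagger)\Sigma X=\sum_k c_k\,M^\dagger(CC^\dagger)^k M$ and $X^\dagger q(XX^\dagger)X=\sum_k c_k\,(X^\dagger X)^{k+1}$, and inserting the intermediate term $M^\dagger(MM^\dagger)^k M=(M^\dagger M)^{k+1}$, each summand splits as
\[
M^\dagger\bigl[(CC^\dagger)^k-(MM^\dagger)^k\bigr]M \;+\; \bigl[(M^\dagger M)^{k+1}-(X^\dagger X)^{k+1}\bigr].
\]
Using $\fnorm{AYB}\le\norm{A}\,\fnorm{Y}\,\norm{B}$, the telescoping identity $P^{k+1}-Q^{k+1}=\sum_{j=0}^{k}P^j(P-Q)Q^{k-j}$, and the operator-norm bounds $\norm{M^\dagger M},\norm{X^\dagger X},\norm{MM^\dagger},\norm{CC^\dagger}=O(\fnorm{X}^2)$ (the last two from the events above), each summand is $O\bigl(k\,\fnorm{X}^{2k}(\varepsilon+\xi+\varepsilon'+\xi')\bigr)$, so summing with the coefficients gives $O\bigl(\norm{q}_1\,d\,\fnorm{X}^{2d}(\varepsilon+\xi+\varepsilon'+\xi')\bigr)$. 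The two remaining errors, $\fnorm{X^\dagger(q-\bar f)(XX^\dagger)X}$ and $\fnorm{M^\dagger(q-\bar f)(CC^\dagger)M}\le\norm{M}^2\sqrt{r}\,\norm{q-\bar f}_{\infty,[0,K]}$, are controlled by the polynomial approximation, using $\spec(CC^\dagger)\subseteq[0,K]$ — this is why $q$ was built on the enlarged interval. Choosing $\xi,\xi'=\gamma/\poly(\norm{q}_1,d,\fnorm{X})$ (hence $r,c=\poly(\log(1/\delta),1/\gamma,\fnorm{X})$) and the degree $d$ so that the approximation error is $O(\gamma)$, every term is $O(\gamma)$ except the residual bias $O(\norm{q}_1\,d\,\fnorm{X}^{2d}\,\varepsilon)$, which is at most $\gamma$ precisely when $\varepsilon$ is below a threshold depending only on $\gamma$ and $\fnorm{X}$ — this is the meaning of ``if $\varepsilon$ is small enough''.

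The main obstacle is exactly this last point. The matrix-multiplication bias is a genuine additive error proportional to $\varepsilon$ that no finer sketch can remove, and it is amplified by a factor growing like $\fnorm{X}^{2d}$ through the degree-$d$ polynomial. Keeping it tolerable requires simultaneously (i) a low-degree polynomial approximation of $\bar f$ with small coefficient norm — which forces a regularity hypothesis on $f$, in particular on the behavior of $\bar f(x)=f(x)/x$ near $x=0$, where $\spec(CC^\dagger)$ may contain small nonzero eigenvalues absent from the sandwiched expression $X^\dagger\bar f(XX^\dagger)X$; the formal statement in Section~\ref{sub:SVT} will make this dependence explicit — and (ii) controlling the quality $\varepsilon'$ and the Frobenius norm of the intermediate matrix $\Sigma X$ so that the second level of sketching does not degrade the access parameter, which is precisely what the closeness and conversion properties of approximate oversampling-and-query access are designed to guarantee.
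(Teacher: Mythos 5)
Your overall architecture matches the paper's: a row sketch of $X$ built from the approximate access, then a column sketch of $\Sigma X$ built from derived approximate oversampling-and-query access to $(\Sigma X)^\dagger$ (this is exactly the role of Lemmas \ref{lemma:createsk} and \ref{lemma:sample2}), both matrix products controlled by Theorem \ref{th:MM}, plus conditioning on $\fnorm{\Sigma X}^2=\Theta(\fnorm{X}^2)$ via Lemma \ref{lemma:norm}. The gap is in your error-propagation step. Replacing $\bar f$ by a polynomial $q(y)=\sum_k c_k y^k$ and telescoping in the monomial basis makes the propagated error scale like $\sum_k \abs{c_k}\,k\,\fnorm{X}^{2k}\approx \norm{q}_1\, d\,\fnorm{X}^{2d}$. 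For a function that is merely smooth in the sense the theorem assumes (Lipschitz $f$ and $\bar f$, which is also all that holds in the intended QSVT application), the degree $d$ of a uniform approximant on $[0,K]$ must grow like $\poly(1/\gamma,\fnorm{X})$ (and, as you wrote it, also like $\sqrt{r}$ because of the $\sqrt{r}\,\norm{q-\bar f}_{\infty}$ term, which creates a circular dependence among $d$, $\norm{q}_1$, $\xi$ and $r$), while the monomial coefficient norm $\norm{q}_1$ of a bounded polynomial on an interval is generically exponential in its degree. Hence $\xi$ must be taken exponentially small and $r,c=\Theta(\log(1/\delta)/\xi^2)$ exponentially large, contradicting the claimed $r,c=\poly(\log(1/\delta),1/\gamma,\fnorm{X})$; likewise your admissible $\varepsilon$ becomes exponentially small, whereas the formal version of the theorem (Condition (\ref{cond1})) only needs $\varepsilon$ polynomially small in the parameters. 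So the proposed route does not establish the stated bounds.

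The paper never expands $f$. After obtaining $\fnorm{R^\dagger R-A^\dagger A}$ and $\fnorm{CC^\dagger-RR^\dagger}$ small from Theorem \ref{th:MM} (with $R=SA$, $C=SAT^\dagger$), it uses the Hoffman--Wielandt theorem (Lemma \ref{lemma:HW}) to place $\spec(R^\dagger R)$ and $\spec(CC^\dagger)$ inside $\spec_{\chi}(A^\dagger A)$, then applies the Frobenius-norm Lipschitz perturbation bound for matrix functions (Lemma \ref{lemma:Lipschitz}) separately to $f$ and to $\bar f$, giving $\fnorm{f(R^\dagger R)-f(A^\dagger A)}\le L\,\fnorm{R^\dagger R-A^\dagger A}$ and $\fnorm{\bar f(CC^\dagger)-\bar f(RR^\dagger)}\le \bar L\,\fnorm{CC^\dagger-RR^\dagger}$, and finally combines these through the exact identity $R^\dagger\bar f(RR^\dagger)R=f(R^\dagger R)$ together with $\fnorm{R^\dagger D R}\le\norm{R}^2\fnorm{D}$. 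Every error stays linear in $L$ and $\bar L$, with no coefficient-norm or $\fnorm{X}^{2d}$ amplification, and this is precisely what yields polynomial sketch sizes and a polynomially small threshold on $\varepsilon$. Your approach could only be salvaged under the strictly stronger hypothesis that $\bar f$ admits a low-degree approximant with $\sum_k\abs{c_k}\fnorm{X}^{2k}$ polynomially bounded, which is not implied by the $(L,\bar L)$-smoothness actually used.
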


Theorem \ref{th:SVT} shows that in order to get a good ``sketch" of a singular value transformation $f(X^\dagger X)$ of the matrix $X^\dagger X$ it is enough to compute $\bar f(C C^\dagger)$ for the much smaller matrix $CC^\dagger\in\Comp^{r\times r}$. This technique is the basis for most of our advanced applications.\vspace{-2mm}

\paragraph{Applications.}

By applying all the above techniques, we are able to robustly dequantize (i.e., dequantize even when the input is given via $\varepsilon$-approximate sampling-and-query access for $\varepsilon>0$) most\footnote{Due to space constraints, in this paper we only present our robust dequantization results for a few important quantum algorithms. Our framework does apply to more quantum algorithms. Actually, we have not been able to find an example of quantum algorithm that can be dequantized but not robustly dequantized.} of the quantum algorithms that have been dequantized in the literature.  Below we briefly discuss all our applications and refer to Section \ref{sec:app} for details. Simplified statements of the complexities of our results and their comparison with prior works can be found in Table~\ref{table:results}.\footnote{In this paper, the notation $\tilde O(\cdot)$ removes factors polylogarithmic in the dimension of the matrices and vectors. }

\begin{table}
\begin{center}
\begin{tabular}{|l|l|l|l|l|}
  \cline{2-4}
  \multicolumn{1}{c|}{\multirow{2}{*}{}}&
  \multicolumn{1}{c|}{\multirow{2}{*}{Quantum}}&
  \multicolumn{1}{c|}{\multirow{2}{85pt}{Classical  (perfect sampling)}} &
  \multicolumn{1}{c|}{\multirow{2}{112pt}{Classical ($\varepsilon$-approxi- mate sampling)}}\\
  \multicolumn{1}{c|}{}&&&\\ \hline
Inner product & $\tilde O(1/\xi^2)$  & $\tilde O(1/\xi^2)$ \cite{TangSTOC19}  & $\tilde O(1/\xi^2)$ \:\:\:\:\:Props.~\ref{th:IP},\:\ref{prop:IP} \\\hline
 \multirow{2}{50pt}{Low-rank QSVT} & 
 \multirow{2}{*}{$\tilde O\left(\frac{d}{\smallnorm{\myp(\sqrt{A^\dagger A})b}}\right)$ \cite{Gilyen+STOC19}} &
 \multirow{2}{*}{$\tilde O\left(\frac{d^{16}}{\smallnorm{\myp(\sqrt{A^\dagger A})b}^6}\right)$ \cite{Chia+JACM22}} & 
 \multirow{2}{*}{$\tilde O\left(\frac{d^{16}}{\smallnorm{\myp(\sqrt{A^\dagger A})b}^6}\right)$ \:Th.~\ref{th:deq-evenSVT}} \\
&&&\\\hline
  \multirow{2}{50pt}{Sparse QSVT} & 
\multirow{2}{*}{$\poly(s,d)$\:\:\:\cite{Gilyen+STOC19}}& 
\multirow{2}{*}{$\tilde O(s^{d})$ \:\:\:\:\:\:\:\cite{GG22}} & 
\multirow{2}{*}{$\tilde O(s^{d})$ \:\:\:\:\:\:\:\:\:\:\:\:\:\:\:Th.~\ref{th:sparseQSVT}} \\
&&&\\\hline
\multirow{2}{90pt}{Sparse matrix inversion} & \multirow{2}{*}{$\poly(s,\kappa,1/\eta)$\:\cite{Gilyen+STOC19,HHL09}} & \multirow{2}{*}{$\tilde O(s^{\kappa\log(\kappa/\eta)})$ \:\:\:\:\cite{GG22}} & \multirow{2}{*}{$\tilde O(s^{\kappa\log(\kappa/\eta)})$\: \:Th.~\ref{th:sparseQSVT}}\\ 
&&&\\\hline
\multirow{2}{50pt}{Supervised clustering} & 
\multirow{2}{*}{$\tilde O(1/\eta)$\:\:\:\:\:\:\,\cite{Lloyd+13}} & 
\multirow{2}{*}{$\tilde O(1/\eta^2)$ \,\,\cite{Chia+JACM22,Tang21}}&  
\multirow{2}{*}{$\tilde O(1/\eta^2)$ \:\:\:\:\:\:\:\:\,\:Th.~\ref{th:clus}} \\
&&&\\\hline
   \multirow{2}{4pt}{Recommendation systems} & \multirow{2}{*}{$\tilde O(\sqrt{K})$ \:\:\:\:\:\:\cite{Kerenidis+ITCS17}} & $\bigstrut\tilde O(K^{12})$ \,\,\,\:\:\cite{TangSTOC19}& \multirow{2}{*}{$\tilde O(K^8)$ \:\:\:\:\:\:\:\:\:\:\:\:\:\,\:Th.~\ref{th:Recom}} \\
   &&$\tilde O(K^8)$ \:\:\:\:\:\:\cite{Chia+JACM22}&\\ \hline
 \multirow{2}{90pt}{Low-rank matrix inversion} & \multirow{2}{*}{$\tilde O(\sqrt{K})$ \:\:\:\:\:\:\cite{Gilyen+STOC19,Gilyen+18}} & \multirow{2}{*}{$\tilde O(K^{14})$ \:\:\:\:\cite{Chia+JACM22}} & \multirow{2}{*}{$\tilde O(K^{14})$\:\:\:\:\:\:\:\:\:\:\:\: \:Th.~\ref{th:inv}}\\ 
&&&\\\hline
\end{tabular}
\end{center}\caption{Overview of our applications and comparison with prior works (the recent works \cite{Bakshi+23,Gilyen+20, Shao+22} are omitted from this table --- we briefly discuss them at the very end of Section~\ref{sec:intro3}). The complexities reported in the last column of this table assume that the parameter $\varepsilon$ is small enough (the precise requirements on $\varepsilon$ are given in the formal statements of the results).}\label{table:results}\vspace{5mm}
\end{table}

\begin{itemize}
\item
{\bf Inner product (Section~\ref{sec:warmingup} and Section~\ref{sub:IP}).} In the quantum setting, the SWAP test can be used to estimate the absolute value of the inner product between two pure states with additive error $\xi$ using $O(1/\xi^2)$ copies of the states.\footnote{If  quantum circuits constructing the two quantum states are available, then the complexity can be further reduced to $O(1/\xi)$ using amplitude estimation.} The dequantization technique from \cite{Chia+JACM22,TangSTOC19} estimates the inner product of two vectors $u$ and $v$ with error $\xi\norm{u}\norm{v}$ in time $O(1/\xi^2)$ given sampling-and-query access to one of the vectors and query access to the other, which matches the complexity of the SWAP test for unit-norm vectors. By applying Theorem \ref{th:MM1}, we immediately obtain a classical algorithm that estimates the inner product with error $(2\sqrt{2\varepsilon}+\xi)\norm{u}\norm{v}$ in time $O(1/\xi^2)$ when given $\varepsilon$-approximate sampling-and-query access to one of the vectors and query access to the other (Proposition \ref{th:IP} in Section~\ref{sec:warmingup} and Proposition \ref{prop:IP} in Section~\ref{sub:IP}). The term $2\sqrt{2\varepsilon}$ here is a bias (coming from the bias in Theorem \ref{th:MM1}) due to the fact that we are sampling from a biased distribution.\footnote{Such a bias is inevitable when estimating the inner product using samples from a biased distribution. Note that a similar bias would appear for quantum algorithms based on the SWAP test in the analogue setting where the input states are at (trace) distance $\varepsilon$ from the ideal states.} This result shows that when $\varepsilon$ is small enough, i.e., when we are sampling from a distribution close enough to the ideal distribution in the total variation distance, we can obtain a performance close to the performance of algorithms having access to the ideal distribution.

\item
{\bf ``Low-rank'' Quantum Singular Value transformation (Section~\ref{sub:app2}).}
Chia, Gily\'en, Li, Lin, Tang and Wang \cite{Chia+JACM22} have shown how to dequantize in the low-rank regime the Quantum Singular Value Transformation (QSVT), which is a recent powerful paradigm developed by Gily\'en, Su, Low and Wiebe \cite{Gilyen+STOC19} that can be used to recover most known quantum algorithms and construct new ones (we refer to \cite{Martyn+21} for a good survey). Given an even polynomial $\myp$ of degree $d$, a matrix $A\in\Comp^{m\times n}$ and a vector~$b$ with $\norm{b}=1$, the QSVT is a technique to construct a good approximation of a quantum state proportional to the vector $\myp(\sqrt{A^\dagger A})b$. In the low-rank regime, which corresponds to the assumption $\fnorm{A}=1$ (or more generally $\fnorm{A}\le 1$), the complexity of the approach is $\tilde O(d/\smallnorm{\myp(\sqrt{A^\dagger A})b})$. Chia, Gily\'en, Li, Lin, Tang and Wang~\cite{Chia+JACM22} showed how to dequantize this technique by proving that given sampling-and-query access to $A$ and $b$, it is possible to implement with high probability sampling-and-query access to a vector close to $\myp(\sqrt{A^\dagger A})b$ in time $\tilde O(d^{16}/\smallnorm{\myp(\sqrt{A^\dagger A})b}^6)$. By adapting the methodology from \cite{Chia+JACM22}, we show how to achieve the same running time when given only $\varepsilon$-approximate sampling-and-query access to $A$ and $b$ for $\varepsilon>0$, when $\varepsilon$ is small enough.

\item
{\bf Sparse Quantum Singular Value Transformation (Section~\ref{sub:app8}).}
Gharibian and Le Gall \cite{GG22} have considered the problem of estimating the value $v^\dagger\myp(\sqrt{A^\dagger A})u$ for a matrix $A\in\Comp^{m\times n}$ such that $\norm{A}= 1$ (or more generally $\norm{A}\le 1$) and two vectors $u,v\in\Comp^{n}$. Note that the condition $\norm{A}\le 1$ is significantly weaker than the condition $\fnorm{A}\le 1$ since the inequality $\norm{A}\le\fnorm{A}$ always holds but $\fnorm{A}$ can be as large as $\sqrt{n}\norm{A}$. On the other hand, with this weaker assumption the problem can only be solved efficiently in the classical setting for special types of matrices, such as $s$-sparse matrices (i.e., matrices containing at most $s$ nonzero entries per row and column). The main result from \cite{GG22} indeed shows that when given sampling-and-query access to an $s$-sparse matrix $A$, a good estimate of the value $v^\dagger\myp(\sqrt{A^\dagger A})u$ can be computed in $\tilde O(s^{d})$ time, which matches (up to polynomial factors) the complexity of quantum algorithms based on the QSVT for constant $s$ and constant~$d$ (Ref.~\cite{GG22} also shows that the problem becomes $\BQP$-hard for  $d=\poly(n)$ and~$s$ constant). By combining our techniques for robust estimation of the inner product with the main technical result from \cite{GG22}, we show that this dequantization result even holds when considering $\SQ{\varepsilon}(v)$ for $\varepsilon>0$, when $\varepsilon$ is small enough.

Using this result, we can generalize the other dequantization results in \cite[Section 4]{GG22}, and in particular obtain an efficient classical algorithm computing a  constant-precision estimation of the ground energy of a local Hamiltonians when given $\varepsilon$-approximate sampling-and-query access to a guiding vector $u$ that has constant overlap with the ground state (Theorem 1 in \cite{GG22} showed this result only for the case $\varepsilon=0$). Similarly, the implications to the Quantum PCP conjecture and the No Low-Energy Samplable States (NLSS) conjecture discussed in \cite[Section 5]{GG22} also generalize to the case of approximate sampling-and-query access to the witness states.

While not considered in \cite{GG22}, the same approach can be used for sparse matrix inversion as well. As discussed in \cite{Gilyen+STOC19} (see also \cite{Martyn+21}), the QSVT can be used for matrix inversion by choosing a polynomial $\myp$ that approximates the function $f(x)=1/x$ on the interval $[-1,1]\setminus(-1/\kappa,1/\kappa)$, where $\kappa$ is the condition number. The analysis from \cite{Gilyen+STOC19} shows that a good enough approximation can be obtained by a polynomial of degree $d=O(\kappa\log(\kappa/\eta))$, where $\eta$ is the needed precision. The dequantization technique from~\cite{GG22} then leads to a classical algorithm with complexity $\tilde O(s^{\kappa\log(\kappa/\eta)})$  given exact sampling-and-query access to the input. Our result shows that this complexity holds even when considering approximate sampling-and-query access.

\item
{\bf Supervised clustering (Section~\ref{sub:app3}).} Lloyd, Mohseni, and Rebentrost \cite{Lloyd+13} developed a quantum algorithm for the supervised clustering problem, an important problem in machine learning. From a computational perspective, this quantum algorithm simply computes the inner product between two vectors representing appropriately the data, which can be done in $\tilde O\left(1/\eta\right)$ time using the SWAP test, where $\eta$ represents the precision parameter. Prior dequantization works \cite{Chia+JACM22,Tang21} showed how to construct classical algorithms solving this problem in time $O\left(1/\eta^2\right)$ given sampling-and-query access to the input. By applying our algorithm for robust estimation of the inner product, we immediately obtain the same complexity when given only $\varepsilon$-approximate sampling-and-query access for $\varepsilon>0$.

\item
{\bf Recommendation systems (Section~\ref{sub:app4}).}
In recommendation systems, the main computational task reduces to sampling from a row of a low-rank matrix close to the input matrix $A$ (which corresponds to the users' preference matrix). A key parameter for expressing the complexity of quantum algorithms for recommendation systems \cite{Kerenidis+ITCS17} and their dequantization \cite{Chia+JACM22,TangSTOC19} is the ratio $K=\fnorm{A}^2/\sigma^2$, where $\sigma$ is a threshold value used for the specification of the problem. Kerenidis and Prakash \cite{Kerenidis+ITCS17} have shown how to solve the problem in time $\tilde O(\sqrt{K})$ on a quantum computer when~$A$ is given in QRAM. Ref.~\cite{Chia+JACM22} gave a classical algorithm solving the problem in time $\tilde O(K^{16})$ when $A$ is given via sampling-and-query access, which matches the complexity of the quantum algorithm up to a (large) polynomial factor and improved the complexity of Tang's original algorithm~\cite{TangSTOC19}. By applying our framework, we show how to obtain the same complexity when given only $\varepsilon$-approximate sampling-and-query access to the input for $\varepsilon>0$, when~$\varepsilon$ is small enough.

\item
{\bf ``Low-rank'' matrix inversion (Section~\ref{sub:app5}).}
While the matrix inversion problem solved by the HHL quantum algorithm is $\BQP$-hard \cite{HHL09} and thus unlikely to be dequantized, several low-rank versions of matrix inversion (or, more precisely, solving a system of linear equations) have been studied in recent works on dequantization \cite{Chia+20,Chia+JACM22,Gilyen+18}. Ref.~\cite{Chia+JACM22}, in particular, showed that classical algorithms can solve linear equations in time $\tilde O(K^{14})$ given sampling-and-query access to the matrix and the vector specifying the system of linear equations. Since quantum algorithms solve this problem in time $\tilde O(\sqrt{K})$ when~$A$ is given in QRAM \cite{Gilyen+STOC19,Gilyen+18}, the classical algorithms match this complexity up to a (large) polynomial factor. By applying our framework, we show how to obtain the same complexity when given only $\varepsilon$-approximate sampling-and-query access for $\varepsilon>0$, when $\varepsilon$ is small enough. 
\end{itemize}

All these results give strong evidence for the lack of exponential speedup for quantum machine learning algorithms in the QRAM model (and additionally for the framework from~\cite{GG22}) even when the classical algorithms only have approximate sampling-and-query access to the input. The main conceptual message of this paper is thus that no quantum advantage seems to arise in machine learning (at least for the problems we consider) from the ability of quantum algorithms to natively robustly manipulate classical information encoded as quantum states. 

\paragraph{More recent quantum-inspired algorithms.} Very recently, Bakshi and Tang \cite{Bakshi+23} have constructed quantum-inspired classical algorithms for the Quantum Singular Value transformation and its applications with improved bounds compared with \cite{Chia+JACM22}. For low-rank matrix inversion, recent works by Gily{\'{e}}n, Song and Tang \cite{Gilyen+20} and Montanaro and Shao \cite{Shao+22} have given quantum-inspired classical algorithms with improved bounds over \cite{Chia+JACM22} as well. 
These improved algorithms, which use iterative methods and new classical techniques, assume perfect sampling-and-query access to the data, as in \cite{Chia+JACM22}. It would be interesting to obtain similar improvements in the case of approximate sampling-and-query access. 
\subsection{Overview of our techniques}
Our work adapts the frameworks developed by Chia, Gily\'en, Li, Lin, Tang and Wang \cite{Chia+JACM22} and Gharibian and Le Gall \cite{GG22} to the case of approximate sampling-and-query access to the input. In particular, we need to adapt essentially all the definitions and technical lemmas from \cite{Chia+JACM22}, as well as their proofs. From a technical perspective, the three main contributions are the three techniques described below. \vspace{-2mm}

\paragraph{Robust estimation of the inner product.}
We describe the main idea behind our robust estimation of the inner product. Consider two unit-norm vectors $u,v\in\Real^n$, where $u$ is given via $\varepsilon$-approximate sampling-and-query access and $v$ is given by query access (i.e., on an input $i\in\set{1}{n}$ we can query $v(i)$). A straightforward adaptation of the method described in Section \ref{sec:intro2} would be: sample an index $i\in\set{1}{n}$ according to the distribution~$\tilde p_u$ and output the value $v(i)/u(i)$. Comparing with Equation~(\ref{eq:expectation}), we see that indices $i$ such that $u(i)$ is very small but $\tilde p_u(i)$ is large lead to large bias. This suggests the followings strategy: use instead an estimator of the type
\[
\begin{cases}
0&\textrm{ if } i\in \Gamma,\\
\frac{v(i)}{u(i)}& \textrm{ if }i\notin \Gamma,
\end{cases}
\]
for a set $\Gamma=\{i\in\set{1}{n}\:|\:\abs{u(i)}\le\theta(i)\}$, where $\theta(i)$ is a well chosen threshold function. Ideally, we would like to choose $\theta(i)$ based on the value of $\tilde p_u(i)$. This is unfortunately not possible since we do not have access to the value $\tilde p_u(i)$. Our key discovery is that choosing $\theta(i)$ based on the value of $v(i)$ does work. Taking such as estimate leads to a biased estimator where the bias comes from two parts: the first part coming from the fact that we are sampling from $\tilde p_u$ instead of $p_u$ and the second part coming from the fact that we disregarding the contribution of the indices in the set $\Gamma$. When setting $\theta(i)=\gamma \abs{v(i)}$ for some constant $\gamma$, we discover that it is possible to derive ``complementary'' upper bounds on these two biases: the upper bound on the first bias is proportional to $\gamma$ while the upper bound on the second bias is inverse-proportional to $\gamma$. We then take the optimal value for $\gamma$ and set $\theta(i)$ accordingly. Concretely, we take $\theta(i)=\sqrt{2\varepsilon}\abs{v(i)}$ (with additional normalization factors when $u$ and $v$ are not unit-norm), which gives a bias of $\sqrt{2\varepsilon}+\sqrt{2\varepsilon}=2\sqrt{2\varepsilon}$. We refer to Section~\ref{sec:warmingup} for a fairly self-contained presentation of the details of our technique for robust estimation of the inner product.\footnote{The inner product estimation algorithm of Section \ref{sec:warmingup} is actually a special case of Theorem \ref{th:MM1} in Section \ref{sub:MM1}. We present it in Section \ref{sec:warmingup} as a ``warm-up'' since for this special case the presentation is significantly lighter (in particular, we do not need the notion of matrix sketch introduced in in Section \ref{sub:MM1}).}\vspace{-2mm}

\paragraph{Extension to robust estimation of matrix multiplication.}
Our second main technical contribution is extending this strategy to compute a general matrix product $XY$. We adapt the notion of matrix sketch, which is one of the central concepts in linear algebra algorithms based on random sampling \cite{Drineas+04,Drineas+FOCS01,Drineas+SICOMP06,Drineas+06B,Drineas+06C,Frieze+JACM04,Kannan+09, Kannan2017}, to the setting of approximate sampling (see Definition~\ref{def:S} in Section \ref{sub:MM1}). This is the key step that enables us to adapt the matrix multiplication sketching technique by Drineas, Kannan and Mahoney \cite{Drineas+SICOMP06} to the setting where only $\varepsilon$-approximate sampling-and-query access to one matrix is given, for $\varepsilon>0$. When given approximate sampling access only to one of the matrices ($X$ or $Y$), we apply a technique similar to our robust evaluation of the inner product (indeed, we obtain the $2\sqrt{2\varepsilon}$ term in the bias of the sketch in Theorem~\ref{th:MM1}). When given approximate sampling access to both matrices $X$ and $Y$, we adapt the technique based on joint-matrix sampling from \cite{Drineas+SICOMP06}, which was also used in~\cite{Chia+JACM22}. The first step is again to adapt the definition to the setting of approximate sampling (Definition~\ref{definition:mixed} in Section \ref{sub:MM}). Using the inequality of arithmetic and geometric means and a finer analysis of the success probability based on MacDiarmid's bounded difference inequality, this technique leads to an algorithm (Theorem \ref{th:MM}) with exponentially better dependence in the error parameter~$\delta$.\vspace{-2mm}

\paragraph{Robust rejection sampling.}
Our third main technical contribution, crucial to dequantize quantum algorithms for the low-rank quantum singular value transformation and recommendation systems, is the conversion technique mentioned in Section \ref{sec:intro3}, which converts approximate oversampling-and-query access into approximate sampling-and-query. A similar conversion has been used in the perfect setting in \cite{Chia+JACM22,TangSTOC19}, based on the standard technique in probabilistic algorithms called rejection sampling (see, e.g., \cite{Devroye1986} for a detailed description of this technique). Our main technical contribution is to develop a robust version of rejection sampling. 

Standard rejection sampling is a method to generate samples from a (hard) distribution $p_1\colon\set{1}{n}\to [0,1]$ given the ability to generate samples from another (easier) distribution $p_2\colon\set{1}{n}\to [0,1]$. The method works as follows: sample $j\in\set{1}{n}$ from $p_2$; output it with probability 
$
\frac{p_1(j)}{mp_2(j)}
$
for a large value $m$ (large enough such that $p_1(j)\le m p_2(j)$ holds for all $j\in\set{1}{m}$) and otherwise report ``failure''. It is easy to see that the probability of outputting $j$ is 
$p_1(j)/m$
and the probability that the process does not fail is $1/m$. In consequence, conditioned on the event that the process does not fail, the probability that sample $j$ is output is precisely $p_1(j)$. Repeating the process $\Theta(m)$ times will then output a sample drawn from the distribution $p_1$ with high probability. In Section \ref{sub:over}, we show how to generalize this technique to the case where we can only generate samples from a probability distribution $\tilde p_2$ that is close to $p_2$ in total variation distance. At first glance, it seems that the same difficulty as when trying to extend Equation~(\ref{eq:expectation}) to the approximate sampling setting arises. We show that the condition $p_1(j)\le m p_2(j)$ is actually strong enough to control the bias, which enables us to sample with high probability from a probability distribution close to $p_1$ in total variation distance (see Proposition \ref{prop:oversampling} in Section \ref{sub:over}).
\section{Preliminaries}\label{sec:prelim}
In this section we present the definitions needed for this work. In Section \ref{sub:notations} we give general definitions and notations. In Section \ref{sub:def1} we define several notions of access to a vector and in particular introduce the concept of approximate sampling-and-query access, which is one of the main conceptual contributions of this work. In Section \ref{sub:def2}, we extend these definitions to matrices.

\subsection{Notations and general definitions}\label{sub:notations}

\paragraph{General notations.}
In this paper we assume that arithmetic operations in $\Comp$ (i.e., addition, subtraction, multiplication and division of two scalars) can be done at unit cost.\footnote{All our upper bounds can easily be converted into upper bounds for the bit-complexity of the problem as well if the notions of query access and sampling access to the input are redefined appropriately to take in consideration the bit-complexity of the input vectors and matrices.}  For any complex number $x$, we write $x^\ast$ its complex conjugate.
We denote by $\Real_{\ge 0}$ the set of non-negative real numbers. 
We say that a polynomial $\myp\in\Comp[x]$ is even if $\myp(x)=\myp(-x)$ for all $x\in \Comp$.
Given a function $f\colon\Real\to \Comp$, some set $\Lambda\subseteq \Real$ and a positive real number $L$, we say that $f$ is $L$-Lipschitz on $\Lambda$ if the inequality $\abs{f(x)-f(y)}\le L\abs{x-y}$ holds for any $x,y\in \Lambda$. 


\paragraph{Matrices and vectors.}

For a matrix $A\in\Comp^{m\times n}$, we write its entries as $A(i,j)$ for $(i,j)\in\set{1}{m}\times \set{1}{n}$. For each $i\in\set{1}{m}$, we denote the $i$-th row of $A$ as $A(i,\cdot)$ . For any $j\in\set{1}{n}$, we denote the $j$-th column of $A$ as $A(\cdot,j)$. We write $A^\ast$ for its conjugate and $A^\dagger$ for its transpose-conjugate, i.e., the matrices such that $A^\ast(i,j)=A(i,j)^*$ and $A^\dagger(i,j)=A(j,i)^*$ for all $(i,j)\in\set{1}{m}\times \set{1}{n}$. We use similar notations for vectors.

For a vector $u\in \Comp^{n}$, we denote its $\ell_2$ norm as $\norm{u}$. For a matrix $A\in\Comp^{m\times n}$, we denote its spectral norm as $\norm{A}$, and denote its Frobenius norm as $\fnorm{A}$, i.e.,
\[
\fnorm{A}=\sqrt{\sum_{i=1}^m\sum_{j=1}^n\abs{A(i,j)}^2}.
\]
We have $\norm{A}\le \fnorm{A}\le \sqrt{\min(m,n)}\norm{A}$.\footnote{In this paper we systematically use the upper bound $\norm{A}\le \fnorm{A}$ to state the bounds on the complexity of our algorithms only in term of $\fnorm{A}$. It is possible (and done in prior works \cite{Chia+JACM22}) to give a finer analysis of the statements of upper bounds using both $\norm{A}$ and $\fnorm{A}$ but for clarity we abstain from doing this since this gives more complicated bounds.}

Given a Hermitian matrix $H\in\Comp^{n\times n}$, we order its eigenvalues in non-increasing order and denote them $\lambda_1(H)\ge \lambda_2(H)\ge \cdots \ge \lambda_n(H)$. We write
\[
\spec(H)=\{\lambda_1(H), \lambda_2(H), \ldots, \lambda_n(H)\}
\] 
for the spectrum of $H$.

In Section \ref{sub:SVT} we will need the following two lemmas.
 
\begin{lemma}[Hoﬀman-Wielandt theorem \cite{HW53}]\label{lemma:HW}
For Hermitian matrices $X,Y\in\Comp^{m\times m}$, we have
\[
\sum_{i=1}^m\abs{\lambda_i(X)-\lambda_i(Y)}^2\le \fnorm{X-Y}^2.
\]
\end{lemma}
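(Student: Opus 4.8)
The statement to prove is the Hoffman–Wielandt theorem: for Hermitian matrices $X,Y \in \Comp^{m\times m}$, $\sum_{i=1}^m |\lambda_i(X) - \lambda_i(Y)|^2 \le \fnorm{X-Y}^2$.

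\textbf{Proof proposal.} The plan is to diagonalize both matrices and reduce the inequality to a statement about doubly stochastic matrices. First I would write the spectral decompositions $X = U D_X U^\dagger$ and $Y = V D_Y V^\dagger$, where $U,V$ are unitary and $D_X = \mathrm{diag}(\lambda_1(X),\ldots,\lambda_m(X))$, $D_Y = \mathrm{diag}(\lambda_1(Y),\ldots,\lambda_m(Y))$ carry the eigenvalues in non-increasing order. Since the Frobenius norm is unitarily invariant, $\fnorm{X-Y}^2 = \fnorm{U D_X U^\dagger - V D_Y V^\dagger}^2 = \fnorm{D_X - W D_Y W^\dagger}^2$ where $W = U^\dagger V$ is unitary. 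Expanding this last quantity entrywise, I would write $W = (w_{ij})$ and compute
\[
\fnorm{D_X - W D_Y W^\dagger}^2 = \sum_{i,j} |w_{ij}|^2\,|\lambda_i(X) - \lambda_j(Y)|^2,
\]
using that $\sum_j |w_{ij}|^2 = 1$ for each $i$ and $\sum_i |w_{ij}|^2 = 1$ for each $j$ (cross terms cancel after a short computation because $W$ is unitary).

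The next step is to recognize that the matrix $S$ with entries $S_{ij} = |w_{ij}|^2$ is doubly stochastic, so the right-hand side is $\sum_{i,j} S_{ij} c_{ij}$ with $c_{ij} = |\lambda_i(X) - \lambda_j(Y)|^2$, a linear functional of $S$ over the Birkhoff polytope of doubly stochastic matrices. By the Birkhoff–von Neumann theorem, the minimum of a linear functional over this polytope is attained at a vertex, i.e., at a permutation matrix; hence $\fnorm{X-Y}^2 \ge \min_{\pi \in S_m} \sum_{i=1}^m |\lambda_i(X) - \lambda_{\pi(i)}(Y)|^2$. The final step is to show this minimum over permutations is achieved by the identity permutation, given that both eigenvalue sequences are sorted in non-increasing order. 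This is the standard rearrangement fact: for real numbers $a_1 \ge \cdots \ge a_m$ and $b_1 \ge \cdots \ge b_m$, the sum $\sum_i (a_i - b_{\pi(i)})^2 = \sum_i a_i^2 + \sum_i b_i^2 - 2\sum_i a_i b_{\pi(i)}$ is minimized when $\sum_i a_i b_{\pi(i)}$ is maximized, which by the rearrangement inequality happens at $\pi = \mathrm{id}$; an exchange argument (swapping any inversion can only increase $\sum a_i b_{\pi(i)}$) makes this rigorous.

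I expect the main obstacle to be the entrywise expansion showing the cross terms vanish — writing $\fnorm{D_X - WD_YW^\dagger}^2 = \Tr\big((D_X - WD_YW^\dagger)^2\big)$ and carefully expanding the trace into $\Tr(D_X^2) - 2\Tr(D_X W D_Y W^\dagger) + \Tr(D_Y^2)$, then using $\Tr(D_X W D_Y W^\dagger) = \sum_{i,j} \lambda_i(X)\lambda_j(Y)|w_{ij}|^2$ and reassembling — but this is a routine calculation rather than a conceptual difficulty. Everything else (unitary invariance of $\fnorm{\cdot}$, Birkhoff–von Neumann, the rearrangement inequality) is standard. An alternative to invoking Birkhoff–von Neumann explicitly is to bound $\sum_{i,j} S_{ij} c_{ij}$ from below directly by a majorization argument, but the Birkhoff route is the cleanest.
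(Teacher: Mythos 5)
Your proof is correct. Note that the paper itself does not prove this lemma at all: it is stated as a known result (the Hoffman--Wielandt theorem) with a citation to the original 1953 paper, so there is no internal proof to compare against. Your argument is the classical one and all steps check out: by unitary invariance you reduce to $\fnorm{D_X - W D_Y W^\dagger}^2$ with $W$ unitary; expanding the trace and using double stochasticity of $S_{ij}=|w_{ij}|^2$ (i.e.\ absorbing $\Tr(D_X^2)$ and $\Tr(D_Y^2)$ into the double sum via the row and column sums of $S$) gives exactly $\sum_{i,j} S_{ij}\,|\lambda_i(X)-\lambda_j(Y)|^2$; minimizing the linear functional over the Birkhoff polytope reduces to permutation matrices; and the rearrangement inequality (or the exchange argument you sketch) shows the identity permutation attains the minimum when both spectra are sorted in non-increasing order, which is precisely the ordering convention the paper uses. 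The step you flag as the "main obstacle" is indeed just the routine trace computation, so there is no gap.
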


\begin{lemma}[Corollary 2.3 in \cite{Gil10}]\label{lemma:Lipschitz}
For two Hermitian matrices $X,Y\in\Comp^{m\times m}$ and a function $f\colon \Real\to\Comp$ that is $L$-Lipschitz on $\spec(X)\cup\spec(Y)$, we have
\[
\fnorm{f(X)-f(Y)}\le L\fnorm{X-Y}.
\]
\end{lemma}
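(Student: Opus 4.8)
Since this statement is quoted from the literature (Corollary~2.3 of~\cite{Gil10}), the plan is to reproduce the standard argument, which is essentially the computation underlying the Hoffman--Wielandt theorem (Lemma~\ref{lemma:HW}). First I would diagonalize both matrices: write $X=\sum_{i=1}^m a_i\,u_iu_i^\dagger$ and $Y=\sum_{j=1}^m b_j\,v_jv_j^\dagger$ where $\{u_i\}_i,\{v_j\}_j$ are orthonormal bases of $\Comp^m$, $a_i=\lambda_i(X)$, $b_j=\lambda_j(Y)$, and set $f(X):=\sum_i f(a_i)\,u_iu_i^\dagger$ and $f(Y):=\sum_j f(b_j)\,v_jv_j^\dagger$ (these are normal, though not Hermitian when $f$ is complex-valued, which causes no trouble). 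Let $W\in\Comp^{m\times m}$ be the unitary with entries $W_{ij}=u_i^\dagger v_j$.

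Next I would compute $\fnorm{f(X)-f(Y)}$ in the eigenbasis $\{u_i\}_i$ of $X$, which is legitimate by unitary invariance of $\fnorm{\cdot}$. In this basis $f(X)$ is diagonal with entries $f(a_i)$, while $[f(Y)]_{ik}=\sum_j f(b_j)W_{ij}W_{kj}^*$; using $\sum_j W_{ij}W_{kj}^*=\delta_{ik}$ (i.e.\ $WW^\dagger=I$) this gives $[f(X)-f(Y)]_{ik}=\sum_j\big(f(a_i)-f(b_j)\big)W_{ij}W_{kj}^*$. Expanding $\fnorm{f(X)-f(Y)}^2=\sum_{i,k}\abs{[f(X)-f(Y)]_{ik}}^2$ and collapsing the resulting double sum over a pair of column indices with $\sum_k W_{kj}^*W_{kj'}=\delta_{jj'}$ (i.e.\ $W^\dagger W=I$) yields the key identity
\[
\fnorm{f(X)-f(Y)}^2=\sum_{i=1}^m\sum_{j=1}^m\abs{f(a_i)-f(b_j)}^2\,\abs{W_{ij}}^2,
\]
and the same computation with $f$ replaced by the identity gives $\fnorm{X-Y}^2=\sum_{i,j}\abs{a_i-b_j}^2\abs{W_{ij}}^2$.

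To conclude, for every $i,j$ both $a_i\in\spec(X)$ and $b_j\in\spec(Y)$ lie in $\spec(X)\cup\spec(Y)$, so the $L$-Lipschitz hypothesis gives $\abs{f(a_i)-f(b_j)}\le L\abs{a_i-b_j}$; plugging this into the key identity gives $\fnorm{f(X)-f(Y)}^2\le L^2\sum_{i,j}\abs{a_i-b_j}^2\abs{W_{ij}}^2=L^2\fnorm{X-Y}^2$, and taking square roots finishes the proof. The only place requiring care is the derivation of the key identity: one must invoke $WW^\dagger=I$ and $W^\dagger W=I$ at the correct steps, and must check that nothing in the manipulation implicitly assumes $f$ is real --- it does not, since only the normality of $f(X),f(Y)$ and the unitary invariance of the Frobenius norm are used. (One could alternatively observe that $(\abs{W_{ij}}^2)_{i,j}$ is doubly stochastic and invoke Birkhoff--von Neumann, but that refinement is not needed for the inequality as stated.)
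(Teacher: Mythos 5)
Your proof is correct. Note that the paper itself never proves this lemma: it is imported as a black box, cited as Corollary~2.3 of \cite{Gil10}, so there is no internal proof to compare against. Your reconstruction is the standard argument for results of this type, and it checks out: writing $X=\sum_i a_iu_iu_i^\dagger$, $Y=\sum_j b_jv_jv_j^\dagger$ and $W_{ij}=u_i^\dagger v_j$, the identity $\fnorm{f(X)-f(Y)}^2=\sum_{i,j}\abs{f(a_i)-f(b_j)}^2\abs{W_{ij}}^2$ follows exactly as you derive it (using $\delta_{ik}=\sum_j W_{ij}W_{kj}^*$ to express $[f(X)-f(Y)]_{ik}$ as a single sum over $j$, then $W^\dagger W=I$ to collapse the cross terms), the same identity with $f=\mathrm{id}$ handles the right-hand side, and the Lipschitz hypothesis applies verbatim since every $a_i$ and $b_j$ lies in $\spec(X)\cup\spec(Y)$, which is precisely the set on which the paper assumes $L$-Lipschitzness. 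Your remarks on the side conditions are also accurate: complex-valued $f$ causes no issue because only the spectral decompositions and unitary invariance of $\fnorm{\cdot}$ are used, and the Birkhoff--von Neumann refinement is only needed if one wants to recover the ordered-eigenvalue statement of Hoffman--Wielandt (Lemma~\ref{lemma:HW}), not for the inequality claimed here.
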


For a positive semidefinite Hermitian matrix $H\in\Comp^{n\times n}$, consider its spectral decomposition
\[
    H=\sum_{i=1}^{n}\lambda_i(H) v_iv_i^\dagger.
\]
where the $\{v_i\}_i$ are orthonormal vectors in $\Comp^n$.
For any function $f\colon\Real_{\ge 0}\to\Comp$, the matrix 
\[
f(H)=\sum_{i=1}^{n}f(\lambda_i(H)) v_iv_i^\dagger
\]
is called the \emph{singular value transformation} of $H$ associated to $f$ (which in this case coincides with the eigenvalue transformation of $H$ associated to $f$). We use this definition to define the concept, which we will consider in Sections \ref{sub:app2} and \ref{sub:app8}, of (even) polynomial transformation of an arbitrary (i.e., non-Hermitian) matrix. For an even polynomial $\myp\in\Comp[x]$ and a matrix $A\in\Comp^{m\times n}$, the \emph{(even) polynomial transformation} of $A$ associated to $\myp$ is defined as $\myp(\sqrt{A^\dagger A})$. Let us write
\[
\myp(x)=a_0+a_2x^2+a_4x^4+\cdots+ a_{d}x^{d},
\]
where $d$ is the (even) degree of $\myp$ and $a_0,\ldots,a_{d}$ are coefficients. It is easy to check that the following relation holds:
\[
\myp(\sqrt{A^\dagger A})=a_0I_n+a_2A^\dagger A+a_4(A^\dagger A)^2+\cdots+a_{d}(A^\dagger A)^d,
\]
where $I_n$ denotes the identity matrix of dimension $n$. 

We now define the singular value transformation of an arbitrary matrix. This concept will be used in Sections \ref{sub:app4} and \ref{sub:app5}. 
For any matrix $A\in \Comp^{m\times n}$, consider its spectral singular value transformation
\[
A=\sum_{i=1}^{\min(m,n)}\sigma_i u_iv_i^\dagger,
\]
where the $\sigma_i$'s are nonnegative real numbers, $\{u_i\}_i$ are orthonormal vectors in $\Comp^m$ and $\{v_i\}_i$ are orthonormal vectors in $\Comp^n$. For any function $f\colon\Real_{\ge 0}\to\Comp$ such that $f(0)=0$, the matrix 
\[
f(A)=\sum_{i=1}^{\min(m,n)}f(\sigma_i) u_iv_i^\dagger
\]
is called the \emph{singular value transformation} of $A$ associated to $f$. Note that when $A$ is a positive semidefinite Hermitian matrix we recover the above definition.\footnote{For arbitrary matrices, the condition $f(0)=0$ guarantees that the singular value transformation is well defined.}

\paragraph{Probability distributions and matrix-valued random variables.}

Given a probability distribution $p\colon\set{1}{n}\to[0,1]$, we denote its support as $\supp{p}$, i.e., 
\[
\supp{p}=\{i\in\set{1}{n}\:|\:p(i)>0\}.
\]

Given two probability distributions $p,q\colon\set{1}{n}\to[0,1]$ we write their total variation distance as
\[
\stat{p-q}=\frac{1}{2}\sum_{i=1}^n \abs{p(i)-q(i)}.
\]

As already mentioned in the introduction, for any nonzero vector $u\in \Comp^{n}$ we define the probability distribution $p_u\colon\{1,\ldots,n\}\to [0,1]$ as 
\[
p_u(i)=\frac{|u(i)|^2}{\norm{u}^2}
\]
for each $i\in\{1,\ldots,n\}$.

We will often work with matrix-valued random variables. We recall the definition of the expectation and variance of such variables. A matrix-valued random variable $Z\in\Comp^{m\times n}$ is a matrix where each entry $Z(i,j)$ is a complex-valued random variable, i.e., $Z(i,j)=X(i,j)+i\:Y(i,j)$ for two real-valued random variables $X(i,j)$ and $Y(i,j)$. The expectation of $Z$ is the matrix $C\in\Comp^{m\times n}$ such that the $C(i,j)=\ex{X(i,j)}+i\:\ex{Y(i,j)}$. The variance of~$Z$ is 
\[
\var{Z}=\ex{\fnorm{Z-\ex{Z}}^2}=\ex{\fnorm{Z}^2}-\fnorm{\ex{Z}}^2.
\] 
With these definitions, Chebyshev's inequality (which is just Markov's inequality applied to the real-valued random variable $\fnorm{Z-\ex{Z}}^2$) holds for matrix-valued random variables.

We will also need MacDiarmid's bounded difference inequality \cite{McDiarmid89}.
\begin{lemma}[MacDiarmid's bounded difference inequality \cite{McDiarmid89}]\label{lemma:MacDiarmid}
Let $X_1, \ldots, X_r$ be independent random variables, where $X_i$ takes values in a set $A_i$ for each $i\in\set{1}{r}$.  Let $f\colon A_1\times\cdots\times A_r\to \Real$ be a function satisfying the following property: for any $i\in\set{1}{r}$, 
\[
\abs{f(x_1,\ldots,x_r)-f(x'_1,\ldots,x'_r)}\le d_i
\] 
whenever the vectors $(x_1,\ldots,x_r)$ and $(x'_1,\ldots,x'_r)$ differ in just the $i$-th coordinate.
Then for any $t>0$, 
\[
\Prob{\abs{f(X_1,\ldots,X_r)-\ex{f(X_1,\ldots,X_r)}}\ge t}\le 2\exp\left(-\frac{2t^2}{\sum_{i=1}^r d_i^2}\right).
\]
\end{lemma}

\paragraph{Powering lemma.}
In order to amplify the success probability of probabilistic estimators, we will often use the following version of the ``powering lemma'' from \cite{Jerrum+86}. Since our version is slightly more general than the original version (our version applies to complex-valued random variables as well), we include a proof for completeness. 
\begin{lemma}\label{lemma:powering}
Consider a randomized algorithm that produces an estimate $\tilde \mu$ of a complex-valued quantity $\mu$ such that $\abs{\tilde \mu-\mu}\le \varepsilon$ holds with probability at least $3/4$. Then, for any $\delta>0$, it suffices to repeat $O(\log(1/\delta))$ times the algorithm to obtain an estimate $\hat \mu$ such that $\abs{\hat \mu-\mu}\le \sqrt{2}\varepsilon$ holds with probability at least $1-\delta$.
\end{lemma}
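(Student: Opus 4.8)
The plan is to follow the classical amplification argument behind the powering lemma of \cite{Jerrum+86}, splitting the complex-valued estimate into its real and imaginary parts and applying the standard median trick to each part separately. First I would run the given randomized algorithm $k=O(\log(1/\delta))$ times independently, obtaining estimates $\tilde\mu_1,\ldots,\tilde\mu_k$, and write $\tilde\mu_j = \real_j + i\:\comp_j$ and $\mu = \real + i\:\comp$. Since $\abs{\tilde\mu_j-\mu}\le\varepsilon$ implies both $\abs{\real_j-\real}\le\varepsilon$ and $\abs{\comp_j-\comp}\le\varepsilon$, each individual run is ``good for the real part'' with probability at least $3/4$, and likewise for the imaginary part. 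I would then set $\hat\mu = \hat\real + i\:\hat\comp$, where $\hat\real$ is the median of $\real_1,\ldots,\real_k$ and $\hat\comp$ is the median of $\comp_1,\ldots,\comp_k$.

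The key probabilistic estimate is that the median of $k$ i.i.d.\ Bernoulli-type indicators lands within $\varepsilon$ of the true value with high probability. Concretely, if fewer than $k/2$ of the runs satisfy $\abs{\real_j-\real}>\varepsilon$, then the median $\hat\real$ must satisfy $\abs{\hat\real-\real}\le\varepsilon$: indeed, among any $k/2$ values there is at least one that is within $\varepsilon$ of $\real$, and this pins the median between the smallest and largest ``good'' values. The number of ``bad'' runs is stochastically dominated by a binomial random variable $\mathrm{Bin}(k,1/4)$, whose mean is $k/4$; by a Chernoff (or Hoeffding) bound, the probability that it exceeds $k/2$ is at most $e^{-ck}$ for an absolute constant $c>0$. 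Choosing $k=\lceil \frac{1}{c}\ln(2/\delta)\rceil = O(\log(1/\delta))$ makes this probability at most $\delta/2$, and the same bound holds for the imaginary part. A union bound over the two parts gives that both $\abs{\hat\real-\real}\le\varepsilon$ and $\abs{\hat\comp-\comp}\le\varepsilon$ hold with probability at least $1-\delta$.

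Finally, on this good event, $\abs{\hat\mu-\mu} = \sqrt{(\hat\real-\real)^2+(\hat\comp-\comp)^2}\le\sqrt{\varepsilon^2+\varepsilon^2}=\sqrt{2}\,\varepsilon$, which is exactly the claimed bound, and the total number of calls to the base algorithm is $O(\log(1/\delta))$. I do not anticipate any real obstacle here — the only point requiring a little care is the geometric factor $\sqrt{2}$ coming from bounding a two-dimensional error by controlling each coordinate separately (this is why the complex version loses the $\sqrt 2$ relative to the real-valued statement of \cite{Jerrum+86}), and the routine verification that the median of the samples inherits the coordinatewise guarantee. Everything else is the textbook Chernoff-bound analysis of the median-of-estimates amplification.
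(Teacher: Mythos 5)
Your proposal is correct and follows essentially the same route as the paper's proof: split $\mu$ into real and imaginary parts, take coordinatewise medians of $O(\log(1/\delta))$ independent estimates, apply a Chernoff bound with a union bound over the two coordinates, and pay the geometric factor $\sqrt{2}$ when recombining. No gaps to report.
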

\begin{proof}
Let us write $\mu=\mu_1 + i\mu_2$ with $\mu_1,\mu_2\in\Real$.
Each application of the algorithm returns a complex number $\tilde \mu=\tilde \mu_1+i\tilde \mu_2$ with $\tilde \mu_1,\tilde \mu_2\in\Real$ such that both $\abs{\tilde \mu_1-\mu_1}\le \varepsilon$ and $\abs{\tilde \mu_2-\mu_2}\le \varepsilon$ hold with probability at least $3/4$. We repeat the algorithm $m$ times. We compute the median of the real parts of the $m$ estimates and write it $\hat \mu_1$. Similarly, we compute the median of the imaginary parts of the $m$ estimates and write it $\hat \mu_2$. Finally, we output $\hat \mu= \hat \mu_1+i\hat \mu_2$.

By taking $m=O(\log(1/\delta))$, Chernoff's bound guarantees that the inequality $\abs{\hat \mu_1-\mu_1}\le \varepsilon$ holds with probability at least $1-\delta/2$, and the inequality $\abs{\hat \mu_2-\mu_2}\le \varepsilon$ holds with probability at least $1-\delta/2$. The inequality $\abs{\hat \mu -\mu}\le \sqrt{2}\varepsilon$ then holds with probability at least $1-\delta$.
\end{proof}

\subsection{Sampling-and-query access to vectors}\label{sub:def1}
In this subsection we define several versions of sampling access to vectors. The relations between these notions is summarized in Figure \ref{fig1}.


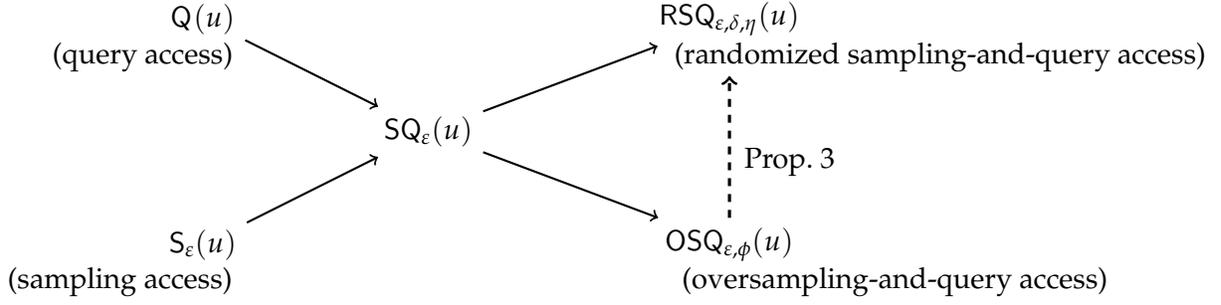
\begin{figure}
\begin{center}
    \begin{tikzpicture}
\node (n1) at (0,3)   {$\Q{}(u)$}; 
\node (t1) at (-0.8,2.5)   {(query access)}; 
\node (n2) at (0,0)  {$\Samp{\varepsilon}(u)$};
\node (t2) at (-1.1,-0.5)   {(sampling access)}; 
\node (n3) at (3,1.5)  {$\SQ{\varepsilon}(u)$};
\node (n4) at (7,3)  {$\RSQ{\varepsilon,\delta,\eta}(u)$};
\node (m4) at (7,2.6)  {$\phantom{\RSQ{\varepsilon,\delta,\eta}(u)}$};
\node (t4) at (9.8,2.5)   {(randomized sampling-and-query access)}; 
\node (n5) at (7,0)  {$\OSQ{\varepsilon,\phi}(u)$};
\node (t5) at (9.2,-0.5)   {(oversampling-and-query access)}; 
\node (t5) at (7.8,1.1)   {Prop.~\ref{prop:oversampling}}; 
\draw [->,thick] (n1) to (n3);
\draw [->,thick] (n2) to (n3);
\draw [->,thick] (n3) to (n4);
\draw [->,thick] (n3) to (n5);
\draw [->,dashed,very thick] (n5) to (m4);
    \end{tikzpicture}  \vspace{-2mm}
\end{center}
\caption{Relation between the notions of sampling access to a vector defined in Section \ref{sub:def1}. A plain arrow shows a trivial generalization. The dotted arrow refers to the implementation proved in Proposition \ref{prop:oversampling} (which modifies the parameter $\varepsilon$).}\label{fig1}
\end{figure}

\paragraph{Basic definitions.}
We first define query access and (approximate) sampling access to a vector.
\begin{definition}\label{def:q-access}
For a nonzero vector $u\in \Comp^{n}$, we say that we have query access to $u$, which we write $\Q{}(u)$, if on input $i\in\{1,\ldots,n\}$ we can query the entry $u(i)$. We denote $\costq{u}$ the cost of implementing one such query. 
\end{definition}

\begin{definition}\label{def:s-access}
For a nonzero vector $u\in \Comp^{n}$, we say that we have $\varepsilon$-approximate sampling access to $u$, which we write $\Samp{\varepsilon}(u)$, if we can sample from a distribution $\tilde p\colon\{1,\ldots,n\}\to [0,1]$ such that $\stat{p_u-\tilde{p}}\le \varepsilon$. We denote $\costs{u}$ the cost of generating one sample. 
\end{definition}

We now define the concept of (approximate) sampling-and-query access, which is the main definition of this paper.

\begin{definition}\label{def:sq-access}
For a nonzero vector $u\in \Comp^{n}$ and a parameter $\varepsilon\ge 0$, we say that we have $\varepsilon$-approximate sampling-and-query access to~$u$, which we write $\SQ{\varepsilon}(u)$, if we have $\Q{}(u)$ and $\Samp{\varepsilon}(u)$ and 
additionally can get the value $\norm{u}$.
We write $\costsq{u}$ the maximum among $\costq{u}$, $\costs{u}$ and the cost of obtaining $\norm{u}$. When $\varepsilon=0$, we simply write $\SQ{}(u)$.
\end{definition}

Knowledge of $\norm{u}$ is required since this norm cannot be easily computed from $\Q{}(u)$ and $\Samp{\varepsilon}(u)$.
The literature on sampling methods for linear algebra, as well as recent works on dequantization \cite{Chia+JACM22,TangSTOC19}, have used $\SQ{}(u)$ (i.e., Definition \ref{def:sq-access} with $\varepsilon=0$).\footnote{We also mention the work by Cotler, Huang and McClean \cite{Cotler+21}, which compares $\SQ{}(u)$ with quantum access to a quantum state proportional to $u$ and shows that $\SQ{}(u)$ can in some cases be more powerful due to the query access allowed in the classical setting (this advantage naturally disappears if we allow quantum algorithms to use $\Q{}(u)$).} We are not aware of any prior work that introduces a similar definition with $\varepsilon>0$.

\paragraph{More technical definitions.}
We define two generalizations of our central definition (Definition \ref{def:sq-access}) that will be useful to state our results. 

The first one is a slight generalization of Definition \ref{def:sq-access}, which only requires to get an approximation of $\norm{u}$ and only requires randomized procedures for sampling and norm estimation. This generalization is needed since in some applications we will not be able to implement sampling with probability one or compute exactly the norm of the vector. 

\begin{definition}\label{def:rand-sample}
For a nonzero vector $u\in \Comp^{n}$ and parameters $\epsilon,\delta,\eta\ge 0$, we say that we have $(\varepsilon,\delta,\eta)$-approximate randomized sampling-and-query access to $u$, which we write $\RSQ{\varepsilon,\delta,\eta}(u)$, if the following three conditions are satisfied:
\begin{itemize}
\item[(i)]
we have $\Q{}(u)$;
\item[(ii)]
we can get an estimate $\estnorm{u}$ such that $\abs{\estnorm{u}-\norm{u}}\le \eta \norm{u}$ holds with probability at least $1-\delta$;
\item[(iii)]
we can generate with probability at least $1-\delta$ a sample from a distribution $\tilde p\colon\{1,\ldots,n\}\to [0,1]$ such that $\stat{p_u-\tilde{p}}\le \varepsilon$.\footnote{This means that we have a Las-Vegas algorithm that samples from $\tilde p$ with probability at least $1-\delta$: the algorithm outputs an error message with probability at most $\delta$, but when it does not output an error message then it returns a sample from the distribution $\tilde p$.}
\end{itemize}
We write $\costrsq{u}$ the maximum among $\costq{u}$, the cost of getting $\estnorm{u}$ and the cost of sampling from $\tilde p$.
\end{definition}

The second one is a more significant generalization based on the concept of oversampling from \cite{Drineas+SICOMP06}. This is a generalization of the definition from \cite{Chia+JACM22}, which corresponds to the case $\varepsilon=0$. This generalization is convenient since (as we will show later) it has several useful closure properties.

\begin{definition}\label{def:over-sample}
For a nonzero vector $u\in \Comp^{n}$ and parameters $\varepsilon\ge 0$ and $\phi\ge 1$, we say that we have $(\varepsilon,\phi)$-approximate oversampling-and-query access to $u$, which we write $\OSQ{\varepsilon,\phi}(u)$, if the following two conditions are satisfied:
\begin{itemize}
\item[(i)]
we have $\Q{}(u)$;
\item[(ii)]
we have $\SQ{\varepsilon}(\bar u)$ for a vector $\bar u\in\Comp^n$ such that $\norm{\bar u}^2=\phi\norm{u}^2$ and $\abs{\bar u(i)}\ge \abs{u(i)}$ for all $i\in\set{1}{n}$.
\end{itemize}
We denote $\costosq{u}=\max\{\costq{u},\costsq{\bar u}\}$.
\end{definition}

\subsection{Sampling-and-query access to matrices}\label{sub:def2}
We now define sampling access to matrices. We first state the definition of query access to a matrix, which is a direct generalization of the concept of query access to a vector (Definition \ref{def:q-access}). 
\begin{definition}\label{def:q-access-mat}
For a nonzero matrix $A\in \Comp^{m\times n}$, we say that we have query access to $A$, which we write $\Q{}(A)$, if on input $(i,j)\in\{1,\ldots,m\}\times\{1,\ldots,n\}$ we can query the entry $A(i,j)$. We denote $\costq{A}$ the cost of implementing such a query.
\end{definition}

For a matrix $A\in \Comp^{m\times n}$, we denote $\row{A}$ the vector in $\Comp^m$  such that the $i$-th coordinate is $\norm{A(i,\cdot)}$, for each $i\in\set{1}{m}$. We now introduce our central definition of approximate sampling-and-query access to a matrix, which corresponds to approximate sampling-and-query access to all rows of $A$ and also to $\row{A}$. This is a generalization of the standard definition from the literature, which corresponds to the case $\varepsilon=0$.

\begin{definition}\label{def:qs-access-mat}
For a nonzero matrix $A\in \Comp^{m\times n}$ and a parameter $\varepsilon\ge 0$, we say that we have $\varepsilon$-approximate sampling-and-query access to $A$, which we write $\SQ{\varepsilon}(A)$, if the following two conditions are satisfied:
\begin{itemize}
\item[(i)]
for each $i\in\{1,\ldots,m\}$ such that $\norm{A(i,\cdot)}\neq 0$, we have $\SQ{\varepsilon}(A(i,\cdot))$;
\item[(ii)]
we have $\SQ{\varepsilon}(\row{A})$.
\end{itemize} 
We denote $\costsq{A}=\max\{\costsq{\row{A}},\costsq{A(1,\cdot)},\dots,\costsq{A(m,\cdot)}\}$. When $\varepsilon=0$, we simply write $\SQ{}(A)$.
\end{definition}
Note that from Condition (i) in Definition \ref{def:qs-access-mat} we automatically get $\Q{}(A)$, and from Condition~(ii)  we automatically get the norm $\fnorm{A}$.

We finally define oversampling access to a matrix, in a way similar to how oversampling access to a vector is defined. The case $\varepsilon=0$ corresponds to the model considered in \cite{Chia+JACM22}.

\begin{definition}\label{def:qos-access-mat}
For a nonzero matrix $A\in \Comp^{m\times n}$ and parameters $\varepsilon\ge 0$ and $\phi\ge 1$, we say that we have $(\varepsilon,\phi)$-approximate oversampling-and-query access to $A$, which we write $\OSQ{\varepsilon,\phi}(A)$, if the following two conditions are satisfied:
\begin{itemize}
\item[(i)]
we have $\Q{}(A)$;
\item[(ii)]
we have $\SQ{\varepsilon}(\bar A)$ for a nonzero matrix $\bar A\in\Comp^{m\times n}$ such that $\abs{\bar A(i,j)}\ge \abs{A(i,j)}$ for all $(i,j)\in\set{1}{m}\times \set{1}{n}$ and $\fnorm{\bar A}=\phi\fnorm{A}$.
\end{itemize} 
We denote $\costosq{A}=\max\{\costq{A},\costsq{\bar A}\}$.
\end{definition}

\section{Warming-Up: Robust Estimation of the Inner Product}\label{sec:warmingup}
As a warming-up, in this section we describe a special case illustrating one of the main ideas of this paper: how to estimate the inner product when given approximate sampling-and-query access to one vector and query access to the other. Here is the statement of the result:

\begin{proposition}\label{th:IP}
For any $\varepsilon\in[0,1]$, assume that  
\begin{itemize}
\item 
we have $\SQ{\varepsilon}(u)$ for a nonzero vector $u\in\Comp^n$;
\item 
we have $\Q{}(v)$ for a nonzero vector $v\in\Comp^n$ and know a value $\estnorm{v}$ such that $\estnorm{v}\ge\norm{v}$.
\end{itemize}
For any $\delta\in(0,1)$ and any $\xi>0$, we can output an estimator $\alpha$ such that 
\[
\abs{\alpha-(u,v)}\le(2\sqrt{2\varepsilon}+\xi)\norm{u}\estnorm{v}
\]
holds with probability at least $1-\delta$ at cost $O\left(\frac{\log(1/\delta)}{\xi^2}(\costsq{u}+\costq{v})\right)$.
\end{proposition}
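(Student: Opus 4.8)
The plan is to follow the strategy outlined in the overview: sample an index $i\in\set{1}{n}$ from the approximate distribution $\tilde p_u$ available through $\Samp{\varepsilon}(u)$, and use the \emph{truncated} estimator that outputs $0$ whenever $\abs{u(i)}$ is small relative to $\abs{v(i)}$ and outputs $v(i)^\ast u(i)/\abs{u(i)}^2$ (equivalently $v(i)^\ast/u(i)^\ast$, suitably normalized) otherwise. Concretely I would fix the threshold $\theta(i)=\sqrt{2\varepsilon}\,\abs{v(i)}\norm{u}/\estnorm{v}$ (the normalization makes the argument scale-free, so WLOG one can pretend $\norm{u}=\estnorm{v}=1$ during the analysis), set $\Gamma=\{i : \abs{u(i)}\le\theta(i)\}$, and define the single-sample estimator $Z$ that equals $v(i)^\ast u(i)/\abs{u(i)}^2$ for $i\notin\Gamma$ and $0$ for $i\in\Gamma$, with $i\sim\tilde p_u$.

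The core of the proof is bounding $\abs{\ex{Z}-(u,v)}$. I would split the error into two pieces. First, the \emph{sampling bias}: if we sampled from the ideal $p_u$ instead, the contribution over $i\notin\Gamma$ would be $\sum_{i\notin\Gamma}u(i)v(i)^\ast$. The difference from using $\tilde p_u$ is $\sum_{i\notin\Gamma}(\tilde p_u(i)-p_u(i))\frac{v(i)^\ast u(i)}{\abs{u(i)}^2}$; on the complement of $\Gamma$ we have $\abs{u(i)}>\theta(i)=\sqrt{2\varepsilon}\abs{v(i)}$, so $\bigl|\frac{v(i)^\ast u(i)}{\abs{u(i)}^2}\bigr|=\frac{\abs{v(i)}}{\abs{u(i)}}<\frac{1}{\sqrt{2\varepsilon}}$, and summing against $\abs{\tilde p_u(i)-p_u(i)}$ (whose total is $\le 2\varepsilon$, twice the total variation distance) yields a bias $\le \frac{2\varepsilon}{\sqrt{2\varepsilon}}=\sqrt{2\varepsilon}$ after reinstating the norm factors. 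Second, the \emph{truncation bias}: the terms we dropped contribute $|\sum_{i\in\Gamma}u(i)v(i)^\ast|\le\sum_{i\in\Gamma}\abs{u(i)}\abs{v(i)}\le\sum_{i\in\Gamma}\theta(i)\abs{v(i)}=\sqrt{2\varepsilon}\sum_{i\in\Gamma}\abs{v(i)}^2\le\sqrt{2\varepsilon}\norm{v}^2\le\sqrt{2\varepsilon}\norm{u}\estnorm{v}$ once we undo the scaling (using $\norm{v}\le\estnorm{v}$ and, in the normalized picture, $\norm{u}=1$). Adding the two gives $\abs{\ex{Z}-(u,v)}\le 2\sqrt{2\varepsilon}\norm{u}\estnorm{v}$.

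Next I would control the variance: $\var{Z}\le\ex{\abs{Z}^2}=\sum_{i\notin\Gamma}\tilde p_u(i)\frac{\abs{v(i)}^2\abs{u(i)}^2}{\abs{u(i)}^4}=\sum_{i\notin\Gamma}\tilde p_u(i)\frac{\abs{v(i)}^2}{\abs{u(i)}^2}$. Here I use $\tilde p_u(i)\le p_u(i)+(\tilde p_u(i)-p_u(i))$; the $p_u$ part contributes $\sum_i\abs{v(i)}^2=\norm{v}^2\le\norm{u}^2\estnorm{v}^2$ (after normalization), and the discrepancy part contributes at most $\frac{1}{2\varepsilon}\cdot 2\varepsilon=1$ in the normalized picture (same calculation as the sampling-bias bound, now with the square), so $\var{Z}=O(\norm{u}^2\estnorm{v}^2)$. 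Averaging $N=\Theta(1/\xi^2)$ independent copies of $Z$ reduces the standard deviation to $\xi\norm{u}\estnorm{v}/c$ for a suitable constant; Chebyshev then gives $\abs{\bar Z-\ex{Z}}\le\frac{\xi}{2}\norm{u}\estnorm{v}$ with probability $\ge 3/4$, and combining with the bias bound, $\abs{\bar Z-(u,v)}\le(2\sqrt{2\varepsilon}+\xi)\norm{u}\estnorm{v}$ with probability $\ge 3/4$. Finally I apply the powering lemma (Lemma~\ref{lemma:powering}) — repeating $O(\log(1/\delta))$ times and taking coordinatewise medians — to boost the success probability to $1-\delta$ (absorbing the harmless $\sqrt 2$ into the constants in $\xi$). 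Each sample costs $O(\costsq{u}+\costq{v})$ (one sample from $\tilde p_u$, one query to $u(i)$, one query to $v(i)$, and the value $\norm{u}$), so the total cost is $O\bigl(\frac{\log(1/\delta)}{\xi^2}(\costsq{u}+\costq{v})\bigr)$, as claimed.

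The main obstacle is the bias analysis, specifically the realization that the threshold must be tied to $\abs{v(i)}$ rather than to $\tilde p_u(i)$ (which is inaccessible): one needs the two biases to come out ``complementary'' — one $\propto\gamma$, one $\propto 1/\gamma$ with $\gamma$ the threshold constant — so that optimizing over $\gamma$ yields the clean $2\sqrt{2\varepsilon}$ factor. Everything else (variance bound, Chebyshev, powering) is routine. One technical point to handle carefully is the case $v(i)=0$, where $\theta(i)=0$ forces $i\in\Gamma$ exactly when $u(i)=0$ as well, so such indices contribute nothing to either bias or variance and can be safely folded into $\Gamma$; and when $u$ and $v$ are complex one should keep track of conjugates so that $\ex{Z}$ genuinely targets $(u,v)=\sum_i u(i)^\ast v(i)$ (or $\sum_i u(i)v(i)^\ast$, per the paper's convention), which only affects bookkeeping, not the bounds.
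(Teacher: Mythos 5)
Your proposal is correct and follows essentially the same route as the paper's proof: the same truncation set $\Gamma$ with threshold proportional to $\sqrt{2\varepsilon}\abs{v(i)}$, the same split of the bias into a sampling part and a truncation part each bounded by $\sqrt{2\varepsilon}\norm{u}\estnorm{v}$, the same variance bound via splitting $\tilde p_u$ into $p_u$ plus a discrepancy term, and the same Chebyshev-plus-powering-lemma amplification with the stated cost. The only cosmetic differences (bounding the truncation bias directly through the threshold rather than via Cauchy--Schwarz, and working in the normalized picture $\norm{u}=\estnorm{v}=1$) do not change the argument.
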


Proposition \ref{th:IP} already generalizes the inner product estimation algorithm from \cite[Proposition 4.2]{TangSTOC19}, which worked only for the case of perfect sampling-and-query access (i.e., $\varepsilon=0$). The key idea to prove Proposition \ref{th:IP} is to partition the coordinates into two carefully-chosen sets (the sets $\Gamma$ and $\set{1}{n}\setminus \Gamma$ in our proof) and define the estimator differently on these two sets. This idea will also be used to show the matrix multiplication algorithm of Theorem \ref{th:MM1}. Note that by applying Theorem \ref{th:MM1}, we will give in Section \ref{sub:IP} another inner product estimation algorithm (see Proposition~\ref{prop:IP}), which is more general than Proposition \ref{th:IP} since it applies even to the case of approximate oversampling-and-query access.

\begin{proof}[Proof of Proposition \ref{th:IP}]
Let us define the set
\[
\Gamma=\left\{
i\in\set{1}{n}\:|\: \abs{v(i)}\ge \frac{\estnorm{v}}{\sqrt{2\varepsilon}\norm{u}}\abs{u(i)}
\right\}.
\]
Note that $u(i)\neq 0$ and
\[
\frac{\abs{v(i)}}{\abs{u(i)}}\le \frac{\estnorm{v}}{\sqrt{2\varepsilon}\norm{u}}
\]
hold for all $i\notin \Gamma$.
Let $X$ be the random variable representing the output of the following process: sample an index $i\in\set{1}{n}$ from the probability distribution $\tilde p_u$ and output
\[
\begin{cases}
0&\textrm{ if } i\in \Gamma,\\
\frac{v(i)^\ast\norm{u}^2}{u(i)^\ast}& \textrm{ if }i\notin \Gamma.
\end{cases}
\]
We have
\begin{align*}
    \abs{\ex{X} - (u,v)}
    &=
    \abs{\sum_{i\notin \Gamma}\frac{v(i)^\ast\norm{u}^2}{u(i)^\ast} \tilde p_u(i) - \sum_{i=1}^n u(i)v(i)^\ast}\\
    &\le
    \abs{\sum_{i\notin \Gamma}\frac{v(i)^\ast\norm{u}^2}{u(i)^\ast} \tilde p_u(i)-\sum_{i\notin \Gamma}\frac{v(i)^\ast\norm{u}^2}{u(i)^\ast} p_u(i)}
+
\abs{\sum_{i\notin \Gamma}\frac{v(i)^\ast\norm{u}^2}{u(i)^\ast} p_u(i)-\sum_{i=1}^n u(i)v(i)^\ast}
\\
     &\le
\frac{\norm{u}^2\estnorm{v}}{\sqrt{2\varepsilon}\norm{u}}
\sum_{i\notin \Gamma}\abs{\tilde p_u(i)- p_u(i)}
+
      \abs{\sum_{i\in \Gamma}u(i)v(i)^\ast}\\
      &\le
      \frac{\norm{u}\estnorm{v}}{\sqrt{2\varepsilon}}2\varepsilon
      +
      \sqrt{
      \left(
      \sum_{i\in \Gamma}
      \abs{u(i)}^2
     \right)
      \left(
      \sum_{i\in \Gamma}
      \abs{v(i)}^2
      \right)
      }\\
      &\le
      \frac{\norm{u}\estnorm{v}}{\sqrt{2\varepsilon}}2\varepsilon
	+
      \sqrt{
      \left(
      \sum_{i\in \Gamma}
	\frac{2\varepsilon\norm{u}^2}{\estnorm{v}^2}
      \abs{v(i)}^2
      \right)
      \left(
      \sum_{i\in \Gamma}
      \abs{v(i)}^2
     \right)
      }\\
    &\le
\sqrt{2\varepsilon} \norm{u}\estnorm{v}
       + 
       \sqrt{2\varepsilon}\norm{u}\norm{v}\\
    &=
    2\sqrt{2\varepsilon} \norm{u}\estnorm{v}.
\end{align*}

We now compute the variance:
\begin{align*}
    \var{X}&=\ex{\abs{X}^2}-\abs{\ex{X}}^2\\
    &\le \ex{\abs{X}^2}\\
    &=\sum_{i\notin \Gamma} \frac{\abs{v(i)}^2\norm{u}^4}{\abs{u(i)}^2} \tilde p_u(i)\\
    &=\sum_{i\notin \Gamma}  \frac{\abs{v(i)}^2\norm{u}^4}{\abs{u(i)}^2}(\tilde p_u(i)-p_u(i))+
\sum_{i\notin \Gamma}  \frac{\abs{v(i)}^2\norm{u}^4}{\abs{u(i)}^2}p_u(i)\\
    &=\sum_{i\notin \Gamma}  \frac{\abs{v(i)}^2\norm{u}^4}{\abs{u(i)}^2}(\tilde p_u(i)-p_u(i))+
\sum_{i\notin \Gamma}  \abs{v(i)}^2\frac{\norm{u}^4}{\norm{u}^2}\\
    &\le\frac{\norm{u}^2\estnorm{v}^2}{2\varepsilon}2\stat{\tilde p_u-p_u}+\norm{u}^2\norm{v}^2\\
    &\le 2\norm{u}^2\estnorm{v}^2.
\end{align*}

Finally, we explain how to compute our estimate of the inner product $(u,v)$.
Take $r=\ceil{16/\xi^2}$ independent samples from $X$ and denote the corresponding random variables as $X_1,\ldots,X_r$. Consider the random variable $Z=\frac{X_1+\cdots X_r}{r}$. By Chebyshev's inequality we have
\[
\Prob{\abs{Z-E[Z]}>\frac{\xi}{\sqrt{2}}\norm{u}\estnorm{v}}\le \frac{2\var{Z}}{\xi^2\norm{u}^2\estnorm{v}^2}
=
\frac{2 \var{X}}{r\xi^2 \norm{u}^2\estnorm{v}^2}
\le
\frac{1}{4}.
\]
We then use Lemma \ref{lemma:powering} to compute an estimate $\alpha$ such that 
\[
\Prob{\abs{\alpha-E[Z]}\le\xi\norm{u}\estnorm{v}}\ge 1-\delta.
\]
By the triangle inequality we conclude that
\[
\Prob{\abs{\alpha-(u,v)}\le \left(2\sqrt{2\varepsilon}+\xi\right)\norm{u}\estnorm{v}}\le 1-\delta.
\]
The overall complexity is 
\[
O\left(r\log(1/\delta)(\costsq{u}+\costq{v})\right)
=O\left(\frac{\log(1/\delta)}{\xi^2}(\costsq{u}+\costq{v})\right),
\]
as claimed.
\end{proof}


\section{Matrix Multiplication using Importance Matrix Sketches}
In this section we introduce several notions of approximate matrix sketching and show how to use them to approximate matrix products and singular value transformations. 
\subsection{Approximate importance matrix sketches  and matrix multiplication}\label{sub:MM1}
In this subsection we define the basic notion of approximate matrix sketch needed for this work and show how to use it to approximate the matrix product.

Here is the first key definition.
\begin{definition}\label{def:S}
Let $r,m$ be two positive integers such that $r\le m$, and $p, \tilde p$ be two probability distributions over $\set{1}{m}$. The $r\times m$ matrix sampled according to $(p,\tilde p)$ is the matrix $S\in\Real^{r\times m}$ obtained by the following process: for each $i\in\{1,\ldots,r\}$, choose an index $s_i\in\set{1}{m}$ by sampling from the distribution $\tilde p$, and set the $i$-th row of $S$ as 
\[
S(i,\cdot)=
\begin{cases}
\frac{e_{s_i}}{\sqrt{r p(s_i)}}&\textrm{if }p(s_i)> 0,\\
{\mathbf 0}&\textrm{if }p(s_i)=0,
\end{cases}
\]
where $e_{s_i}\in\Comp^{1\times n}$ is the row-vector that has coordinate 1 in the $s_i$-th position and 0 elsewhere, and ${\mathbf 0}$ denotes the all-zero row-vector.\footnote{For our purpose, the definition of $S(i,\cdot)$ for the case $p(s_i)=0$ is actually arbitrary. Indeed, this case will never occur in our calculations since we will work on the support of the distribution $p$.} We call the list $((s_1,\alpha_1),\ldots,(s_r,\alpha_r))$, where $\alpha_i=1/\sqrt{r p(s_i)}$ if $p(s_i)>0$ and $\alpha_i=0$ otherwise, the standard description of $S$.
\end{definition}

We now define approximate importance matrix sketches.
\begin{definition}\label{def:imp}
Given a nonzero matrix $A\in \Comp^{m\times n}$, a positive integer $r\le m$ and two parameters $\varepsilon\in[0,1]$ and $\phi\ge 1$, an $(r, \varepsilon,\phi)$-approximate importance matrix sketch of $A$ is a matrix $S\in\Real^{r\times m}$ that is sampled according to $(p,\tilde p)$ for some probability distributions $p,\tilde p$ satisfying the following two conditions:
\begin{itemize}
\item
$p (i)\ge \frac{1}{\phi}p_{\row{A}}(i)$ for all $i\in\set{1}{m}$;
\item
$\stat{\tilde p-p}\le \varepsilon$.
\end{itemize}
\end{definition}

This notion of approximate importance matrix sketch is motivated by the following definition.
\begin{definition}\label{def:impa}
Consider a nonzero matrix $A\in \Comp^{m\times n}$ with $\OSQ{\varepsilon,\phi}(A)$, and an integer $r\le  m$. Let $\bar A$ denote the matrix from Definition \ref{def:qos-access-mat} and $\tilde p_{\brow{A}}$ denote the distribution such that $\stat{\tilde p_{\brow{A}}-p_{\brow{A}}}\le \varepsilon$ corresponding to $\SQ{\varepsilon,\phi}(\bar A)$.
The $r\times m$ matrix sampled according to $(p_{\brow{A}},\tilde p_{\brow{A}})$
is called the $(r,\varepsilon,\phi)$-approximate importance matrix sketch of $A$ associated with $\OSQ{\varepsilon,\phi}(A)$.
\end{definition}
It is straightforward to check that the matrix sketch of Definition \ref{def:impa} is an $(r,\varepsilon,\phi)$-approximate importance matrix sketch of~$A$ according to Definition \ref{def:imp}.\footnote{We have 
\[
p_{\brow{A}}(i)=\frac{\norm{\bar A(i,\cdot)}^2}{\fnorm{\bar A}^2}\ge \frac{\norm{A(i,\cdot)}^2}{\phi\fnorm{A}^2}=\frac{1}{\phi}p_{\row{A}}(i)
\]
for all $i\in\set{1}{m}$.} Definition \ref{def:impa} is the special case of Definition \ref{def:imp} that is relevant for algorithmic applications when we can access a matrix via approximate oversampling-and-query access. Since all the results of this subsection work for Definition \ref{def:imp}, we nevertheless present our results using this more general (and mathematically more elegant) definition.

The following lemma, which extends Lemma 4.2 in \cite{Chia+JACM22} to the setting of approximate importance matrix sketches, shows that if $S$ is an $(r, \varepsilon,\phi)$-approximate importance matrix sketch of $A$ with~$r$ large enough and $\phi$ small enough, then $\fnorm{SA}$ is a good approximation of $\fnorm{A}$.
\begin{lemma}\label{lemma:norm}
Given a matrix $A\in \Comp^{m\times n}$, a positive integer $r\le m$ and two parameters $\varepsilon\in[0,1]$ and $\phi\ge 1$, consider an $(r, \varepsilon,\phi)$-approximate importance matrix sketch $S$ of $A$. We have 
\begin{equation}\label{eq:up}
\norm{[SA](i,\cdot)}^2\le \frac{\phi\fnorm{A}^2}{r}
\end{equation}
for all $i\in\set{1}{r}$. Moreover,
for any $\delta\in(0,1]$ we have
\[
\Prob{\abs{\fnorm{SA}^2-\fnorm{A}^2}\le \left(2\varepsilon+\sqrt{\frac{\ln{(2/\delta)}}{2r}}\right)\phi\fnorm{A}^2}\ge 1-\delta.
\]
\end{lemma}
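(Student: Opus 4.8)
The plan is to prove the two parts of Lemma~\ref{lemma:norm} separately, both by direct computation exploiting the structure of an $(r,\varepsilon,\phi)$-approximate importance matrix sketch. For the deterministic bound~\eqref{eq:up}, I would just unfold the definition of $S$: the $i$-th row of $SA$ equals $\frac{1}{\sqrt{r\,p(s_i)}}A(s_i,\cdot)$ (on the support of $p$), so $\norm{[SA](i,\cdot)}^2 = \frac{\norm{A(s_i,\cdot)}^2}{r\,p(s_i)}$. Using the oversampling guarantee $p(s_i)\ge \frac{1}{\phi}p_{\row{A}}(s_i) = \frac{\norm{A(s_i,\cdot)}^2}{\phi\fnorm{A}^2}$ immediately yields $\norm{[SA](i,\cdot)}^2 \le \frac{\phi\fnorm{A}^2}{r}$, which is exactly~\eqref{eq:up}. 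This part is routine.

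For the concentration statement, I would first compute $\ex{\fnorm{SA}^2}$. Since $\fnorm{SA}^2 = \sum_{i=1}^r \norm{[SA](i,\cdot)}^2$ and the rows are i.i.d., it suffices to compute $\ex{\norm{[SA](1,\cdot)}^2}$, where the index $s_1$ is drawn from $\tilde p$. We get $\ex{\norm{[SA](1,\cdot)}^2} = \sum_{j\in\supp{p}} \tilde p(j)\frac{\norm{A(j,\cdot)}^2}{r\,p(j)}$. The ``ideal'' value would be obtained by replacing $\tilde p$ with $p$, giving $\sum_{j\in\supp{p}}\frac{\norm{A(j,\cdot)}^2}{r} = \frac{\fnorm{A}^2}{r}$ (assuming, as the footnote notes, that we work on the support of $p$, which contains the support of $p_{\row{A}}$ thanks to the oversampling condition). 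So the bias is $\big|\ex{\fnorm{SA}^2} - \fnorm{A}^2\big| = \big|\sum_j (\tilde p(j) - p(j))\frac{\norm{A(j,\cdot)}^2}{p(j)}\big| \le \sum_j |\tilde p(j)-p(j)|\cdot\frac{\norm{A(j,\cdot)}^2}{p(j)} \le \phi\fnorm{A}^2 \sum_j |\tilde p(j)-p(j)| = 2\varepsilon\phi\fnorm{A}^2$, where I again used $\frac{\norm{A(j,\cdot)}^2}{p(j)}\le \phi\fnorm{A}^2$ and then $\stat{\tilde p - p}\le\varepsilon$. This gives the $2\varepsilon$ term in the final bound.

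The remaining ingredient is a high-probability bound on $\big|\fnorm{SA}^2 - \ex{\fnorm{SA}^2}\big|$. Here $\fnorm{SA}^2 = \sum_{i=1}^r Z_i$ where $Z_i = \norm{[SA](i,\cdot)}^2$ are independent and, by~\eqref{eq:up}, satisfy $0\le Z_i\le \frac{\phi\fnorm{A}^2}{r}$. Hoeffding's inequality then gives $\Prob{\big|\sum_i Z_i - \ex{\sum_i Z_i}\big| \ge t} \le 2\exp\!\big(-\frac{2t^2}{r\,(\phi\fnorm{A}^2/r)^2}\big) = 2\exp\!\big(-\frac{2rt^2}{\phi^2\fnorm{A}^4}\big)$. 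Setting this equal to $\delta$ and solving gives $t = \sqrt{\frac{\ln(2/\delta)}{2r}}\,\phi\fnorm{A}^2$. Combining with the bias bound via the triangle inequality: with probability at least $1-\delta$,
\[
\big|\fnorm{SA}^2 - \fnorm{A}^2\big| \le \big|\fnorm{SA}^2 - \ex{\fnorm{SA}^2}\big| + \big|\ex{\fnorm{SA}^2} - \fnorm{A}^2\big| \le \Big(2\varepsilon + \sqrt{\tfrac{\ln(2/\delta)}{2r}}\Big)\phi\fnorm{A}^2,
\]
as claimed. The only mild subtlety — and the step I would be most careful about — is the support bookkeeping: I must make sure that the oversampling hypothesis $p(i)\ge\frac{1}{\phi}p_{\row{A}}(i)$ guarantees $\supp{p_{\row{A}}}\subseteq\supp{p}$, so that no row of $A$ with nonzero norm is ever ``missed'' when we restrict sums to $\supp{p}$, and that the $p(s_i)=0$ branch of Definition~\ref{def:S} genuinely never contributes (it can only fire on indices outside $\supp{p_{\row{A}}}$, where $\norm{A(s_i,\cdot)}=0$ anyway). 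Everything else is a routine application of Hoeffding's inequality plus the triangle inequality.
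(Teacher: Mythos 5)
Your proof is correct and follows essentially the same route as the paper's: the deterministic bound from the oversampling condition, the bias bound of $2\varepsilon\phi\fnorm{A}^2$ via the total variation distance, Hoeffding's inequality on the $r$ independent summands bounded by $\phi\fnorm{A}^2/r$, and the triangle inequality; your explicit support bookkeeping is also the same implicit argument the paper relies on. (Only cosmetic nit: your final $\sum_j\abs{\tilde p(j)-p(j)}=2\varepsilon$ should be an inequality $\le 2\varepsilon$.)
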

\begin{proof}
For any $i\in\set{1}{r}$, we have
\[
\norm{[SA](i,\cdot)}^2=
\frac{\norm{A(s_i,\cdot)}^2}{rp(s_i)}\le 
\frac{\phi \norm{A(s_i,\cdot)}^2}{rp_{\row{A}}(s_i)}= 
\frac{\phi\fnorm{A}^2}{r}
\]
if $p(s_i)>0$. If  $p(s_i)=0$ we have $\norm{[SA](i,\cdot)}=0$, which trivially satisfies the  inequality.

Let us now prove the second part. We have
\begin{align*}
\abs{
\ex{\norm{[SA](i,\cdot)}^2}
-
\frac{\fnorm{A}^2}{r}
}
&=
\abs{
\sum_{s\in\supp{p}} \tilde p(s)
\frac{\norm{A(s,\cdot)}^2}
{rp(s)}
-
\frac{\sum_{s\in\supp{p}}\norm{A(s,\cdot)}^2}
{r}
}\\
&=
\abs{
\sum_{s\in\supp{p}} (\tilde p(s)-p(s))
\frac{\norm{A(s,\cdot)}^2}
{rp(s)}
}\\
&\le
\sum_{s\in\supp{p}}
\abs{
\tilde p(s)-
 p(s)}
\frac{\norm{A(s,\cdot)}^2}
{rp(s)}\\
&\le
\sum_{s\in\supp{p}}
\abs{
\tilde p(s)-
 p(s)}
\frac{\phi\fnorm{A}^2}{r}\\
&\le 
\frac{2\varepsilon\phi\fnorm{A}^2}{r},
\end{align*}
where we used again $p(s)\ge \frac{1}{\phi}p_{\row{A}}(s)$ to derive the second inequality.
We thus have
\[
\abs{
\ex{\norm{SA}^2}
-
\fnorm{A}^2
}
\le 
2\varepsilon\phi\fnorm{A}^2.
\]
Observe that the quantity $\norm{SA}^2$ is a sum of $r$ independent random variables, each in the interval $[0, \phi\fnorm{A}^2/r]$ (the upper bound follows from Equation (\ref{eq:up})). From Hoeffding's inequality, we conclude that
\[
\Prob{\abs{\fnorm{SA}^2-\ex{\fnorm{SA}^2}}\ge \sqrt{\frac{\ln{(2/\delta)}}{2r}}\phi\fnorm{A}^2}\le \delta.
\]
From the triangle inequality we get
\[
\Prob{\abs{\fnorm{SA}^2-\fnorm{A}^2}\le \left(2\varepsilon+\sqrt{\frac{\ln{(2/\delta)}}{2r}}\right)\phi\fnorm{A}^2}\ge 1-\delta,
\]
as claimed.
\end{proof}

Here is the main result of this subsection, which shows how to approximate a matrix product $X^\dagger Y$ by using an approximate matrix sketch of $X$.
\addtocounter{theorem}{-3}
\begin{theorem}[formal statement]
Given two nonzero matrices $X\in \Comp^{m\times n}$ and $Y\in\Comp^{m\times n'}$, a positive integer $r\le m$ and two parameters $\varepsilon\in[0,1]$ and $\phi\ge 1$, define the set 
\[
\Gamma=\left\{
j\in\set{1}{m}\:|\: \norm{Y(j,\cdot)}\ge \frac{\fnorm{Y}}{\sqrt{2\phi\varepsilon}\fnorm{X}}\norm{X(j,\cdot)}
\right\}.
\]
Consider an $(r, \varepsilon,\phi)$-approximate importance matrix sketch $\Sigma$ of $X$. Let $\bar\Sigma\in\Real^{r\times m}$ be the matrix obtained from $\Sigma$ by the following process: for each $i\in\set{1}{r}$, if $i\in \Gamma$ then replace the $i$-th row of $\Sigma$ by an all-zero row.
 Then for any $\delta>0$,
\[
\Prob{
\fnorm{X^\dagger \bar\Sigma^\dagger \bar\Sigma Y-X^\dagger Y}\le \left(2\sqrt{2\varepsilon}+\sqrt{\frac{2}{r\delta}}\right) \sqrt{\phi}\fnorm{X}\fnorm{Y}}\ge 1-\delta.
\]
\end{theorem}
\addtocounter{theorem}{3}
The informal version of Theorem \ref{th:MM1} given in the introduction corresponds to the case $\phi=1$ (the claim that $\bar\Sigma$ can be efficiently constructed follows from Lemma \ref{lemma:createsk} in Section \ref{sub:app1}).
\begin{proof}[Proof of Theorem \ref{th:MM1}]
At a high level, our proof applies a strategy similar to the strategy used in Lemma~4 in~\cite{Drineas+SICOMP06} (and also used in~\cite{Chia+JACM22}). In particular, we compute the expectation of the sketch, its variance and conclude using Chebyshev's inequality. Significant additional care is needed in the analysis since our we are not working with $\Sigma$ but with $\bar \Sigma$, and those sketches are not obtained by sampling from the perfect distribution but only from a distribution close to it, which leads to a biased estimator instead of an unbiased estimator in those prior works.

Let $p$ and $\tilde p$ denote the probability distributions from Definition \ref{def:imp}. 
Observe that for any $j\notin \Gamma$ we necessarily have $\norm{X(j,\cdot)}>0$, and thus $p(j)>0$ since $p(j)\ge \frac{1}{\phi}p_{\row{X}}(j)$. 

We first show that the expectation of $X^\dagger \bar\Sigma^\dagger \bar \Sigma Y$ is close to $X^\dagger Y$:
\begin{align*}
\fnorm{
\ex{X^\dagger \bar\Sigma^\dagger \bar\Sigma Y}-X^\dagger Y
}&=
\fnorm{
\ex{\sum_{i=1}^r [(\bar\Sigma X)^\dagger](\cdot,i) [\bar\Sigma Y](i,\cdot)}
-\sum_{j=1}^mX^\dagger(\cdot,j) Y(j,\cdot)
}\\
&=
\fnorm{
\ex{\sum_{i=1}^r ([\bar\Sigma X](i,\cdot))^\dagger [\bar\Sigma Y](i,\cdot)}
-\sum_{j=1}^mX(j,\cdot)^\dagger Y(j,\cdot)
}\\
&=
\fnorm{
r\ex{([\bar\Sigma X](1,\cdot))^\dagger [\bar\Sigma Y](1,\cdot)}
-\sum_{j=1}^mX(j,\cdot)^\dagger Y(j,\cdot)
}\\
&=
\fnorm{
r\sum_{j\notin \Gamma} \tilde p(j)\frac{X(j,\cdot)^\dagger Y(j,\cdot)}{r p(j)}
-\sum_{j=1}^mX(j,\cdot)^\dagger Y(j,\cdot)
}\\
&=
\fnorm{
\sum_{j\notin \Gamma}\left(\tilde p(j)-p(j)\right)
\frac{X(j,\cdot)^\dagger Y(j,\cdot)}{p(j)}
-
\sum_{j\in \Gamma}X(j,\cdot)^\dagger Y(j,\cdot)
}\\
&\le
\sum_{j\notin \Gamma}\abs{\tilde p(j)-p(j)} \frac{\phi\fnorm{X}^2\norm{Y(j,\cdot)}}{\norm{X(j,\cdot)}}+
\sum_{j\in \Gamma}\norm{X(j,\cdot)}\norm{Y(j,\cdot)}
\\
&\le
\frac{\phi\fnorm{X}\fnorm{Y}}{\sqrt{2\phi\varepsilon}}2\varepsilon+\sum_{j\in \Gamma}\frac{\sqrt{2\phi\varepsilon}\fnorm{X}}{\fnorm{Y}}\norm{Y(j,\cdot)}^2\\
&\le 2\sqrt{2\phi\varepsilon}\fnorm{X}\fnorm{Y}.
\end{align*}

Let us now compute the variance:
\begin{align*}
\var{X^\dagger \bar \Sigma^\dagger \bar \Sigma Y}
&\le \ex{\fnorm{X^\dagger \bar \Sigma^\dagger \bar \Sigma Y}^2}\\
&=  \ex{\sum_{i=1}^n\sum_{j=1}^{n'} \abs{[X^\dagger \bar \Sigma^\dagger \bar \Sigma Y](i,j)}^2}\\
&=  \ex{\sum_{i=1}^n\sum_{j=1}^{n'}\sum_{k=1}^r \abs{[X^\dagger \bar\Sigma^\dagger](i,k) [\bar\Sigma Y](k,j)}^2}\\
&=  r\ex{\sum_{i=1}^n\sum_{j=1}^{n'}\abs{[\bar\Sigma X](1,i) [\bar\Sigma Y](1,j)}^2}\\
&=  r\ex{\norm{[\bar\Sigma X](1,\cdot)}^2\norm{[\bar \Sigma Y](1,\cdot)}^2}\\
&=  r\sum_{j\notin \Gamma} \tilde p(j)\frac{\norm{X(j,\cdot)}^2\norm{Y(j,\cdot)}^2}{(rp(j))^2}\\
&=  r\sum_{j\notin \Gamma} (\tilde p(j)-p(j))\frac{\norm{X(j,\cdot)}^2\norm{Y(j,\cdot)}^2}{(rp(j))^2}+
 r\sum_{j\notin \Gamma} p(j)\frac{\norm{X(j,\cdot)}^2\norm{Y(j,\cdot)}^2}{(rp(j))^2}
\\
&\le  
r\sum_{j\notin \Gamma} (\tilde p(j)-p(j))\frac{\norm{X(j,\cdot)}^4\fnorm{Y}^2}{2\phi\varepsilon(rp(j))^2\fnorm{X}^2}+
\frac{\phi\fnorm{X}^2\fnorm{Y}^2}{r}.
\end{align*}
Using the inequality $p(j)\ge \frac{1}{\phi}p_{\row{X}}(j)$ and $p_{\row{X}}(j)=\frac{\norm{X(j,\cdot)}^2}{\fnorm{X}^2}$, we get
\[
\var{X^\dagger \bar \Sigma^\dagger \bar \Sigma Y}\le
\sum_{j\notin \Gamma} (\tilde p(j)-p(j))\frac{\phi\fnorm{X}^2\fnorm{Y}^2}{2\varepsilon r}+
\frac{\phi\fnorm{X}^2\fnorm{Y}^2}{r}
\le
\frac{2\phi\fnorm{X}^2\fnorm{Y}^2}{r}.
\]

Using Chebyshev's inequality we obtain
\[
\Prob{\fnorm{X^\dagger \bar\Sigma^\dagger \bar\Sigma Y -\ex{X^\dagger \bar\Sigma^\dagger \bar\Sigma Y}}\ge 
\sqrt{\frac{2\phi}{r\delta}}
\fnorm{X}\fnorm{Y}}\le \frac{\var{X^\dagger \bar\Sigma^\dagger \bar\Sigma Y}}{\frac{2\phi}{r\delta}\fnorm{X}^2\fnorm{Y}^2}=\delta.
\]
We conclude the proof by using the triangle inequality:
\[
\Prob{\fnorm{X^\dagger \bar\Sigma^\dagger \bar\Sigma Y -X^\dagger Y}\le \left(2\sqrt{2\varepsilon}+\sqrt{\frac{2}{r\delta}}\right)\sqrt{\phi}\fnorm{X}\fnorm{Y}}\ge
1-\delta,
\]
as claimed.
\end{proof}

\subsection{Approximate joint importance matrix sketches and matrix multiplication}\label{sub:MM}
In order to compute an approximation of the product of two matrices $A$ and $B$ with dependence on the error parameter $\delta$ better than in Theorem \ref{th:MM1}, we will need a slightly different notion of sketch that we call approximate \emph{joint} importance matrix sketch and define as follows. 
\begin{definition}\label{definition:mixed}
Given two nonzero matrices $A\in \Comp^{m\times n}$ and $B\in \Comp^{m\times n'}$, a positive integer $r\le m$ and three parameters $\varepsilon\in(0,1)$ and $\phi,\phi'\ge 1$, an $(r, \varepsilon,\phi,\phi')$-approximate joint importance matrix sketch of $A$ and~$B$ is a matrix~$S\in \Real^{r\times m}$ that is sampled according to $(p,\tilde p)$ for some probability distributions $p,\tilde p$ satisfying the following conditions:
\begin{itemize}
\item[(i)]
$p (i)\ge \frac{1}{2}(\frac{1}{\phi}p_{\row{A}}(i)+\frac{1}{\phi'}p_{\row{B}}(i))$ for all $i\in\set{1}{m}$;
\item[(ii)]
$\stat{\tilde p-p}\le \varepsilon$.
\end{itemize}
\end{definition}

This notion of approximate joint importance matrix sketch is motivated by the following definition.
\begin{definition}\label{definition:mixeda}
Consider two nonzero matrices $A\in \Comp^{m\times n}$, $B\in \Comp^{m\times n'}$ with $\OSQ{\varepsilon,\phi}(A)$, $\OSQ{\varepsilon,\phi'}(B)$ and an integer $r\le m$. Let $\bar A,\bar B$ denote the matrices from Definition \ref{def:qos-access-mat} and $\tilde p_{\brow{A}}, \tilde p_{\brow{B}}$ denote the distributions such that $\stat{\tilde p_{\brow{A}}-p_{\brow{A}}}\le \varepsilon$ and $\stat{\tilde p_{\brow{B}}-p_{\brow{B}}}\le \varepsilon$ corresponding to $\SQ{\varepsilon,\phi}(\bar A)$ and $\SQ{\varepsilon,\phi'}(\bar B)$, respectively. The $r\times m$ matrix sampled according to $(\frac{1}{2}(p_{\brow{A}}+p_{\brow{B}}),\frac{1}{2}(\tilde p_{\brow{A}}+\tilde p_{\brow{B}}))$ is called the $(r,\varepsilon,\phi,\phi')$-approximate joint importance matrix sketch of $A$ and $B$ associated with $\OSQ{\varepsilon,\phi}(A)$ and $\OSQ{\varepsilon,\phi'}(B)$.
\end{definition}
It is again straightforward to check that the matrix sketch of Definition \ref{definition:mixeda} is an $(r,\varepsilon,\phi,\phi')$-approximate joint importance matrix sketch of~$A$ and $B$ according to Definition \ref{definition:mixed}.\footnote{We have 
\[
\frac{1}{2}\left(p_{\brow{A}}(i)+p_{\brow{B}}(i)\right)
=\frac{1}{2}\left(\frac{\norm{\bar A(i,\cdot)}^2}{\fnorm{\bar A}^2}+\frac{\norm{\bar B(i,\cdot)}^2}{\fnorm{\bar B}^2}\right)
\ge \frac{1}{2}\left(\frac{\norm{A(i,\cdot)}^2}{\phi\fnorm{A}^2}+\frac{\norm{B(i,\cdot)}^2}{\phi'\fnorm{B}^2}\right)=
\frac{1}{2}\left(\frac{1}{\phi}p_{\row{A}}(i)+\frac{1}{\phi'}p_{\row{B}}(i)\right)
\]
for all $i\in\set{1}{m}$.}  Definition~\ref{definition:mixeda} is the special case of Definition \ref{definition:mixed} that is relevant for algorithmic applications when we can access the two matrices via approximate oversampling-and-query access. Since all the results of this subsection work for Definition \ref{definition:mixed}, we nevertheless again present our results using this more general (and mathematically more elegant) definition.

The following easy proposition, similar to the statement used in the proof of Lemma~4.6 in \cite{Chia+JACM22} for perfect oversampling-and-query access, shows that a joint importance matrix sketch of $A$ and $B$ is an importance matrix sketch of both $A$ and $B$ (with a slightly worse oversampling parameter), and shows Inequality (\ref{ineq:cruc}), which will be crucial for the matrix multiplication algorithm of Theorem~\ref{th:MM}. 
\begin{proposition}
An $(r, \varepsilon,\phi,\phi')$-approximate joint importance matrix sketch of $A\in \Comp^{m\times n}$ and $B\in \Comp^{m\times n'}$ is an $(r, \varepsilon,2\phi)$-approximate importance matrix sketch of $A$ and an $(r, \varepsilon,2\phi')$-approximate importance matrix sketch of $B$. Moreover, for any probability distribution $p$ satisfying Condition (i) of Definition \ref{definition:mixed} we have 
\begin{equation}\label{ineq:cruc}
p(i)\ge \frac{\norm{A(i,\cdot)}\norm{B(i,\cdot)}}{\sqrt{\phi\phi'} \fnorm{A}\fnorm{B}}
\end{equation}
for all $i\in\set{1}{m}$.
\end{proposition}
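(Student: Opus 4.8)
The proof is a short consequence of Definition~\ref{definition:mixed} and Definition~\ref{def:imp} together with the inequality of arithmetic and geometric means; here is how I would organize it.

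\textbf{First claim.} Let $S$ be an $(r,\varepsilon,\phi,\phi')$-approximate joint importance matrix sketch of $A$ and $B$, sampled according to $(p,\tilde p)$. Since $\frac{1}{\phi'}p_{\row{B}}(i)\ge 0$, Condition~(i) of Definition~\ref{definition:mixed} immediately gives
\[
p(i)\ge \frac{1}{2}\left(\frac{1}{\phi}p_{\row{A}}(i)+\frac{1}{\phi'}p_{\row{B}}(i)\right)\ge \frac{1}{2\phi}p_{\row{A}}(i)
\]
for every $i\in\set{1}{m}$, which is precisely the first condition in Definition~\ref{def:imp} with oversampling parameter $2\phi$. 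Combined with Condition~(ii), $\stat{\tilde p-p}\le\varepsilon$ (which is shared by the two descriptions, so there is nothing to re-prove), this shows that $S$ is an $(r,\varepsilon,2\phi)$-approximate importance matrix sketch of $A$. Dropping the $A$-term instead of the $B$-term yields the symmetric statement for $B$ with parameter $2\phi'$.

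\textbf{Second claim.} For inequality~(\ref{ineq:cruc}), I would apply AM--GM to the two summands in Condition~(i) of Definition~\ref{definition:mixed}:
\[
p(i)\ge \frac{1}{2}\left(\frac{1}{\phi}p_{\row{A}}(i)+\frac{1}{\phi'}p_{\row{B}}(i)\right)\ge \sqrt{\frac{p_{\row{A}}(i)\,p_{\row{B}}(i)}{\phi\phi'}},
\]
and then substitute $p_{\row{A}}(i)=\norm{A(i,\cdot)}^2/\fnorm{A}^2$ and $p_{\row{B}}(i)=\norm{B(i,\cdot)}^2/\fnorm{B}^2$ and simplify the square root, obtaining $p(i)\ge \norm{A(i,\cdot)}\norm{B(i,\cdot)}/(\sqrt{\phi\phi'}\fnorm{A}\fnorm{B})$.

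There is no genuine obstacle here: the only point requiring a little care is the bookkeeping of the definitions, namely that the \emph{same} pair $(p,\tilde p)$ simultaneously witnesses the joint sketch and each of the two individual sketches, so that only the lower-bound condition changes and Condition~(ii) carries over verbatim. The AM--GM step is the one substantive observation, and it is exactly inequality~(\ref{ineq:cruc}) that will later allow us, in the proof of Theorem~\ref{th:MM}, to bound the per-row quantities $\norm{A(i,\cdot)}\norm{B(i,\cdot)}/p(i)$ uniformly by $\sqrt{\phi\phi'}\fnorm{A}\fnorm{B}$, which is what makes the improved dependence on the failure probability $\delta$ possible.
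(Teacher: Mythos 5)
Your argument is correct and coincides with the paper's own proof: dropping one summand in Condition (i) of Definition \ref{definition:mixed} gives the two individual sketch bounds with parameters $2\phi$ and $2\phi'$ (Condition (ii) being unchanged), and AM--GM applied to the same condition, followed by the substitution $p_{\row{A}}(i)=\norm{A(i,\cdot)}^2/\fnorm{A}^2$ and $p_{\row{B}}(i)=\norm{B(i,\cdot)}^2/\fnorm{B}^2$, yields inequality (\ref{ineq:cruc}). Nothing further is needed.
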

\begin{proof}
The first part is trivial since we have $p (i)\ge \frac{1}{2\phi}p_{\row{A}}(i)$ and $p (i)\ge \frac{1}{2\phi'}p_{\row{B}}(i)$ for all $i\in\set{1}{m}$.
The second part follows from the inequality of arithmetic and geometric means: for each $i\in\set{1}{m}$ we get
\begin{align*}
p(i)&
\ge
\frac{1}{2}\left(\frac{1}{\phi}p_{\row{A}}(i)+\frac{1}{\phi'}p_{\row{B}}(i)\right)
\ge\
\sqrt{\frac{p_{\row{A}}(i)p_{\row{B}}(i)}{\phi\phi'}}
= \frac{\norm{A (i,\cdot)}\norm{B (i,\cdot)}}{\sqrt{\phi\phi'} \fnorm{A}\fnorm{B}},
\end{align*}
as claimed.
\end{proof}

We can now state the main theorem of this subsection: the product of two matrices  can be approximated well using a joint importance matrix sketch.
\addtocounter{theorem}{-3}
\begin{theorem}[formal statement]
Given two matrices $X\in \Comp^{m\times n}$ and $Y\in\Comp^{m\times n'}$, a positive integer $r\le m$ and three parameters $\varepsilon\in(0,1)$ and $\phi,\phi'\ge 1$, consider an $(r, \varepsilon,\phi,\phi')$-approximate joint importance matrix sketch $\Sigma$ of $X$ and $Y$. Then for any $\delta\in(0,1]$,
\[
\Prob{
\fnorm{X^\dagger \Sigma^\dagger \Sigma Y-X^\dagger Y}\le \sqrt{\phi\phi'}\left(2\varepsilon+\sqrt{\frac{7\ln(2/\delta)}{r}}\right)\fnorm{X}\fnorm{Y}}\ge 1-\delta.
\]
\end{theorem}
\addtocounter{theorem}{3}
Note that the dependence on $\delta$ in Theorem \ref{th:MM} is exponentially better than in Theorem~\ref{th:MM1}.
The informal version of Theorem \ref{th:MM} given in the introduction corresponds to the case $\phi=\phi'=1$ (the claim that $\Sigma$ can be efficiently constructed follows from Lemma \ref{lemma:createsk} in Section \ref{sub:app1}).
\begin{proof}[Proof of Theorem \ref{th:MM}]
At a high level, our proof follows the strategy from \cite{Drineas+SICOMP06} (and also used in~\cite{Chia+JACM22}) for the case of perfect importance matrix sketches. In particular, we use MacDiarmid's bounded difference inequality. Significant additional care is needed in the analysis to control the impact of not being able to sample from the perfect distribution.

Let $p$ and $\tilde p$ denote the probability distributions from Definition \ref{definition:mixed}. Observe that $p(j)=0$ implies that $p_{\row{X}}(j)=p_{\row{Y}}(j)=0$, which implies $\norm{X(j,\cdot)}=\norm{Y(j,\cdot)}=0$. 

We first show that the expectation of $X^\dagger \Sigma^\dagger \Sigma Y$ is close to $X^\dagger Y$:
\begin{align*}
\fnorm{
\ex{X^\dagger \Sigma^\dagger \Sigma Y}-X^\dagger Y
}&=
\fnorm{
\ex{\sum_{i=1}^r [(\Sigma X)^\dagger](\cdot,i) [\Sigma Y](i,\cdot)}
-\sum_{j=1}^mX^\dagger (\cdot,j) Y(j,\cdot)
}\\
&=
\fnorm{
\ex{\sum_{i=1}^r ([\Sigma X](i,\cdot))^\dagger [\Sigma Y](i,\cdot)}
-\sum_{j\in\supp{p}}X(j,\cdot)^\dagger Y(j,\cdot)
}\\
&=
\fnorm{
r\ex{([\Sigma X](1,\cdot))^\dagger [\Sigma Y](1,\cdot)}
-\sum_{j\in\supp{p}}X(j,\cdot)^\dagger Y(j,\cdot)
}\\
&=
\fnorm{
r\sum_{j\in\supp{p}} \tilde p(j)\frac{X(j,\cdot)^\dagger Y(j,\cdot)}{r p(j)}
-\sum_{j\in\supp{p}}X(j,\cdot)^\dagger Y(j,\cdot)
}\\
&=
\fnorm{
\sum_{j\in\supp{p}}\left(\tilde p(j)-p(j)\right)
\frac{X(j,\cdot)^\dagger Y(j,\cdot)}{p(j)}
}\\
&\le
\sum_{j\in\supp{p}}\abs{\tilde p(j)-p(j)} \sqrt{\phi\phi'}\fnorm{X}\fnorm{Y}\\
&\le
2\varepsilon\sqrt{\phi\phi'}\fnorm{X}\fnorm{Y},
\end{align*}
where we used Inequality (\ref{ineq:cruc}) to derive the first inequality.

Consider the indices $s_1,\ldots,s_r\in\set{1}{m}$ chosen when constructing the matrix sketch $\Sigma$ (see Definition \ref{def:S}). Define the function 
\[
f(s_1,\ldots,s_r)=\fnorm{X^\dagger \Sigma^\dagger \Sigma Y-\ex{X^\dagger \Sigma^\dagger \Sigma Y}}.
\]
We have 
\begin{align*}
\ex{f}&=\ex{\fnorm{X^\dagger \Sigma^\dagger \Sigma Y-\ex{X^\dagger \Sigma^\dagger \Sigma Y}}}\\
&\le \sqrt{\ex{\fnorm{X^\dagger \Sigma^\dagger \Sigma Y-\ex{X^\dagger \Sigma^\dagger \Sigma Y}}^2}}\\
&\le  \sqrt{\sum_{i=1}^n\sum_{j=1}^{n'} \ex{\abs{[X^\dagger \Sigma^\dagger \Sigma Y](i,j)}^2}},
\end{align*}
where we used Jensen's inequality, and then properties of the variance for the second equality. We thus have:
\begin{align*}
\ex{f}&\le   \sqrt{\ex{\sum_{i=1}^n\sum_{j=1}^{n'}\sum_{k=1}^r \abs{[X^\dagger \Sigma^\dagger](i,k) [\Sigma Y](k,j)}^2}}\\
&=  \sqrt{r\ex{\sum_{i=1}^n\sum_{j=1}^{n'}\abs{[\Sigma X](1,i) [\Sigma Y](1,j)}^2}}\\
&=  \sqrt{r\ex{\norm{\Sigma X(1,\cdot)}^2\norm{\Sigma Y(1,\cdot)}^2}}\\
&= \sqrt{r\sum_{j\in\supp{p}} \tilde p(j)\frac{\norm{X(j,\cdot)}^2}{rp(j)}\frac{\norm{Y(j,\cdot)}^2}{rp(j)}}\\
&\le \sqrt{\frac{1}{r}\sum_{j\in\supp{p}} \tilde p(j)\phi\phi'\fnorm{X}^2\fnorm{Y}^2}\\
&\le 
\frac{\sqrt{\phi\phi'}\fnorm{X}\fnorm{Y}}{\sqrt{r}}.
\end{align*}

Let $\vec{i}=(i_1,\ldots,i_r)$ and $\vec{i'}=(i'_1,\ldots,i'_r)$ be two vectors of variables that differ in only one coordinate.
Using the reverse triangle inequality and Inequality (\ref{ineq:cruc}), we obtain 
\[
\abs{f(\vec{i})-f(\vec{i'})}\le 2 \max_{j\in\supp{p}} \fnorm{\frac{X(j,\cdot)^\dagger Y(j,\cdot)^\dagger}{rp(j)}}\le \frac{2\sqrt{\phi\phi'}\fnorm{X}\fnorm{Y}}{r}.
\]

By using MacDiarmid's bounded difference inequality (Lemma \ref{lemma:MacDiarmid}), we conclude that 
\[
\Prob{\abs{f-\ex{f}}\ge \sqrt{\frac{2\ln(2/\delta)\phi\phi'}{r}}\fnorm{X}\fnorm{Y}}\le \delta.
\]
In consequence the following inequalities hold with probability at least $1-\delta$:
\begin{align*}
\fnorm{X^\dagger \Sigma^\dagger \Sigma Y-\ex{X^\dagger \Sigma^\dagger \Sigma Y}}
&\le \ex{f}+\sqrt{\frac{2\ln(2/\delta)\phi\phi'}{r}}\fnorm{X}\fnorm{Y}\\
&\le \sqrt{\phi\phi'}\left(\frac{1}{\sqrt{r}}+\sqrt{\frac{2\ln(2/\delta)}{r}}\right)\fnorm{X}\fnorm{Y},
\end{align*}
and thus 
\begin{align*}
\fnorm{X^\dagger \Sigma^\dagger \Sigma Y-X^\dagger Y}&\le \sqrt{\phi\phi'}\left(2\varepsilon+\frac{1}{\sqrt{r}}+\sqrt{\frac{2\ln(2/\delta)}{r}}\right)\fnorm{X}\fnorm{Y}\\
&\le \sqrt{\phi\phi'}\left(2\varepsilon+\sqrt{\frac{7\ln(2/\delta)}{r}}\right)\fnorm{X}\fnorm{Y},
\end{align*}
as claimed, where we used the inequality $1+\sqrt{2\log(2/\delta)}\le \sqrt{7\log(2/\delta)}$.
\end{proof}

\subsection{Singular value transformation}\label{sub:SVT}
In this subsection we show how matrix multiplication via importance matrix sketches can be used to approximate the singular value transformation of a positive semidefinite Hermitian matrix of the form $A^\dagger A$.

We start with two definitions. 

\begin{definition}
Given a positive semidefinite Hermitian matrix $H\in \Comp^{n\times n}$, for any $\chi\ge 0$ we write
\[
\spec_{\chi}(H)=\left\{
x\in \Real_{\ge 0}\:|\: \abs{x-z}\le \chi \textrm{ for some }z\in \spec(H)
\right\}.
\]
\end{definition}

\begin{definition}
Given two parameters $L,\bar L\ge 0$ and a set $\Lambda\subseteq \Real_{\ge 0}$, a function $f\colon \Real_{\ge 0}\to \Comp$ is $(L,\bar L)$-smooth on $\Lambda$ if it satisfies the following conditions:
\begin{itemize}
\item
$f(0)=0$ and  the limit of the function $\frac{f(x)}{x}$ at $x=0$ exists (we write this limit c); 
\item
$f$ is $L$-Lipschitz over $\Lambda$ ;
\item
the function $\bar f\colon \Real_{\ge 0}\to \Comp$ defined as $\bar f(x)=\frac{f(x)}{x}$ for $x>0$ and $\bar f(0)=c$ is $\bar L$-Lipschitz over $\Lambda$.
\end{itemize}
\end{definition}

We can now state the main result of this subsection.
\addtocounter{theorem}{-3}
\begin{theorem}[formal statement]
Let $A\in\Comp^{m\times n}$ be a matrix and $f\colon\Real_{\ge 0}\to \Comp$ be an $(L,\bar L)$-smooth function on $\spec_{\chi}(A^\dagger A)$,  for some $L,\bar L, \chi> 0$. For any parameters $\delta\in(0,1]$ and $\gamma>0$ and any $\phi,\phi'\ge 1$, consider any $\varepsilon\ge 0$ and any integers $r\in\set{1}{m}$ and $c\in\set{1}{n}$ satisfying the following conditions:
\begin{align}
\varepsilon&\le \min\left(
\frac{1}{8\phi}, \frac{\chi}{24(\phi+\phi')\fnorm{A}^2},
\frac{\gamma}{24 L\phi \fnorm{A}^{2}},\frac{\gamma}{48 \bar L\phi'\fnorm{A}^{4}}\right),\label{cond1}\\
r&\ge \max\left(2\phi^2\ln(6/\delta),112\phi^2\fnorm{A}^4\left(
\frac{\ln(6/\delta)}{\chi^2}+\frac{\ln(6/\delta)L^2}{\gamma^2}
\right)\right)
,\label{cond2}\\
c&\ge 112\phi'^2\fnorm{A}^4\left(
\frac{\ln(6/\delta)}{\chi^2}+\frac{4\ln(6/\delta)\bar L^2\fnorm{A}^4}{\gamma^2}
\right).\label{cond3}
\end{align}
Let $S\in\Real^{r\times m}$ be an $(r,\varepsilon,\phi)$-approximate importance matrix sketch of $A$,  and $T\in\Real^{c\times n}$ be a $(c,\varepsilon,\phi')$-approximate importance matrix sketch of $(SA)^\dagger$.  Then for
$R=SA$ and $C=SAT^\dagger$, the inequality 
\[
\fnorm{R^\dagger \bar{f}(CC^\dagger)R -f(A^\dagger A)}\le \gamma
\]
holds with probability at least $1-\delta$.
\end{theorem}
\addtocounter{theorem}{3}
The informal version of Theorem \ref{th:SVT} given in the introduction corresponds to the case $\phi,L,\bar{L}=O(1)$ and $\chi=\infty$. The claim that $R$ and $C$ can be efficiently (implicitly) constructed follows from Lemmas \ref{lemma:createsk} and \ref{lemma:sample2} in Section \ref{sub:app1}.

Theorem \ref{th:SVT} is a generalization of Theorem 5.1 in \cite{Chia+JACM22}, which focused on perfect sampling-and-query (i.e., $\varepsilon=0$). As the proof in \cite{Chia+JACM22} combined several techniques for perfect importance matrix sketches, our proof combines essentially all the tools we developed so far for approximate importance matrix sketches.

\begin{proof}[Proof of Theorem \ref{th:SVT}]
First observe that from Lemma \ref{lemma:norm}, with probability at least $1-\delta/3$ we have
\[
\fnorm{R}^2\le 2 \fnorm{A}^2.
\]
We assume below that this inequality holds.


From Theorem \ref{th:MM} with $X=Y=A$ and $\Sigma=S$ (note that $S$ is obviously also an $(r,\varepsilon,\phi,\phi)$-approximate joint importance sketch of $A$ and $A$), we know that the inequality 
\[
\fnorm{R^\dagger R -A^\dagger A}\le 
\min\left(
\left(\frac{\chi}{4}+\frac{\chi}{12}\right)
\frac{\phi\fnorm{A}^2}{\phi\fnorm{A}^2},\left(\frac{\gamma}{4}+\frac{\gamma}{12}\right)\frac{\phi\fnorm{A}^2}{\phi L\fnorm{A}^2}\right)
\le  \min\left(\frac{\chi}{3},\frac{\gamma}{3L}\right)
\]
holds with probability at least $1-\delta/3$. By applying again Theorem \ref{th:MM} with $X=Y=R^\dagger$ and $\Sigma=T$, we know that the inequality 
\[
\fnorm{C C^\dagger- RR^\dagger}\le 
\min\left(
\left(\frac{\chi}{4}+\frac{\chi}{12}\right)
\frac{\phi' \fnorm{R^\dagger}^2}{\phi'\fnorm{A}^2},
\left(\frac{\gamma}{8}+\frac{\gamma}{24}\right) \frac{\phi'\fnorm{R}^2}{\phi'\bar L\fnorm{A}^4}
\right)
\le
\min\left(
\frac{2\chi}{3},
\frac{\gamma}{3\bar L\fnorm{A}^2}\right)
\]
holds with probability at least $1-\delta/3$ as well. We assume below that these inequalities hold.

Applying Lemma \ref{lemma:HW} twice we thus obtain
\begin{align}
\sqrt{\sum_{i=1}^{n}\abs{\lambda_i(R^\dagger R)-\lambda_i(A^\dagger A)}^2}&\le
\fnorm{R^\dagger R  -A^\dagger A}\le \frac{\chi}{3},\label{eq:full1}\\
\sqrt{\sum_{i=1}^{r}\abs{\lambda_i(CC^\dagger)-\lambda_i(R R^\dagger)}^2}&\le
\fnorm{CC^\dagger  -RR^\dagger }\le \frac{2\chi}{3}.\label{eq:full2}
\end{align}
These inequalities imply the inclusions\footnote{Observe that Inequalities (\ref{eq:full1}) and (\ref{eq:full2}) are actually significantly stronger bounds than these inclusions since they bound the \emph{sum} of the difference of eigenvalues. These inclusions will nevertheless be enough for our purpose.}
\begin{align*}
\spec(R^\dagger R)&\subseteq \spec_{\chi/3}(A^\dagger A),\\
\spec(CC^\dagger)&\subseteq \spec_{2\chi/3}(RR^\dagger)\subseteq\spec_{2\chi/3}(R^\dagger R) \subseteq \spec_{\chi}(A^\dagger A).
\end{align*}
From Lemma \ref{lemma:Lipschitz}, we thus obtain
\begin{align*}
\fnorm{f(R^\dagger R)-f(A^\dagger A)}&\le L \fnorm{R^\dagger R-A^\dagger A}\le\frac{\gamma}{3},\\
\fnorm{\bar f(CC^\dagger)-\bar f(RR^\dagger )}&\le \bar L \fnorm{C C^\dagger-R R^\dagger}\le \frac{\gamma}{3\fnorm{A}^2}.
\end{align*}

Putting all these inequalities together, we conclude that with probability at least $1-\delta$:
\begin{align*}
\fnorm{R^\dagger \bar{f}(CC^\dagger)R -f(A^\dagger A)}&\le 
\fnorm{R^\dagger \bar{f}(RR^\dagger)R -f(A^\dagger A)}
+
\fnorm{R^\dagger (\bar{f}(CC^\dagger)-\bar{f}(RR^\dagger))R}\\
&=
\fnorm{f(R^\dagger R) -f(A^\dagger A)}
+
\fnorm{R^\dagger (\bar{f}(CC^\dagger)-\bar{f}(RR^\dagger))R}\\
&\le 
\fnorm{f(R^\dagger R) -f(A^\dagger A)}
+
\norm{R}^2\fnorm{\bar{f}(CC^\dagger)-\bar{f}(RR^\dagger)}\\
&\le \frac{\gamma}{3} +\frac{\gamma\norm{R}^2}{3\fnorm{A}^2}\\
&\le \gamma,
\end{align*}
where the identity $R^\dagger \bar{f}(RR^\dagger)R=f(R^\dagger R)$ can be easily shown by considering the singular value decomposition of $R$.
\end{proof}
\section{Technical Tools}
In this section we present three important technical techniques that will be used in the applications we discuss in Section~\ref{sec:app}. In Section \ref{sub:over} we show how to implement randomized sampling from oversampling. In Section~\ref{sub:line} we show how the concept of oversampling-and-query access is closed under taking linear combination of rows.
Finally, in Section \ref{sub:app1} we show how to concretely construct importance matrix sketches, and prove several technical lemmas that will be used in Section \ref{sec:app}.

\subsection{From oversampling to sampling}\label{sub:over}
The goal of this subsection is to prove the following proposition that shows how to implement approximate (randomized) sampling-and-query access to a vector given approximate oversampling-and-query access to the same vector.

\begin{proposition}\label{prop:oversampling}
For a vector $u\in \Comp^{n}$, assume that we have $\OSQ{\varepsilon,\phi}(u)$ with parameters $\phi\ge 1$ and $\varepsilon\in[0,\frac{1}{2\phi})$ and assume that we know an upper bound $\phi_{\textup{max}}\ge \phi$. Then for any parameters $\delta,\eta\in(0,1]$, we can implement $\RSQ{\varepsilon',\delta,\eta}(u)$ with $\varepsilon'\le 3\varepsilon\phi$ at cost 
\[
O\left(\frac{\phi_{\textup{max}}\log(1/\delta)}{\eta^2}\costosq{u}\right).
\]
\end{proposition}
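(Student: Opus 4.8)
The plan is to adapt standard rejection sampling to the approximate setting, as outlined in the introduction. We have $\SQ{\varepsilon}(\bar u)$ for a vector $\bar u$ with $\norm{\bar u}^2=\phi\norm{u}^2$ and $\abs{\bar u(i)}\ge\abs{u(i)}$ for all $i$, and we want to sample from a distribution close to $p_u$. The natural rejection-sampling scheme is: draw an index $j$ from the distribution $\tilde p$ that $\SQ{\varepsilon}(\bar u)$ provides (so $\stat{\tilde p-p_{\bar u}}\le\varepsilon$), and accept it with probability $\frac{\abs{u(j)}^2}{\phi_{\textup{max}}\abs{\bar u(j)}^2/\phi}$, which is well-defined and in $[0,1]$ since $\abs{u(j)}^2\le\abs{\bar u(j)}^2$ and $\phi_{\textup{max}}\ge\phi$; otherwise report failure. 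If we could sample from the ideal distribution $p_{\bar u}$, the acceptance probability of a given $j$ would be $p_{\bar u}(j)\cdot\frac{\abs{u(j)}^2}{\phi_{\textup{max}}\abs{\bar u(j)}^2/\phi}=\frac{\abs{u(j)}^2}{\phi_{\textup{max}}\norm{u}^2}=\frac{p_u(j)}{\phi_{\textup{max}}}$, so conditioning on acceptance gives exactly $p_u$, and the overall acceptance probability is $1/\phi_{\textup{max}}\ge 1/(2\phi_{\textup{max}})$ (using $\phi\le\phi_{\textup{max}}$, actually it equals $1/\phi_{\textup{max}}$). The point is to control how much the bias $\stat{\tilde p-p_{\bar u}}\le\varepsilon$ perturbs the output distribution.

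The key calculation is a bound on the output distribution of a single accepted sample. Let $a(j)=\frac{\abs{u(j)}^2}{\phi_{\textup{max}}\abs{\bar u(j)}^2/\phi}$ denote the acceptance probability for index $j$. The probability of outputting $j$ in one trial (conditioned on success) is $\frac{\tilde p(j)a(j)}{\sum_k\tilde p(k)a(k)}$. Since $a(j)\le 1$ and, using $p_{\bar u}(j)a(j)=p_u(j)/\phi_{\textup{max}}$, we can write the numerator as $p_u(j)/\phi_{\textup{max}}+(\tilde p(j)-p_{\bar u}(j))a(j)$; summing over $j$ with $\sum_j p_u(j)=1$ gives that the success probability $\sum_k\tilde p(k)a(k)$ lies within $\varepsilon$ of $1/\phi_{\textup{max}}$, hence is at least $\frac{1}{\phi_{\textup{max}}}-\varepsilon\ge\frac{1}{\phi_{\textup{max}}}(1-\varepsilon\phi_{\textup{max}})$; since $\varepsilon<\frac{1}{2\phi}$ and $\phi_{\textup{max}}$ may exceed $2\phi$ this needs a little care — I would instead track things multiplicatively and use $\varepsilon\phi<\tfrac12$, bounding $\sum_k|\tilde p(k)-p_{\bar u}(k)|a(k)\le\sum_k|\tilde p(k)-p_{\bar u}(k)|\cdot\frac{\phi}{\phi_{\textup{max}}}\cdot\frac{\abs{u(k)}^2}{\abs{\bar u(k)}^2}$; the cleanest route is to note this sum is at most $2\varepsilon\cdot\frac{\phi}{\phi_{\textup{max}}}\cdot\max_k\frac{\abs{u(k)}^2}{\abs{\bar u(k)}^2}\le\frac{2\varepsilon\phi}{\phi_{\textup{max}}}$. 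Combining, a standard computation (numerator within $\frac{2\varepsilon\phi}{\phi_{\textup{max}}}$ of $\frac{p_u(j)}{\phi_{\textup{max}}}$ in $\ell_1$, denominator within $\frac{2\varepsilon\phi}{\phi_{\textup{max}}}$ of $\frac{1}{\phi_{\textup{max}}}$ and at least $\frac{1-2\varepsilon\phi}{\phi_{\textup{max}}}$) yields that the conditional output distribution $\tilde p_u$ satisfies $\stat{\tilde p_u-p_u}\le\frac{2\varepsilon\phi}{1-2\varepsilon\phi}\cdot\text{(const)}$; choosing constants carefully this gives $\le 3\varepsilon\phi$ as claimed. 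I would also verify $\varepsilon'\le 1$ is automatic from $\varepsilon\phi<\tfrac12$.

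It remains to handle the Las-Vegas wrapper and the norm estimate. For sampling: repeat the single trial $N=\Theta(\phi_{\textup{max}}\log(1/\delta))$ times; since each trial succeeds with probability at least $\frac{1-2\varepsilon\phi}{\phi_{\textup{max}}}=\Omega(1/\phi_{\textup{max}})$, the probability that all $N$ trials fail is at most $\delta$, so with probability $\ge 1-\delta$ we output a sample from $\tilde p_u$ — this establishes condition (iii) of $\RSQ{\varepsilon',\delta,\eta}(u)$. For the norm estimate $\estnorm{u}$ (condition (ii)): we know $\norm{\bar u}=\sqrt{\phi}\norm{u}$ exactly, but not $\phi$; however the success probability of one trial is exactly $\frac{1}{\phi_{\textup{max}}}\sum_j\tilde p(j)\frac{\abs{u(j)}^2}{\abs{\bar u(j)}^2}\cdot\phi$, which is within a $(1\pm 2\varepsilon\phi)$ factor of $\frac{\phi}{\phi_{\textup{max}}}\cdot\frac{\norm{u}^2}{\norm{\bar u}^2}\cdot\norm{\bar u}^2/\norm{u}^2$... — more directly, $\sum_j p_{\bar u}(j)\frac{\abs{u(j)}^2}{\abs{\bar u(j)}^2}=\frac{\norm{u}^2}{\norm{\bar u}^2}$, so estimating the empirical acceptance rate over the $N=\Theta(\phi_{\textup{max}}\log(1/\delta)/\eta^2)$ trials by a Chernoff bound gives, with probability $\ge 1-\delta$, a relative-$\eta$ estimate of $\frac{\norm{u}^2}{\norm{\bar u}^2}$ up to the $O(\varepsilon\phi)$ bias, whence $\estnorm{u}=\smallnorm{\bar u}\cdot(\text{estimate})^{1/2}$ satisfies $\abs{\estnorm{u}-\norm{u}}\le\eta\norm{u}$ after absorbing the bias into $\eta$ (using again $\varepsilon$ small; strictly one should state the hypothesis forces $\varepsilon\phi$ small enough, or fold an extra constant into $\eta$). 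Condition (i), $\Q{}(u)$, is immediate from $\OSQ{\varepsilon,\phi}(u)$. Each trial costs $O(\costosq{u})$ (one sample from $\tilde p$, a few queries to $u$ and $\bar u$), so the total cost is $O\!\left(\frac{\phi_{\textup{max}}\log(1/\delta)}{\eta^2}\costosq{u}\right)$.

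The main obstacle will be the bias bookkeeping in the conditional-distribution step: one must be careful that the perturbation $\varepsilon$ of $\tilde p$ away from $p_{\bar u}$, after being reweighted by the acceptance probabilities $a(j)\le\phi/\phi_{\textup{max}}$ and renormalized by a success probability that is itself only $\Omega(1/\phi_{\textup{max}})$, blows up by exactly a factor $\phi$ (not $\phi_{\textup{max}}$) — this is why the hypothesis is $\varepsilon<\frac1{2\phi}$ rather than $\varepsilon<\frac1{2\phi_{\textup{max}}}$, and it is the crux of the "robust rejection sampling" claim. Everything else is routine Chernoff/union-bound amplification.
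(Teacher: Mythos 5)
Your high-level plan is the paper's plan (robust rejection sampling, $O(\phi_{\textup{max}}\log(1/\delta))$ repetitions for the sample, the empirical acceptance rate plus a Chernoff bound for the norm), but there is a concrete gap in the scheme as specified: your acceptance probability $a(j)=\frac{\abs{u(j)}^2}{\phi_{\textup{max}}\abs{\bar u(j)}^2/\phi}$ cannot be computed, because it requires the value of $\phi=\norm{\bar u}^2/\norm{u}^2$, which — as you yourself note in the norm-estimation paragraph — is unknown (only the upper bound $\phi_{\textup{max}}$ is given, and $\norm{u}$ is exactly what you are trying to estimate). The quantity that \emph{is} computable from $\Q{}(u)$ and $\Q{}(\bar u)$ is the unscaled ratio $\frac{\abs{u(j)}^2}{\abs{\bar u(j)}^2}=\frac{p_u(j)}{\phi\, p_{\bar u}(j)}$, and this is precisely the point of the paper's robust rejection sampling setup: its condition (iii') only asks that the ratio $p_1(j)/(m p_2(j))$ be computable, with $m=\phi$, even though neither $p_1(j)$, $p_2(j)$ nor $m$ is individually available. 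A second symptom of the same problem: with your scaled acceptance, the ideal acceptance rate is exactly $1/\phi_{\textup{max}}$, a known constant carrying no information about $\norm{u}$, so the norm estimate cannot come from the empirical acceptance rate of \emph{that} procedure; your norm paragraph silently switches to the unscaled ratio (whose ideal acceptance rate is $1/\phi$), so the two halves of your proposal run two different procedures.

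The fix is to use the acceptance probability $\frac{\abs{u(j)}^2}{\abs{\bar u(j)}^2}$ throughout (i.e.\ $m=\phi$, not $\phi_{\textup{max}}$), which is what the paper does: the success probability is then $\Theta(1/\phi)\ge\Theta(1/\phi_{\textup{max}})$, the bias of the conditional output distribution is controlled by $\varepsilon\phi$ exactly as you intuit, and $\estnorm{u}=\norm{\bar u}\cdot(\textrm{acceptance rate})^{1/2}$ gives the norm estimate with the stated cost. One further caution for the bias bookkeeping: your bound of the form $\frac{2\varepsilon\phi}{1-2\varepsilon\phi}$ does not yield $3\varepsilon\phi$ over the whole range $\varepsilon\phi<\tfrac12$ (it blows up as $\varepsilon\phi\to\tfrac12$); the paper instead bounds $1/p_{\mathrm{acc}}\le(1+2\varepsilon\phi)\phi$ and splits the error as $\varepsilon\phi+(1+2\varepsilon\phi)\phi\varepsilon\le 3\varepsilon\phi$, which is the form of the estimate you need.
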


This proposition is a generalization of Lemma 3.5 in \cite{Chia+JACM22} and Proposition 4.3 in \cite{TangSTOC19}, which focused on perfect sampling (i.e., $\varepsilon=0$) and used rejection sampling. To prove Proposition \ref{prop:oversampling}, we first develop a theory of \emph{robust} rejection sampling in Section~\ref{subsec:robustrs}. We then prove Proposition \ref{prop:oversampling} in Section \ref{subsec:proofoversampling}.

\subsubsection{Robust rejection sampling}\label{subsec:robustrs}
\paragraph{Standard rejection sampling.} 
Rejection sampling is a method to generate samples from a (hard) distribution $p_1\colon\set{1}{n}\to [0,1]$ given the ability to generate samples from an (easier) distribution $p_2\colon\set{1}{n}\to [0,1]$. In its standard statement, the following three conditions should be satisfied:
\begin{itemize}
    \item[(i)]
    we can generate samples from $p_2$;
    \item[(ii)]
    we know a value $m$ such that $\frac{p_1(j)}{mp_2(j)}\le 1$ holds for all $j$;
    \item[(iii)] 
    given $j\in\set{1}{n}$, we can compute $p_1(j)$ and $p_2(j)$.
\end{itemize}
The method works as follows: sample $j\in\set{1}{n}$ from $p_2$; output it with probability $\frac{p_1(j)}{mp_2(j)}$ and otherwise report ``failure''. It is not difficult to show that conditioned on the event that the process does not fail, which happens with probability $1/m$, the probability that sample $j$ is output is precisely $p_1(j)$. Repeating the process $\Theta(m)$ times will then output a sample drawn from the distribution $p_1$ with high probability. 

\paragraph{Robust rejection sampling.} 
We consider the setting where we have two probability distributions $p_1,p_2\colon\set{1}{n}\to [0,1]$ satisfying the following conditions:
\begin{itemize}
    \item[(i')]
    we can generate samples from a probability distribution $\tilde p_2\colon\set{1}{n}\to [0,1]$ such that $\stat{\tilde p_2 -p_2}\le \varepsilon$;
    \item[(ii')] 
    there exists a value $m$ such that $\frac{p_1(j)}{mp_2(j)}\le 1$ holds for all $j$;
    \item[(iii')] 
    for each $j\in\set{1}{n}$ we can compute the value $\frac{p_1(j)}{mp_2(j)}$.
\end{itemize}
In particular, we do not assume that we can compute $\tilde p_2(j)$ efficiently. Condition (i') is a clear relaxation of Condition (i).  Conditions (ii') and (iii') are slight relaxations of Conditions (ii) and~(iii) that will be crucial to derive our results (since in our setting we are not able to easily compute $p_1(j)$ and $p_2(j)$). 

The goal is to output a sample drawn from a distribution close to~$p_1$. The basic sampling procedure we use is described in Figure \ref{fig:alg0}. Note that Step 1 can be implemented due to Condition (i'), while Step 3 can be implemented due to Condition (iii').

Here is our main result.

\begin{figure}[h!]
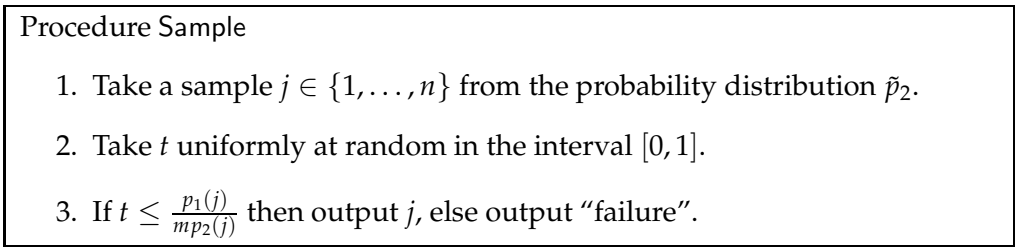

\begin{center}
\fbox{
\begin{minipage}{13 cm} 
Procedure $\sample$
\begin{itemize}
\item[1.]
Take a sample $j\in\set{1}{n}$ from the probability distribution $\tilde p_2$.
\item[2.] 
Take $t$ uniformly at random in the interval $[0,1]$.
\item[3.]
If $t\le \frac{p_1(j)}{mp_2(j)}$ then output $j$, else output ``failure''.
\end{itemize}
\end{minipage}
}
\end{center}
\vspace{-2mm}
\caption{Procedure $\sample$.}\label{fig:alg0}
\end{figure}
The following proposition analyzes the behavior of the procedure $\sample$.
\begin{proposition}\label{prop:rs}
For any $\varepsilon\in[0,\frac{1}{2m}]$, the success probability of the procedure $\sample$ is $\Theta(1/m)$ and conditioned on success,  the total variation distance between the distribution of its output $j$ and $p_1$ is at most $3 \varepsilon m$.
\end{proposition}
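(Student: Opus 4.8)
\textbf{Proof plan for Proposition \ref{prop:rs}.}

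The plan is to analyze the procedure $\sample$ by a direct computation, comparing its behavior under the distribution $\tilde p_2$ that we can actually sample from to its behavior under the ideal distribution $p_2$. First I would fix notation: for each $j\in\set{1}{n}$ write $a(j)=\frac{p_1(j)}{mp_2(j)}\in[0,1]$ for the acceptance probability in Step 3 (this is well-defined and computable by Conditions (ii') and (iii'); for indices with $p_2(j)=0$ we must also have $p_1(j)=0$ by (ii'), and $\tilde p_2(j)$ may be positive but then $a(j)$ is taken to be $0$, i.e., we never accept). The probability that $\sample$ succeeds and outputs $j$ is exactly $\tilde p_2(j)\,a(j)$, so the total success probability is $\mathsf{succ}=\sum_j \tilde p_2(j)a(j)$. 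The key observation is that under the ideal distribution we have $\sum_j p_2(j)a(j)=\sum_j \frac{p_1(j)}{m}=\frac1m$, so it remains to control the perturbation $|\mathsf{succ}-\frac1m|=|\sum_j(\tilde p_2(j)-p_2(j))a(j)|\le \sum_j|\tilde p_2(j)-p_2(j)|\le 2\varepsilon$, using $0\le a(j)\le 1$ and the definition of total variation distance. Hence $\mathsf{succ}\in[\frac1m-2\varepsilon,\frac1m+2\varepsilon]$, and since $\varepsilon\le\frac{1}{2m}$ implies $2\varepsilon\le\frac1m$ — actually to get a clean $\Theta(1/m)$ I would note $\varepsilon\le\frac{1}{2m}$ gives $2\varepsilon\le\frac1m$ so $\mathsf{succ}\ge \frac1m-\frac1m=0$, which is too weak; instead I would use a slightly more careful bound, observing that $\varepsilon\le\frac{1}{2m}$ actually should be read to give $\mathsf{succ}\ge\frac{1}{m}-2\varepsilon\ge\frac{1}{m}-\frac{1}{m}$ — so I would either tighten the hypothesis in the statement or simply remark that for $\varepsilon\le\frac{1}{4m}$ we get $\mathsf{succ}\in[\frac{1}{2m},\frac{3}{2m}]=\Theta(1/m)$, and note the constant in the hypothesis is not essential. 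Either way, $\mathsf{succ}=\Theta(1/m)$.

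Next I would compute the conditional output distribution $q(j)=\frac{\tilde p_2(j)a(j)}{\mathsf{succ}}$ and bound $\stat{q-p_1}=\frac12\sum_j|q(j)-p_1(j)|$. The natural route is to pass through the "ideal conditional distribution" $q_0(j)=\frac{p_2(j)a(j)}{1/m}=p_1(j)$, which is exactly $p_1$. So I would split $q(j)-p_1(j)=\frac{\tilde p_2(j)a(j)}{\mathsf{succ}}-\frac{p_2(j)a(j)}{1/m}$ and control it in two pieces: the numerator perturbation $\frac{(\tilde p_2(j)-p_2(j))a(j)}{\mathsf{succ}}$, whose $\ell_1$ norm over $j$ is at most $\frac{2\varepsilon}{\mathsf{succ}}\le \frac{2\varepsilon}{1/m - 2\varepsilon}$; and the denominator perturbation, $p_2(j)a(j)\left(\frac{1}{\mathsf{succ}}-\frac{1}{1/m}\right)$, whose $\ell_1$ norm is $\big(\sum_j p_2(j)a(j)\big)\cdot\big|\frac{1}{\mathsf{succ}}-m\big|=\frac1m\cdot\frac{|1/m-\mathsf{succ}|}{\mathsf{succ}\cdot(1/m)}=\frac{|1/m-\mathsf{succ}|}{\mathsf{succ}}\le\frac{2\varepsilon}{1/m-2\varepsilon}$. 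Adding, $\sum_j|q(j)-p_1(j)|\le \frac{4\varepsilon}{1/m-2\varepsilon}$, so $\stat{q-p_1}\le\frac{2\varepsilon}{1/m-2\varepsilon}=\frac{2\varepsilon m}{1-2\varepsilon m}$. Under $\varepsilon m\le\frac12$ this is $\le 4\varepsilon m$; with the slightly stronger $\varepsilon m\le\frac13$ it is $\le 3\varepsilon m$, matching the stated bound (again, the exact constant in the hypothesis on $\varepsilon$ is what pins down the $3$ versus a larger constant, and I would align the two accordingly, or simply absorb into the $\Theta$ and state $\le 3\varepsilon m$ under whatever constant hypothesis makes it clean).

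The main obstacle, such as it is, is purely bookkeeping: keeping the two error contributions (bias in the numerator from sampling $\tilde p_2$ instead of $p_2$, and the multiplicative distortion of the normalizing constant $\mathsf{succ}$) separate and bounding each by $O(\varepsilon m)$, while being careful about indices where $p_2(j)=0$ but $\tilde p_2(j)>0$ (these contribute $0$ to $\mathsf{succ}$ and to $q$, so they are harmless, but they must be explicitly dismissed). There is no deep step; the conceptual content is exactly the remark flagged in the introduction, namely that Condition (ii') — $p_1(j)\le m p_2(j)$, i.e.\ $a(j)\le 1$ — is what prevents the ratio from blowing up the bias, in contrast to the naive inner-product estimator. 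I would therefore present the computation cleanly in two displayed inequalities (one for $\mathsf{succ}$, one for $\stat{q-p_1}$) and keep the constant-chasing to a minimum, noting that the hypothesis $\varepsilon\le\frac{1}{2m}$ guarantees $\mathsf{succ}=\Theta(1/m)$ and the total variation bound $3\varepsilon m$.
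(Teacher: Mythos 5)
Your route is the same as the paper's: split the error into the bias of the numerator (sampling from $\tilde p_2$ instead of $p_2$) and the distortion of the normalizing constant $\mathsf{succ}$, using $a(j)=\frac{p_1(j)}{mp_2(j)}\le 1$ to keep both under control. The gap is in the constants, and it is not merely cosmetic: you bound the deviation of the acceptance probability by $\abs{\sum_j(\tilde p_2(j)-p_2(j))a(j)}\le\sum_j\abs{\tilde p_2(j)-p_2(j)}\le 2\varepsilon$, and this factor $2$ is exactly what prevents you from proving the proposition as stated. Under the stated hypothesis $\varepsilon\le\frac{1}{2m}$ you can only conclude $\mathsf{succ}\ge\frac{1}{m}-2\varepsilon\ge 0$, so neither $\mathsf{succ}=\Theta(1/m)$ nor the $3\varepsilon m$ bound follows, and your proposed remedy is to weaken the statement (tighten the hypothesis to $\varepsilon\le\frac{1}{4m}$ or enlarge the constant). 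But the statement is provable as written: since $a(j)\in[0,1]$, the \emph{signed} deviation satisfies $\abs{\sum_j(\tilde p_2(j)-p_2(j))a(j)}\le\stat{\tilde p_2-p_2}\le\varepsilon$ (the positive and negative parts of $\tilde p_2-p_2$ each have mass at most $\stat{\tilde p_2-p_2}$), which is the bound the paper uses. Note the sharper bound applies only to the signed sum; your numerator term in the total variation estimate involves $\sum_j\abs{\tilde p_2(j)-p_2(j)}a(j)$ and genuinely costs $2\varepsilon$, but it is then halved by the $\frac{1}{2}$ in the definition of total variation distance.

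With that one correction your calculation closes: $\mathsf{succ}\ge\frac{1}{m}-\varepsilon\ge\frac{1}{2m}$ and $\frac{1}{\mathsf{succ}}\le\frac{m}{1-\varepsilon m}\le(1+2\varepsilon m)m\le 2m$, so the normalization term contributes at most $\frac{\varepsilon}{2\,\mathsf{succ}}\le\varepsilon m$ and the numerator term at most $\frac{\varepsilon}{\mathsf{succ}}\le 2\varepsilon m$, giving $3\varepsilon m$ under $\varepsilon m\le\frac{1}{2}$, exactly as in the paper. A smaller arithmetic slip in your write-up: from your bound $\frac{2\varepsilon m}{1-2\varepsilon m}$, the claim that $\varepsilon m\le\frac{1}{3}$ yields $\le 3\varepsilon m$ is false (at $\varepsilon m=\frac{1}{3}$ the left side equals $2$ while $3\varepsilon m=1$); you would need $\varepsilon m\le\frac{1}{6}$, which again underscores that the loose $2\varepsilon$ estimate, not the hypothesis of the proposition, is the source of the trouble. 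Your handling of indices with $p_2(j)=0$ (forcing $p_1(j)=0$ and rejection) is fine and consistent with the paper.
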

\begin{proof}
Let $p_{\mathrm{acc}}$ denote the probability that Procedure $\sample$ does not fail. We have 
\begin{align*}
p_{\mathrm{acc}}=\sum_{j=1}^n \tilde p_2(j) \frac{p_1(j)}{mp_2(j)}
&=
\sum_{j=1}^n p_2(j) \frac{p_1(j)}{mp_2(j)} + 
\sum_{j=1}^n (p_2(j) - \tilde p_2(j))
\frac{p_1(j)}{mp_2(j)}\\
&=
\frac{1}{m} + 
\sum_{j=1}^n (p_2(j) - \tilde p_2(j))
\frac{p_1(j)}{mp_2(j)}
\end{align*} 
and then
\begin{align*}
    \abs{p_{\mathrm{acc}}-\frac{1}{m}}
    &\le
    \sum_{j=1}^n \abs{p_2(j) - \tilde p_2(j)}
\frac{p_1(j)}{mp_2(j)}
\le
    \stat{p_2(j) - \tilde p_2(j)}
    \le \varepsilon,
\end{align*}
where we used the condition $\frac{p_1(j)}{mp_2(j)}\le 1$ for the second inequality.
Note that we thus have 
\begin{align*}
\frac{1}{p_{\mathrm{acc}}}&\in
\left[
\frac{m}{(1+\varepsilon m)},\frac{m}{(1-\varepsilon m)}
\right]\subseteq
\left[
(1-\varepsilon m)m,(1+2\varepsilon m)m
\right]
,
\end{align*}
since we are assuming $\varepsilon\in[0,\frac{1}{2m}]$. In particular we have $p_{\mathrm{acc}}=\Theta(1/m)$.

Let $\hat p\colon\set{1}{n}\to [0,1]$ be the probability distribution corresponding to the output of Procedure $\sample$ conditioned on the event that Procedure $\sample$ stops. By definition we have
\[
\hat p(j)=
\frac{\tilde p_2(j) p_1(j)}
{mp_{\textrm{acc}}p_2(j)}
\]
for all $j\in\set{1}{n}$.

We now show that the distribution $\hat p$ is close to the distribution $p_{1}$.
We have 
\begin{align*}
   \stat{\hat p- p_{1}}&=
   \frac{1}{2}\sum_{j=1}^n
   \abs{
   \frac{\tilde p_2(j) p_1(j)}
{mp_{\textrm{acc}}p_2(j)}
- p_1(j)
}\\
&\le
   \frac{1}{2}\sum_{j=1}^n
   \abs{
   \frac{p_2(j) p_1(j)}
{mp_{\textrm{acc}}p_2(j)}
- p_1(j)
}
+ \frac{1}{2}\sum_{j=1}^n
\abs{
    \frac{(p_2(j)-\tilde p_2(j)) p_1(j)}
{mp_{\textrm{acc}}p_2(j)}
}\\
&=
   \frac{1}{2}\sum_{j=1}^n
   \abs{
   \left(\frac{1}{mp_{\textrm{acc}}}-1\right)
   p_1(j)}
+ \frac{1}{2}\sum_{j=1}^n
\abs{
    \frac{(p_2(j)-\tilde p_2(j)) p_1(j)}
{mp_{\textrm{acc}}p_2(j)}
}\\
&\le 
\frac{1}{2}\abs{
   \frac{1}{mp_{\textrm{acc}}}-1}
   +
   \frac{1}{p_{\textrm{acc}}}
   \stat{p_2(j)-\tilde p_2(j)}
   \\
&\le
\varepsilon m + (1+2\varepsilon m)m\cdot \varepsilon\\
&\le
3\varepsilon m,
 \end{align*}
where we used again the condition $\frac{p_1(j)}{mp_2(j)}\le 1$ for the second inequality and $\varepsilon m\le 1/2$ for the last inequality.
%
\end{proof}

\subsubsection{Proof of Proposition \ref{prop:oversampling}}\label{subsec:proofoversampling}

\begin{proof}[Proof of Proposition \ref{prop:oversampling}]
    We take $p_1=p_u$, $p_2=p_{\bar u}$, $\tilde p_2=\tilde p_{\bar u}$ and $m=\phi$. Note that 
\begin{equation}\label{eq:m}
\frac{p_1(j)}{mp_2(j)}=\frac{\abs{u(j)}^2\norm{\bar u}^2}{\phi\abs{\bar u(j)}^2\norm{u}^2}
=\frac{\abs{u(j)}^2}{\abs{\bar u(j)}^2}
\le 1
\end{equation}
holds for all $j\in\set{1}{n}$, as required.

Consider repeating the procedure $\sample$ until it outputs a sample. Proposition \ref{prop:rs} guarantees that the expected number of repetitions is $O(\phi)$ and conditioned on success, the  total variation distance between the output distribution and $p_u$ is at most $3\varepsilon \phi$. Repeating the procedure $O(\phi_{\textrm{max}}\log(1/\delta))$ times guarantees that with probability at least $1-\delta$ the procedure successfully outputs a sample. For each execution of the procedure $\sample$, Step 1 only requires taking one sample from $\tilde p_{\bar u}$, which has cost $\costsq{\bar u}$, and Step 3 can be implemented via Equation (\ref{eq:m}) by querying $\abs{u(j)}$ and $\abs{\bar u(j)}$, which has cost $\costq{u}+\costq{\bar u}$. To get a sample, the overall complexity is thus $O(\phi_{\textrm{max}}\log(1/\gamma)\cdot\costosq{u})$.

To compute an approximation of $\norm{u}$, observe that $\norm{u}=\norm{\bar u}/\phi$. Since we can get $\norm{\bar u}$, we only need to approximate $\phi^{-1}$. We run $r=\ceil{3\phi_{\textrm{max}}\log(2/\delta)/\eta^2}$ times the procedure $\sample$. Let $X$ be the fraction of these $r$ executions that are successful and observe that $\ex{X}=\phi^{-1}$. From Chernoff's bound we have 
\[
\Prob{\abs{X-\phi^{-1}}\ge \eta \phi^{-1}}\le 2\exp\left(-\frac{\eta^2r }{3\phi}\right)\le  \delta.
\]
We take $X$ as our estimator of $\phi^{-1}$ and output $X\norm{\bar u}$.
Then with probability at least $1-\delta$ we have 
\begin{align*}
\abs{\norm{u} - X\norm{\bar u}}
&
\le 
\eta \norm{u},
\end{align*}
as claimed. The overall cost of computing an approximation of $\norm{u}$ is  thus
\[
O(r\cdot \costosq{u})=O\left(\frac{\phi_{\textrm{max}}\log(1/\delta)}{\eta^2}\costosq{u}\right),
\]
which gives the claimed cost for $\RSQ{\varepsilon',\delta,\eta}(u)$.
\end{proof}

\subsection{Sampling from linear combinations of rows}\label{sub:line}
In this subsection we show that our notion of approximate oversampling-and-query access is robust under taking linear combinations of a small number of vectors. Here is our main result.
\begin{proposition}\label{prop:lin1}
For a matrix $R\in\Comp^{r\times n}$, assume that we have
    $\OSQ{\varepsilon,\phi_i}(R(i,\cdot))$ for each $i\in\set{1}{r}$ with parameters $\varepsilon\ge 0$ and $\phi_1,\ldots,\phi_r\ge 1$. Let $v^\dagger=(\lambda_1,\ldots,\lambda_r)$ be a known row-vector, and assume that the row-vector $v^\dagger R$ is nonzero. We can then implement $\OSQ{\varepsilon,\phi}(v^\dagger R)$  with 
\begin{equation}\label{ineq:combination}
\phi= r
\frac{\sum_{i=1}^r \phi_i\norm{\lambda_iR(i,\cdot)}^2}{\norm{v^\dagger R}^2}
\end{equation}
at cost $O(r+\costosq{R(1,\cdot)}+\cdots+\costosq{R(r,\cdot)})$.
\end{proposition}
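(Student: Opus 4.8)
The plan is to construct a proxy vector $\bar w\in\Comp^n$ that oversamples $w:=v^\dagger R$, and to show that approximate sampling access to $\bar w$ follows from the approximate oversampling access we have for each row. Recall that $\OSQ{\varepsilon,\phi_i}(R(i,\cdot))$ gives, for each $i$, a vector $\overline{R(i,\cdot)}\in\Comp^n$ with $\norm{\overline{R(i,\cdot)}}^2=\phi_i\norm{R(i,\cdot)}^2$, entrywise dominating $R(i,\cdot)$ in absolute value, and with $\SQ{\varepsilon}(\overline{R(i,\cdot)})$. The natural candidate is to define $\bar w$ by $\abs{\bar w(j)}^2 = r\sum_{i=1}^r \abs{\lambda_i}^2\,\abs{\overline{R(i,\cdot)}(j)}^2$ (and, say, $\bar w(j)$ real nonnegative). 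First I would check the two defining conditions of $\OSQ{\varepsilon,\phi}(w)$: the entrywise domination $\abs{\bar w(j)}\ge\abs{w(j)}$ follows from Cauchy--Schwarz, since $\abs{w(j)}^2=\abs{\sum_i\lambda_i R(i,j)}^2\le r\sum_i\abs{\lambda_i}^2\abs{R(i,j)}^2\le r\sum_i\abs{\lambda_i}^2\abs{\overline{R(i,\cdot)}(j)}^2$; and the norm identity $\norm{\bar w}^2 = r\sum_i\abs{\lambda_i}^2\norm{\overline{R(i,\cdot)}}^2 = r\sum_i\phi_i\norm{\lambda_i R(i,\cdot)}^2 = \phi\,\norm{w}^2$ with $\phi$ exactly as in~(\ref{ineq:combination}). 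Query access to $w$ itself is immediate: on input $j$ we query $R(i,j)$ for all $i$ (cost $O(r)$ plus the $r$ query costs) and sum $\sum_i\lambda_i R(i,j)$; and we can compute $\norm{w}$ from $\norm{\bar w}$ and $\phi$, or directly from the $\norm{\overline{R(i,\cdot)}}$ values. Query access to $\bar w$ is likewise direct from the $\Q{}(\overline{R(i,\cdot)})$ oracles, and $\norm{\bar w}$ is available since each $\norm{\overline{R(i,\cdot)}}$ is.

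The substantive step is to implement $\Samp{\varepsilon}(\bar w)$, i.e.\ to sample from a distribution at total variation distance $\le\varepsilon$ from $p_{\bar w}$. Write $\bar w$ as a "flattened" vector: $p_{\bar w}(j) = \sum_{i=1}^r \beta_i\, p_{\overline{R(i,\cdot)}}(j)$ where $\beta_i = \phi_i\norm{\lambda_i R(i,\cdot)}^2 / \sum_{k}\phi_k\norm{\lambda_k R(k,\cdot)}^2$ is a probability vector over $\set{1}{r}$ that we can compute exactly from the known $\lambda_i$, the known $\norm{\overline{R(i,\cdot)}}$, and $\phi_i$ (or just from $\norm{\overline{R(i,\cdot)}}=\sqrt{\phi_i}\,\norm{R(i,\cdot)}$). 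To sample: pick $i\sim\beta$, then return a sample from $\Samp{\varepsilon}(\overline{R(i,\cdot)})$, which produces $j$ from some $\tilde p^{(i)}$ with $\stat{\tilde p^{(i)}-p_{\overline{R(i,\cdot)}}}\le\varepsilon$. The output distribution is $\tilde q = \sum_i\beta_i\,\tilde p^{(i)}$, and by the triangle inequality plus convexity of total variation distance, $\stat{\tilde q - p_{\bar w}} \le \sum_i\beta_i\,\stat{\tilde p^{(i)} - p_{\overline{R(i,\cdot)}}} \le \varepsilon$. This is exactly the approximate-sampling guarantee required, and the $\varepsilon$ does not degrade — which matches the statement ($\phi$ changes but $\varepsilon$ does not).

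The main obstacle — really the only place needing care — is bookkeeping the costs and making sure the preprocessing of the weights $\beta_i$ is $O(r)$ plus a constant number of oracle calls per row, so that the total matches $O(r+\sum_i\costosq{R(i,\cdot)})$; sampling $i\sim\beta$ costs $O(r)$ (or $O(\log r)$ with an alias table, but $O(r)$ suffices), one sampling call to row $i$, and nothing else, and each query to $w$ or $\bar w$ costs $O(r)$ plus one query per row. I would close by noting that since $v^\dagger R$ is assumed nonzero, $p_w$ and hence $p_{\bar w}$ are well defined, and assembling the three pieces (query access, norm access, approximate sampling access) gives $\OSQ{\varepsilon,\phi}(v^\dagger R)$ with $\phi$ as in~(\ref{ineq:combination}), completing the proof.
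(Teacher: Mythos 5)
Your proposal is correct and follows essentially the same route as the paper's proof: the same proxy vector $\bar w$ with $\abs{\bar w(j)}^2=r\sum_{i=1}^r\abs{\lambda_i}^2\abs{\overline{R(i,\cdot)}(j)}^2$, the same Cauchy--Schwarz domination and norm identity, and the same two-stage sampler (your mixture weights $\beta_i$ coincide with the paper's distribution $q$, and the convexity bound on the total variation distance is exactly the paper's calculation). The only slip is the aside that $\norm{v^\dagger R}$ could be computed from $\norm{\bar w}$ and $\phi$ --- this is circular, since $\phi$ is defined through $\norm{v^\dagger R}$ --- but it is harmless because Definition~\ref{def:over-sample} only requires $\Q{}(v^\dagger R)$ and $\SQ{\varepsilon}(\bar w)$, not knowledge of $\norm{v^\dagger R}$.
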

This proposition is a generalization of Lemma 3.6 in \cite{Chia+JACM22} and Proposition 4.3 in \cite{TangSTOC19}. The proof strategy is similar, but additional care is needed in the analysis to ensure that the total variation distance between the distribution we construct and the ideal distribution is preserved.


\begin{proof}[Proof of Proposition \ref{prop:lin1}.]
Note that
$
v^\dagger R = \sum_{i=1}^r \lambda_i R(i,\cdot).
$
For any $j\in\set{1}{n}$ we can compute $[v^\dagger R](j)$ by querying $R(1,j),\ldots,R(r,j)$ and then performing arithmetic operations, which means that we can implement one query access to the vector $v^\dagger R$ at cost 
$O(r+\costq{R(1,\cdot)}+\cdots+\costq{R(r,\cdot)})$.

The remaining of the proof explains how to implement approximate sampling-and-query access to a vector $w$ such that $\norm{w}=\phi\norm{v^\dagger R}$ for $\phi$ satisfying Equation (\ref{ineq:combination}), and $\abs{w(j)}\ge \abs{[v^\dagger R](j)}$ for all $j\in\set{1}{n}$.

\paragraph{Notations.}
We first introduce notations related to the vectors we can query and sample. From the assumptions, for each $i\in\set{1}{r}$ we have query access to $R(i,\cdot)$ and approximate sampling-and-query access to a vector, which we denote $u_i\in\Comp^n$, such that $\norm{u_i}=\phi_i \norm{R(i,\cdot)}$ and $\abs{u_i(j)}\ge \abs{R(i,j)}$ for all $j\in\set{1}{n}$. This means that for each $i\in\set{1}{r}$, each of the following operations can be implemented at cost $\costosq{R(i,\cdot)}$:

\begin{itemize}
\item
query access to $R(i,\cdot)$ and $u_i$;
\item
getting $\norm{u_i}$;
\item
getting one sample from a distribution $\tilde p_{u_i}\colon\{1,\ldots,n\}\to [0,1]$ such that $\stat{\tilde p_{u_i}-{p}_{u_i}}\le \varepsilon$.
\end{itemize}

\paragraph{Definition of the vector $\boldsymbol{w}$.}
Define the row-vector $w\in\Real^{1\times n}$ as follows:
\[
w(j)=\sqrt{r\sum_{i=1}^r\abs{\lambda_i u_i(j)}^2}
\]
for all $j\in\set{1}{n}$. For all $j\in\set{1}{n}$, we have 
\[
w(j)\ge \sqrt{r\sum_{i=1}^r\abs{\lambda_i R(i,j)}^2}
\ge \abs{\sum_{i=1}^r\lambda_i R(i,j)} = 
\abs{[v^\dagger R] (j)},
\]
as claimed, by Cauchy-Schwarz inequality. We also have $\norm{w}^2=\phi \norm{v^\dagger R}^2$ for 
\begin{align*} 
\phi=
\frac{r\sum_{i=1}^r\abs{\lambda_i}^2 \norm{u_i}^2}{\norm{\sum_{i=1}^r \lambda_i R(i,\cdot)}^2}
=
\frac{r\sum_{i=1}^r\phi_i\abs{\lambda_i}^2 \norm{R(i,\cdot)}^2}{\norm{\sum_{i=1}^r \lambda_i R(i,\cdot)}^2},
\end{align*}
as claimed.

\paragraph{Query access to $\boldsymbol{w}$.}
For any $j\in\set{1}{n}$ we can compute $w(j)$ by querying $u_1(j),\ldots,u_r(j)$ and then performing arithmetic operations, which means that we can implement one query access to the vector $w$ with cost $O(r+\costosq{R(1,\cdot)}+\cdots+\costosq{R(r,\cdot)})$.

\paragraph{Estimating the norm of $\boldsymbol{w}$.}
Observe that
\[
\norm{w} = \sqrt{r\sum_{i=1}^r \abs{\lambda_i}^2 \norm{u_i}^2}.
\]
This quantity can be computed using $O(r)$ arithmetic operations once we know $\norm{u_1},\ldots,\norm{u_r}$. The cost is $O(r+\costosq{R(1,\cdot)}+\cdots+\costosq{R(r,\cdot)})$.

\paragraph{Sampling from $\boldsymbol{w}$.}
Let $q\colon\set{1}{r}\to[0,1]$ be the probability distribution such that
\[
q(i)=\frac{\abs{\lambda_i}^2\norm{ u_i}^2}{\sum_{\ell=1}^r \abs{\lambda_\ell}^2\norm{ u_\ell}^2}.
\]
Observe that the distribution $q$ can be explicitly constructed by getting the norms $\norm{u_1},\ldots,\norm{u_r}$.
In order to sample from $w$, we execute the following process. \vspace{2mm}

\fbox{
\begin{minipage}{12 cm} 
\vspace{2mm}
1. 
Take a sample $i\in\set{1}{r}$ from the probability distribution $q$.\\

\noindent 2.
Take a sample $j\in\set{1}{n}$ from the probability distribution $\tilde p_{u_i}$.  
\vspace{2mm}
\end{minipage}
}

\vspace{5mm}
The cost of the whole process (including the construction of $q$) is 
\[
O(r+\costosq{R(1,\cdot)}+\cdots+\costosq{R(r,\cdot)}).
\]
Let $P\colon\set{1}{n}\to [0,1]$ be the probability distribution of the sample obtained at Step 2, i.e., 
\[
P(j)=
\sum_{i=1}^r
q(i)
\tilde p_{u_i}(j)
\]
for all $j\in\set{1}{n}$. 
We have

    \begin{align*}
\stat{P-p_w}&=
\frac{1}{2}\sum_{j=1}^n \abs{\sum_{i=1}^r
q(i)
\tilde p_{u_i}(j) -
\frac{\abs{w(j)}^2}{\norm{w}^2}}\\
&=
\frac{1}{2}\sum_{j=1}^n \abs{\sum_{i=1}^r
q(i)
\tilde p_{u_i}(j) -
\frac{r\sum_{i=1}^r\abs{\lambda_i u_i(j)}^2}{r\sum_{i=\ell}^r\abs{\lambda_\ell}^2\norm{u_\ell}^2}}\\
&=
\frac{1}{2}\sum_{j=1}^n \abs{\sum_{i=1}^r
q(i)
\tilde p_{u_i}(j) -
\sum_{i=1}^r
 q(i)
\frac{\abs{u_i(j)}^2}{\norm{u_i}^2}}\\
&\le
\sum_{i=1}^r
q(i)\stat{\tilde p_{u_i}-p_{u_i}}\\
&\le\varepsilon,
\end{align*}
which concludes the proof.
\end{proof}
\subsection{Importance matrix sketches associated with sampling access}\label{sub:app1}
We now discuss how to construct and use importance matrix sketches when given oversampling-and-query access to matrices. 

Remember how we defined a special case of approximate importance matrix sketches in Definition \ref{def:impa} and a special case of approximate joint importance matrix sketches in Definition~\ref{definition:mixeda}. We first observe that the standard representation (as defined in Definition~\ref{def:S}) of such matrix sketches can be computed efficiently:
\begin{lemma}\label{lemma:createsk}
The standard description of the $(r,\varepsilon,\phi)$-approximate importance matrix sketch of $A$ associated with $\OSQ{\varepsilon,\phi}(A)$ can be computed at cost $O(r\cdot \costosq{A})$. The standard description of the $(r,\varepsilon,\phi,\phi')$-approximate joint importance matrix sketch of $A$ and $B$ associated with $\OSQ{\varepsilon,\phi}(A)$ and $\OSQ{\varepsilon,\phi'}(B)$ can be computed at cost $O(r\cdot (\costosq{A}+\costosq{B}))$. 
\end{lemma}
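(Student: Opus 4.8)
The plan is to unpack the definitions of the matrix sketches involved and show that each of the $r$ rows can be sampled and written down using a constant number of the primitive operations guaranteed by $\OSQ{\varepsilon,\phi}(A)$ (resp.\ $\OSQ{\varepsilon,\phi}(A)$ and $\OSQ{\varepsilon,\phi'}(B)$). Recall that by Definition \ref{def:impa}, the $(r,\varepsilon,\phi)$-approximate importance matrix sketch of $A$ associated with $\OSQ{\varepsilon,\phi}(A)$ is the $r\times m$ matrix sampled according to $(p_{\brow{A}},\tilde p_{\brow{A}})$, where $\bar A$ is the matrix from Definition \ref{def:qos-access-mat} and $\SQ{\varepsilon,\phi}(\bar A)$ gives us access to $\tilde p_{\brow{A}}$. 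By Definition \ref{def:S}, producing the standard description $((s_1,\alpha_1),\dots,(s_r,\alpha_r))$ requires, for each $i\in\set{1}{r}$: (a) drawing $s_i$ from $\tilde p_{\brow{A}}$, and (b) computing $\alpha_i=1/\sqrt{r\,p_{\brow{A}}(s_i)}$ (taking $\alpha_i=0$ in the degenerate case $p_{\brow{A}}(s_i)=0$, which we will note never arises since a sample from $\tilde p_{\brow{A}}$ lands in $\supp{\tilde p_{\brow{A}}}$, and anyway the description is arbitrary there).

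First I would argue that step (a) costs $\costsq{\bar A}\le \costsq{A}\le\costosq{A}$: by Condition (ii) of Definition \ref{def:qs-access-mat} we have $\SQ{\varepsilon}(\brow{A})$, and since $\SQ{\varepsilon}(\brow{A})$ includes $\Samp{\varepsilon}(\brow{A})$, one sample from $\tilde p_{\brow{A}}$ is obtained at cost $\costsq{\brow{A}}$, which is subsumed in $\costsq{\bar A}$ and hence in $\costosq{A}$. Next, step (b): we have $p_{\brow{A}}(s_i)=\abs{\brow{A}(s_i)}^2/\fnorm{\bar A}^2 = \norm{\bar A(s_i,\cdot)}^2/\fnorm{\bar A}^2$. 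The quantity $\norm{\bar A(s_i,\cdot)}$ is obtained by one query to the vector $\brow{A}$ (i.e.\ $\Q{}(\brow{A})$, which is part of $\SQ{\varepsilon}(\brow{A})$), and $\fnorm{\bar A}$ is available directly from $\SQ{\varepsilon}(\brow{A})$ (it is $\norm{\brow{A}}$); both cost at most $\costsq{\bar A}\le\costosq{A}$. Computing $\alpha_i$ from these two numbers is a constant number of arithmetic operations, hence unit cost. Summing over the $r$ iterations gives total cost $O(r\cdot\costosq{A})$, proving the first statement.

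For the second statement, Definition \ref{definition:mixeda} tells us the joint sketch is sampled according to $\bigl(\tfrac12(p_{\brow{A}}+p_{\brow{B}}),\tfrac12(\tilde p_{\brow{A}}+\tilde p_{\brow{B}})\bigr)$. To draw $s_i$ from the mixture $\tfrac12(\tilde p_{\brow{A}}+\tilde p_{\brow{B}})$, we flip a fair coin and then sample from $\tilde p_{\brow{A}}$ or $\tilde p_{\brow{B}}$ accordingly, at cost $O(\costsq{\bar A}+\costsq{\bar B})=O(\costosq{A}+\costosq{B})$. To compute $\alpha_i=1/\sqrt{r\cdot p(s_i)}$ with $p(s_i)=\tfrac12(p_{\brow{A}}(s_i)+p_{\brow{B}}(s_i))$, we query $\norm{\bar A(s_i,\cdot)}$ and $\norm{\bar B(s_i,\cdot)}$ (via $\Q{}(\brow{A})$ and $\Q{}(\brow{B})$) and read off $\fnorm{\bar A}$ and $\fnorm{\bar B}$, again at cost $O(\costosq{A}+\costosq{B})$, then perform a constant number of arithmetic operations. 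Over $r$ iterations this is $O(r\cdot(\costosq{A}+\costosq{B}))$, as claimed.

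I do not anticipate a genuine obstacle here; this lemma is essentially a bookkeeping exercise confirming that the primitive operations bundled into $\OSQ{\varepsilon,\phi}(\cdot)$ suffice to instantiate Definitions \ref{def:S}, \ref{def:impa} and \ref{definition:mixeda}. The only point demanding a little care is making explicit that the probability values $p_{\brow{A}}(s_i)$ needed in Definition \ref{def:S} — as opposed to the \emph{approximate} distribution $\tilde p_{\brow{A}}$ from which we sample — are themselves computable, which they are because $p_{\brow{A}}$ is the exact length-squared distribution of the known vector $\brow{A}$ and $\SQ{\varepsilon}$ access to $\bar A$ grants query access to each entry of $\brow{A}$ together with $\fnorm{\bar A}$.
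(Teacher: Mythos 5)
Your proof is correct and follows essentially the same route as the paper's: sample each index from the approximate distribution $\tilde p_{\brow{A}}$ (or, for the joint sketch, from the mixture via a fair coin), and compute each weight $\alpha_i$ from the exactly computable probability $p_{\brow{A}}(s_i)=\norm{\bar A(s_i,\cdot)}^2/\fnorm{\bar A}^2$ using query access to $\brow{A}$ and its norm. The only cosmetic slip is the intermediate inequality $\costsq{\bar A}\le \costsq{A}$, which is not meaningful since only $\OSQ{\varepsilon,\phi}(A)$ is assumed; the correct (and sufficient) bound is simply $\costsq{\bar A}\le\costosq{A}$ by the definition of $\costosq{A}$.
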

\begin{proof}
The first part follows from the properties of the distributions $\tilde p_{\brow{A}}$ and $p_{\brow{A}}$. Observe in particular that since we have $\SQ{\varepsilon}(\brow{A})$, we can sample from $p_{\brow{A}}$ at cost $\costosq{A}$ and compute $p_{\brow{A}}(i)=\norm{\brow{A}(i)}^2/\norm{\brow{A}}^2$ at cost $O(\costosq{A})$ for any $i\in\set{1}{n}$. We can thus compute the standard description of the $(r,\varepsilon,\phi)$-approximate importance matrix sketch of $A$ associated with $\OSQ{\varepsilon,\phi}(A)$ at cost $O(r\cdot \costosq{A})$. 

The proof of the second part is similar. For sampling from the distribution $\frac{1}{2}(p_{\brow{A}}+p_{\brow{B}})$, we simply take a random bit $b$, and sample from $p_{\brow{A}}$ if $b=0$ and from $p_{\brow{B}}$ if $b=1$. 
\end{proof}


We now show how to get oversampling-and-query access to the matrices $SA$ and $(SA)^\dagger$ when given an approximate importance matrix sketches of $A$ associated with $\OSQ{\varepsilon,\phi}(A)$. The two following lemmas generalize similar results for the case of perfect oversampling-and-query access  proved in \cite{Chia+JACM22,Frieze+JACM04}.

\begin{lemma}\label{lemma:sample1}
For a nonzero matrix $A\in \Comp^{m\times n}$, any parameters $\varepsilon\in[0,1]$, $\phi\ge 1$ and any integer $r\ge 1$, assume that we have $\OSQ{\varepsilon,\phi}(A)$ and know the standard representation of the $(r,\varepsilon,\phi)$-approximate importance matrix sketch of $A$ associated with $\OSQ{\varepsilon,\phi}(A)$.
Assume that the inequality
$
\fnorm{SA}^2\ge \frac{1}{2}\fnorm{A}^2
$
 holds. We can then implement $\OSQ{\varepsilon,\phi'}(SA)$ with $\phi'\le 2\phi$ at cost $O(\costosq{A})$.
\end{lemma}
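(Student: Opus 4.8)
The goal is to construct $\OSQ{\varepsilon,\phi'}(SA)$, which by Definition \ref{def:qos-access-mat} means: (i) query access to $SA$, and (ii) $\SQ{\varepsilon}(\overline{SA})$ for some matrix $\overline{SA}$ with entrywise-dominating absolute values and $\fnorm{\overline{SA}} = \phi'\fnorm{SA}$. The plan is to build $\overline{SA}$ row by row from the data we already have. Recall from the standard description that row $i$ of $S$ is $e_{s_i}/\sqrt{rp_{\brow{A}}(s_i)}$ with $p_{\brow{A}}(s_i) = \norm{\bar A(s_i,\cdot)}^2/\fnorm{\bar A}^2$, so row $i$ of $SA$ is $A(s_i,\cdot)/\sqrt{rp_{\brow{A}}(s_i)}$. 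The natural candidate is to take row $i$ of $\overline{SA}$ to be $\bar A(s_i,\cdot)/\sqrt{rp_{\brow{A}}(s_i)}$, i.e.\ $\overline{SA} = S\bar A$. Entrywise domination is immediate since $\abs{\bar A(s_i,j)}\ge\abs{A(s_i,j)}$. For the Frobenius norm, $\norm{[S\bar A](i,\cdot)}^2 = \norm{\bar A(s_i,\cdot)}^2/(rp_{\brow{A}}(s_i)) = \fnorm{\bar A}^2/r = \phi^2\fnorm{A}^2/r$, so $\fnorm{S\bar A}^2 = \phi^2\fnorm{A}^2$, a constant independent of the sample. Combined with the hypothesis $\fnorm{SA}^2\ge\frac12\fnorm{A}^2$, this gives $\fnorm{S\bar A}^2 = \phi^2\fnorm{A}^2 \le 2\phi^2\fnorm{SA}^2$, so $\phi' = \fnorm{S\bar A}/\fnorm{SA} \le \sqrt2\,\phi \le 2\phi$, as required.

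**Implementing the two access types.** For query access to $SA$: given $(i,j)$, look up $s_i$ and $\alpha_i$ from the standard description and return $\alpha_i A(s_i,j)$, using one query to $A$ (available since $\OSQ{\varepsilon,\phi}(A)$ gives $\Q{}(A)$); cost $O(\costosq{A})$. For $\SQ{\varepsilon}(S\bar A)$ we must provide, per Definition \ref{def:qs-access-mat}, approximate sampling-and-query access to each row of $S\bar A$ and to $\row{S\bar A}$. Row $i$ of $S\bar A$ is just a rescaling of $\bar A(s_i,\cdot)$, for which we have $\SQ{\varepsilon}(\bar A(s_i,\cdot))$ from $\OSQ{\varepsilon,\phi}(A)$; rescaling a vector by a known scalar preserves $\SQ{\varepsilon}$ (the distribution $p_u$ is scale-invariant and the norm just scales), so we get $\SQ{\varepsilon}(S\bar A(i,\cdot))$ at cost $O(\costosq{A})$. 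For $\row{S\bar A}$: its $i$-th entry is $\norm{\bar A(s_i,\cdot)}/\sqrt{rp_{\brow{A}}(s_i)} = \fnorm{\bar A}/\sqrt r$, so $\row{S\bar A}$ is a constant vector (a scalar multiple of the all-ones vector), for which exact sampling-and-query access is trivial to implement from scratch — so in particular $\SQ{\varepsilon}$ for any $\varepsilon\ge0$ — at $O(1)$ cost, and its norm is $\fnorm{\bar A} = \phi\fnorm{A}$, which we know.

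**The main obstacle and remaining care.** The one genuinely delicate point is that Definition \ref{def:qos-access-mat} requires $\fnorm{\overline{SA}} = \phi'\fnorm{SA}$ \emph{exactly}, not just an upper bound, and $\phi'$ must be a value we can compute. Since $\fnorm{S\bar A} = \phi\fnorm{A}$ is known exactly and $\fnorm{SA}$ is something we can estimate but perhaps not know exactly, I would address this by noting that $\OSQ{}$ only needs us to exhibit \emph{some} admissible $\bar{\cdot}$ matrix together with \emph{some} valid $\phi'$; following the convention of \cite{Chia+JACM22}, one reads the definition as ``there exists $\phi'\le 2\phi$ such that $\fnorm{\overline{SA}}\le\phi'\fnorm{SA}$,'' and the bound $\fnorm{S\bar A}^2 = \phi^2\fnorm{A}^2 \le 2\phi^2\fnorm{SA}^2$ from the hypothesis does the job with $\phi' = \sqrt2\,\phi$. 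Everything else is bookkeeping: assembling the per-row access and the $\row{}$ access into the matrix-access format of Definition \ref{def:qs-access-mat}, and observing all costs are $O(\costosq{A})$ since every operation reduces to a constant number of queries/samples on $A$ or on $\bar A$ plus table lookups in the (already-given) standard description of $S$.
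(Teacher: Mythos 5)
Your construction is essentially the paper's own proof: take $\overline{SA}=S\bar A$, get entrywise domination for free, bound $\fnorm{S\bar A}$ against $\fnorm{SA}$ using the hypothesis $\fnorm{SA}^2\ge\frac{1}{2}\fnorm{A}^2$, implement query access to $SA$ from the standard description, row access by rescaling the $\SQ{\varepsilon}(\bar A(s_i,\cdot))$ accesses (using $p_{c\,u}=p_u$), and access to $\row{S\bar A}$ by noting it is essentially uniform. The only nit is the case $p_{\brow{A}}(s_i)=0$, which can occur since the indices are drawn from $\tilde p_{\brow{A}}$ and makes row $i$ of $S$ all-zero: then $\fnorm{S\bar A}^2$ is strictly smaller than $\fnorm{\bar A}^2$ and $\row{S\bar A}$ is uniform only over the nonzero rows, but this only strengthens your bound $\phi'\le 2\phi$ and uniform sampling over the nonzero rows is still trivial from the standard description, so the argument goes through exactly as in the paper (which handles this case via the index set $\Delta$).
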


\begin{proof}
Write $((s_1,\alpha_1),\ldots,(s_r,\alpha_r))$ the standard representation of $S$, where
\[
\alpha_i=\frac{1}{\sqrt{r p_{\brow{A}}(s_i)}}= \frac{\fnorm{\bar A}}{\sqrt{r}\norm{\bar A(s_i,\cdot)}}\le \frac{\sqrt{\phi}\fnorm{A}}{\sqrt{r}\norm{A(s_i,\cdot)}}
\]
if $\norm{\bar A(s_i,\cdot)}\neq 0$, and $\alpha_i=0$ if $\norm{\bar A(s_i,\cdot)}=0$. Define 
\[
\Delta=\left\{i\in\set{1}{r}\:|\: \norm{\bar A(s_i,\cdot)}\neq 0\right\}.
\]

For any $(i,j)\in\set{1}{r}\times\set{1}{n}$, we can compute $[SA](i,j)$ using the identity $[SA](i,j)=\alpha_i A(s_i,j)$, which requires only one query access to $A$ and an arithmetic operation. 

We define the matrix $\olsi{SA}=S\bar{A}$. We have 
\[
\fnorm{\olsi{SA}}^2=
\sum_{i=1}^r\abs{\alpha_i}^2\norm{\bar A(s_i,\cdot)}^2
=\frac{\abs{\Delta}}{r}\fnorm{\bar A}^2
\le
\phi\fnorm{A}^2\le 2\phi\fnorm{SA}^2
\]
 and $\abs{[\olsi{SA}](i,j)}=\abs{\alpha_i}\abs{\bar A (s_i,j)}\ge \abs{\alpha_i}\abs{A (s_i,j)}=\abs{[SA](i,j)}$ for all $(i,j)\in\set{1}{r}\times\set{1}{n}$.
We now explain how to implement $\SQ{\varepsilon}(\olsi{SA})$. \vspace{2mm}

\noindent{\bf Approximate sampling-and-query access to the rows of $\boldsymbol{\olsi{SA}}$.}
For each $i\in\set{1}{r}$, we explain how to implement $\SQ{\varepsilon}([\olsi{SA}](i,\cdot))$. We can implement query access from the identity $[\overline{SA}](i,j)=\alpha_i\bar A(s_i,j)$ using only one query access to $\bar A$ and one arithmetic operation. Similarly, we can compute $\norm{[\olsi{SA}](i,\cdot)}=\abs{\alpha_i}\norm{A(s_i,\cdot)}$ by getting the norm $\norm{\bar A(s_i,\cdot)}$. To sample from a nonzero vector $\olsi{SA}(i,\cdot)$, we simply sample from the distribution $\tilde p_{\bar{A}(s_i,\cdot)}$. Since $p_{\olsi{SA}(i,\cdot)}=p_{\bar{A}(s_i,\cdot)}$, we have 
\[
\stat{\tilde p_{\bar{A}(s_i,\cdot)} - p_{\olsi{SA}(i,\cdot)}} = \stat{\tilde p_{\bar{A}(s_i,\cdot)} - p_{\bar{A}(s_i,\cdot)}}\le \varepsilon.
\]
The cost of each of these operations is $O(\costosq{A})$.\vspace{2mm}

\noindent{\bf Approximate sampling-and-query access to the row norm vector of $\boldsymbol{\olsi{SA}}$.}
We now explain how to implement $\SQ{\varepsilon}(\row{\olsi{SA}})$ for the vector $\row{\olsi{SA}}=(\norm{[\olsi{SA}](1,\cdot)},\ldots, \norm{[\olsi{SA}](r,\cdot)})$. We have $\row{\olsi{SA}}(i)=\abs{\alpha_i}\norm{\bar A(s_i,\cdot)}= \sqrt{1/r}\fnorm{\bar A}$ for all $i\in\Delta$, and $\row{\olsi{SA}}(i)=0$ for all $i\notin\Delta$. We can implement query access to $\row{\olsi{SA}}$ using one query to the norm $\norm{\bar A(s_i,\cdot)}$ (or to the norm $\norm{\brow{A}}=\fnorm{\bar A}$) and one arithmetic operation. Since $\norm{\row{\olsi{SA}}}=\sqrt{\abs{\Delta}/r}\fnorm{\bar A}=\sqrt{\abs{\Delta}/r}\norm{\brow{A}}$, we can compute $\norm{\row{\olsi{SA}}}$ by getting $\norm{\brow{A}}$, which can be done at cost $\costosq{A}$.
Implementing (perfect) sampling access to $\row{\olsi{SA}}$ is trivial since $p_{\row{\olsi{SA}}}$ is the uniform distribution over~$\Delta$.
\end{proof}

\begin{lemma}\label{lemma:sample2}
For a nonzero matrix $A\in \Comp^{m\times n}$, any parameters $\varepsilon\in[0,1]$, $\phi\ge 1$ and any integer $r\ge 1$, assume that we have $\OSQ{\varepsilon,\phi}(A)$ and know the standard representation of the $(r,\varepsilon,\phi)$-approximate importance matrix sketch of $A$ associated with $\OSQ{\varepsilon,\phi}(A)$.
Assume that the inequality
$
\fnorm{SA}^2\ge \frac{1}{2}\fnorm{A}^2
$
 holds. We can then implement $\OSQ{\varepsilon,\phi'}((SA)^\dagger)$ with $\phi'\le 2\phi$ at cost $O(r\cdot \costosq{A})$.
\end{lemma}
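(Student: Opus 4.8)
The plan is to mimic the proof of Lemma~\ref{lemma:sample1} with the roles of rows and columns exchanged; the one genuinely new ingredient is sampling from the row-norm vector of the transposed sketch. Let $((s_1,\alpha_1),\ldots,(s_r,\alpha_r))$ be the standard representation of $S$, put $\Delta=\{i\in\set{1}{r}:\norm{\bar A(s_i,\cdot)}\neq 0\}$, and take as overestimating matrix $\bar B:=(\olsi{SA})^\dagger=(S\bar A)^\dagger\in\Comp^{n\times r}$, where $\olsi{SA}=S\bar A$ is the matrix constructed in the proof of Lemma~\ref{lemma:sample1}. First I would record the ``global'' facts: $[SA]^\dagger(j,i)=\alpha_i A(s_i,j)^\ast$ and $\bar B(j,i)=\alpha_i\bar A(s_i,j)^\ast$, so $\abs{\bar B(j,i)}\ge\abs{[SA]^\dagger(j,i)}$ entrywise and query access to $(SA)^\dagger$ costs $O(\costosq{A})$; and, using transpose-invariance of the Frobenius norm together with the norm computation already carried out in the proof of Lemma~\ref{lemma:sample1}, $\fnorm{\bar B}^2=\fnorm{\olsi{SA}}^2=\tfrac{\abs{\Delta}}{r}\fnorm{\bar A}^2\le\phi\fnorm{A}^2\le 2\phi\fnorm{SA}^2=2\phi\fnorm{(SA)^\dagger}^2$, the last inequality using the hypothesis $\fnorm{SA}^2\ge\tfrac12\fnorm{A}^2$. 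Hence $\bar B$ is an admissible overestimating matrix for $(SA)^\dagger$, so implementing $\Q{}((SA)^\dagger)$ and $\SQ{\varepsilon}(\bar B)$ yields $\OSQ{\varepsilon,\phi'}((SA)^\dagger)$ with $\phi'\le 2\phi$.

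Implementing $\SQ{\varepsilon}$ for the nonzero rows of $\bar B$ is easy: a row $\bar B(j,\cdot)=(\alpha_1\bar A(s_1,j)^\ast,\ldots,\alpha_r\bar A(s_r,j)^\ast)$ has only $r$ coordinates, each computable with one query to $\bar A$, so I would simply read off the whole row with $r$ queries, then compute $\norm{\bar B(j,\cdot)}$ and sample from $p_{\bar B(j,\cdot)}$ exactly; this is $0$-approximate (hence $\varepsilon$-approximate for every $\varepsilon\ge 0$) at cost $O(r\cdot\costosq{A})$ per operation.

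The crux is sampling from the $n$-coordinate row-norm vector $\row{\bar B}$, which cannot be tabulated. Here I would exploit that $\alpha_i^2\norm{\bar A(s_i,\cdot)}^2=\fnorm{\bar A}^2/r$ is constant on $\Delta$, so that
\[
p_{\row{\bar B}}(j)=\frac{\sum_{i\in\Delta}\alpha_i^2\abs{\bar A(s_i,j)}^2}{\sum_{i\in\Delta}\alpha_i^2\norm{\bar A(s_i,\cdot)}^2}=\sum_{i\in\Delta}\frac{1}{\abs{\Delta}}\,p_{\bar A(s_i,\cdot)}(j),
\]
i.e. the ideal row-norm distribution of $\bar B$ is the uniform mixture over $i\in\Delta$ of the row distributions of $\bar A$, each of which we can sample $\varepsilon$-approximately via $\SQ{\varepsilon}(\bar A(s_i,\cdot))$. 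So I would draw $i$ uniformly from $\Delta$ (precomputed from the standard representation of $S$) and then $j$ from $\tilde p_{\bar A(s_i,\cdot)}$; by convexity of the total variation distance the output distribution $P$ satisfies $\stat{P-p_{\row{\bar B}}}\le\tfrac{1}{\abs{\Delta}}\sum_{i\in\Delta}\stat{\tilde p_{\bar A(s_i,\cdot)}-p_{\bar A(s_i,\cdot)}}\le\varepsilon$. Query access to $\row{\bar B}$ comes from $\row{\bar B}(j)=\sqrt{\sum_i\alpha_i^2\abs{\bar A(s_i,j)}^2}$ ($r$ queries to $\bar A$), and $\fnorm{\row{\bar B}}=\fnorm{\bar B}=\sqrt{\abs{\Delta}/r}\,\fnorm{\bar A}$ is read off from the known value $\fnorm{\bar A}$. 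Collecting costs, every routine runs in $O(r\cdot\costosq{A})$, the claimed bound. I expect this last step to be the main obstacle: keeping the total variation error at $\varepsilon$ rather than letting it accumulate through the mixture — the same bookkeeping that appears in Propositions~\ref{prop:oversampling} and~\ref{prop:lin1} — and it is precisely the constancy of $\alpha_i^2\norm{\bar A(s_i,\cdot)}^2$ on $\Delta$ that makes the mixing weights uniform and the estimate clean.
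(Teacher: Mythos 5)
Your proposal is correct and follows essentially the same route as the paper's proof: the same overestimating matrix $(S\bar A)^\dagger$, exact row sampling by tabulating the $r$ entries of each row, and sampling the row-norm vector by drawing $i$ uniformly from $\Delta$ and then $j$ from $\tilde p_{\bar A(s_i,\cdot)}$, with the total variation error controlled exactly as in the paper via the constancy of $\alpha_i^2\norm{\bar A(s_i,\cdot)}^2$ on $\Delta$ (your convexity argument is a compact form of the paper's explicit calculation).
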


\begin{proof}
For conciseness, let us write $M=(SA)^\dagger$. Write $((s_1,\alpha_1),\ldots,(s_r,\alpha_r))$ the standard representation of $S$ and define the set $\Delta$ as in the proof of Lemma \ref{lemma:sample1}.

 For any $(i,j)\in\set{1}{r}\times\set{1}{n}$, we can compute $M(j,i)$ using the identity $M(j,i)=[SA](i,j)=\alpha_i A(s_i,j)$, which requires only one query access to $A$ and an arithmetic operation. 

We define the matrix $\olsi M=(S\bar A)^\dagger$. We have 
\[
\fnorm{\olsi M}^2=
\sum_{i=1}^r\abs{\alpha_i}^2\norm{\bar A(s_i,\cdot)}^2=\frac{\abs{\Delta}}{r} 
\fnorm{\bar A}^2\le \phi\fnorm{A}^2\le 2\phi\fnorm{SA}^2=2\phi\fnorm{M}^2
\]
and $\abs{\olsi M(j,i)}=\abs{\alpha_i}\abs{\bar A(s_i,j)}\ge \abs{\alpha_i}\abs{A(s_i,j)}=\abs{M(j,i)}$ for any $(i,j)\in\set{1}{r}\times\set{1}{n}$.
We now explain how to implement $\SQ{\varepsilon}(\olsi M)$. \vspace{2mm}

\noindent {\bf Approximate sampling-and-query access to the rows of $\boldsymbol{\olsi M}$.}
For each $j\in\set{1}{n}$, we explain how to implement $\SQ{\varepsilon}(\olsi M (j,\cdot))$. For any $i\in\set{1}{r}$, we can implement query access to this vector from the identity $\olsi M(j,i)=\alpha_i\bar A(s_i,j)$ using only one query access to $\bar A$ and one arithmetic operation. We can compute its norm $\norm{\olsi M(j,\cdot)}$ by querying $\bar A(s_1,j),\ldots, \bar A(s_r,j)$ and then performing $O(r)$ arithmetic operations, which has cost $O(r\cdot \costosq{A})$. Similarly, we can (exactly) sample from the distribution $p_{\olsi M(j,\cdot)}$ by first querying $\bar A(s_1,j),\ldots, \bar A(s_r,j)$ and then explicitely computing this distribution. The cost is again $O(r\cdot \costosq{A})$. \vspace{2mm}

\noindent{\bf Approximate sampling-and-query access to the row norm vector of $\boldsymbol{\olsi{M}}$.} 
We now explain how to implement $\SQ{\varepsilon}(\browolsi{M})$ for the vector $\browolsi{M}=(\norm{\olsi M(1,\cdot)},\ldots,\norm{\olsi M(n,\cdot)})$. Query access can be done at cost $O(r\cdot \costosq{A})$ by computing $\norm{\olsi M(j,\cdot)}$ as described above. Since $\norm{\row{\olsi{M}}}=\fnorm{\olsi M}=\sqrt{\abs{\Delta}/r}\fnorm{\bar {A}}$, we can compute $\norm{\row{\olsi{M}}}$ at cost $\costosq{A}$ by getting $\norm{\brow{A}}=\fnorm{\bar A}$. Sampling from $\browolsi{M}$ is more delicate. Here is our strategy: we take an index $i\in\Delta$ uniformly at random,  and then return the index $j\in\set{1}{n}$ sampled from the distribution $\tilde p_{\bar A(s_i,\cdot)}$. Using the fact that $\norm{\bar A(s_i,\cdot)}=0$ for $i\notin \Delta$, the total variation distance between the resulting distribution and the distribution $p_{\browolsi{M}}$ can be evaluated as follows:
\begin{align*}
\frac{1}{2}\sum_{j=1}^n 
\left\vert
\frac{1}{\abs{\Delta}}\sum_{i\in\Delta} \tilde p_{\bar A(s_i,\cdot)} (j)
-
p_{\row{\olsi M}}(j)
\right\vert
=&
\frac{1}{2}\sum_{j=1}^n 
\left\vert
\frac{1}{\abs{\Delta}}\sum_{i\in\Delta} \tilde p_{\bar A(s_i,\cdot)} (j)
-
\frac{\norm{\olsi M(j,\cdot)}^2}{\norm{\browolsi{M}}^2}
\right\vert\\
=&
\frac{1}{2}\sum_{j=1}^n 
\left\vert
\frac{1}{\abs{\Delta}}\sum_{i\in\Delta} \tilde p_{\bar A(s_i,\cdot)} (j)
-
\frac{r}{\abs{\Delta}}
\sum_{i\in\Delta}
\frac{\norm{[S\bar A](i,j)}^2}{\fnorm{\bar A}^2}
\right\vert\\
\le&
\frac{1}{2}\sum_{j=1}^n 
\sum_{i\in\Delta}
\left\vert
\frac{\tilde p_{\bar A(s_i,\cdot)} (j)}{\abs{\Delta}}
-
\frac{r\norm{[S\bar A](i,j)}^2}{\abs{\Delta}\fnorm{\bar A}^2}
\right\vert\\
=&
\frac{1}{2}\sum_{i\in\Delta}
\sum_{j=1}^n 
\left\vert
\frac{\tilde p_{\bar A(s_i,\cdot)} (j)}{\abs{\Delta}}
-
\frac{r\alpha_i^2\norm{\bar A(s_i,j)}^2}
{\abs{\Delta}\fnorm{\bar A}^2}
\right\vert\\
=&
\frac{1}{2}\sum_{i\in\Delta}
\sum_{j=1}^n 
\left\vert
\frac{\tilde p_{\bar A(s_i,\cdot)} (j)}{\abs{\Delta}}
-
\frac{\fnorm{\bar A}^2}{\abs{\Delta}\norm{\bar A(s_i,\cdot)}^2}\frac{\norm{\bar A(s_i,j)}^2}
{\fnorm{\bar A}^2}
\right\vert\\
=&
\frac{1}{\abs{\Delta}}
\sum_{i\in\Delta}
\stat{
\tilde p_{\bar A(s_i,\cdot)}
-
p_{\bar A(s_i,\cdot)}
}\\
\le& \varepsilon.
\end{align*}
The cost to output one sample is $O(\costosq{A})$.
\end{proof}

\section{Applications}\label{sec:app}
In this section we discuss applications of the framework and techniques developed in the previous sections.  In Section \ref{sub:IP} we first show how to apply Theorem~\ref{th:MM1} to estimate the inner product of two vectors. In Sections \ref{sub:app2} and \ref{sub:app8}, we apply our machinery to obtain a robust dequantization of the Quantum Singular Value Transformation in the low-rank setting (Section~\ref{sub:app2}) and the sparse setting (Section \ref{sub:app8}). In Sections \ref{sub:app3}, \ref{sub:app4} and \ref{sub:app5}  we then show how to obtain a robust dequantization of known quantum algorithms for supervised clustering, recommendation systems and low-rank matrix inversion. 

When we consider in this section oversampling-and-query access to the input, for simplicity we always assume that we know the oversampling parameter $\phi$ (if we know instead only an upper bound $\phi_{\textrm{max}}\ge \phi$, all the statements still hold by replacing $\phi$ by $\phi_{\textrm{max}}$).

\subsection{Estimation of the inner product with oversampling access}\label{sub:IP}
In this subsection we describe how to estimate the inner product with oversampling access. Here is our main result, which essentially generalizes Proposition \ref{th:IP} to approximate oversampling-and-query access.
\begin{proposition}\label{prop:IP}
For any $\varepsilon\in[0,1]$, any $\delta\in(0,1]$, any $\xi>0$ and any $\phi\ge 1$,
assume that 
\begin{itemize}
\item
we have $\OSQ{\varepsilon,\phi}(u)$ for a nonzero vector $u\in\Comp^m$,
\item
we have $\Q{}(v)$ for a nonzero vector $v\in\Comp^m$ and know $\norm{v}$.
\end{itemize}
 We can then output an estimator $\alpha$ such that $\abs{\alpha-(u,v)}\le\left(2\sqrt{2\phi\varepsilon}+\xi\right)\norm{u}\norm{v}$ holds with probability at least $1-\delta$ at cost $O\left(\frac{\phi\log(1/\delta)}{\xi^2}(\costosq{u}+\costq{v})\right)$.
\end{proposition}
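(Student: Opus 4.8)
The plan is to deduce Proposition~\ref{prop:IP} from the matrix‑multiplication sketch of Theorem~\ref{th:MM1}, so that Proposition~\ref{th:IP} becomes the case $\phi=1$. First I would view $u$ and $v$ as $m\times 1$ matrices $X:=u$ and $Y:=v$, so that $\fnorm{X}=\norm{u}$, $\fnorm{Y}=\norm{v}$, and the $1\times 1$ matrix $X^\dagger Y$ equals $(u,v)$ up to a complex conjugation (which only conjugates the final estimate and does not change the error bound). From $\OSQ{\varepsilon,\phi}(u)$ we get $\SQ{\varepsilon}(\bar u)$ with $\norm{\bar u}^2=\phi\norm{u}^2$ and $\abs{\bar u(i)}\ge\abs{u(i)}$; since $p_{\row{X}}=p_u$ and $p_{\bar u}(i)=\abs{\bar u(i)}^2/(\phi\norm{u}^2)\ge\tfrac1\phi p_u(i)$ while $\stat{\tilde p_{\bar u}-p_{\bar u}}\le\varepsilon$, an $r$‑row matrix $\Sigma$ sampled from $(p_{\bar u},\tilde p_{\bar u})$ is directly an $(r,\varepsilon,\phi)$‑approximate importance matrix sketch of $X$ in the sense of Definition~\ref{def:imp}, whose standard description is computable in $O(r\,\costosq{u})$ time exactly as in the proof of Lemma~\ref{lemma:createsk}. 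Then I would form $\bar\Sigma$ by zeroing the rows indexed by $\Gamma=\{j:\abs{v(j)}\ge\tfrac{\norm{v}}{\sqrt{2\phi\varepsilon}\,\norm{u}}\abs{u(j)}\}$ (the set from Theorem~\ref{th:MM1}); membership in $\Gamma$ is decidable with $O(1)$ calls to $\Q{}(u)$ and $\Q{}(v)$ together with $\norm{v}$ (given) and $\norm{u}=\norm{\bar u}/\sqrt{\phi}$.

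Rather than invoke the final concentration statement of Theorem~\ref{th:MM1} verbatim, I would use the two quantitative ingredients established inside its proof: the estimator $\alpha_0:=X^\dagger\bar\Sigma^\dagger\bar\Sigma Y=(\bar\Sigma u)^\dagger(\bar\Sigma v)$ obeys the \emph{deterministic} bias bound $\abs{\ex{\alpha_0}-X^\dagger Y}\le 2\sqrt{2\phi\varepsilon}\,\norm{u}\norm{v}$ and the variance bound $\var{\alpha_0}\le\tfrac{2\phi}{r}\norm{u}^2\norm{v}^2$, and is computable in $O\!\left(r(\costosq{u}+\costq{v})\right)$ time. Choosing $r=\Theta(\phi/\xi^2)$ makes $\var{\alpha_0}\le\tfrac{\xi^2}{8}\norm{u}^2\norm{v}^2$, so by Chebyshev's inequality a single sample of $\alpha_0$ lies within $\tfrac{\xi}{\sqrt2}\norm{u}\norm{v}$ of $\ex{\alpha_0}$ with probability at least $3/4$. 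Running $O(\log(1/\delta))$ independent copies and applying the powering lemma (Lemma~\ref{lemma:powering}) then yields $\alpha$ with $\abs{\alpha-\ex{\alpha_0}}\le\xi\norm{u}\norm{v}$ with probability at least $1-\delta$; the triangle inequality with the bias bound gives $\abs{\alpha-(u,v)}\le(2\sqrt{2\phi\varepsilon}+\xi)\norm{u}\norm{v}$. The total cost is $O(\log(1/\delta))$ copies, each of cost $O\!\left(r(\costosq{u}+\costq{v})\right)$, i.e.\ $O\!\left(\tfrac{\phi\log(1/\delta)}{\xi^2}(\costosq{u}+\costq{v})\right)$, as claimed.

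The only genuinely delicate point — and the reason this is not a one‑line corollary of Theorem~\ref{th:MM1} — is the amplification step. The powering lemma inflates by a factor $\sqrt2$ any error it controls, so applying it to a black‑box use of Theorem~\ref{th:MM1} with constant internal failure probability would inflate the systematic bias $2\sqrt{2\phi\varepsilon}$ to $4\sqrt{\phi\varepsilon}$, which is not allowed. One must therefore separate the deterministic bias from the mean‑zero fluctuation and amplify only the latter, which is precisely why I extract the bias and variance bounds from the proof of Theorem~\ref{th:MM1} instead of using its statement. Equivalently, one can simply re‑run the warm‑up argument of Proposition~\ref{th:IP} with $p_u,\tilde p_u$ replaced throughout by $p_{\bar u},\tilde p_{\bar u}$ and the threshold defining $\Gamma$ rescaled by $\sqrt{\phi}$. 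Everything else is routine bookkeeping: the conjugation convention, deciding $\Gamma$‑membership with only query access to $v$, and the identification $\costosq{u}=\costosq{X}$ for the $m\times 1$ matrix $X=u$.
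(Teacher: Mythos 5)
Your proposal is correct and follows essentially the same route as the paper's proof: build the $(r,\varepsilon,\phi)$-approximate importance sketch from $\OSQ{\varepsilon,\phi}(u)$ via Lemma~\ref{lemma:createsk}, take $r=\Theta(\phi/\xi^2)$, and use the bias and variance bounds from inside the proof of Theorem~\ref{th:MM1} together with Chebyshev, Lemma~\ref{lemma:powering} applied only to the fluctuation around the mean, and the triangle inequality. Your observation that the powering lemma must not be applied to the full error of Theorem~\ref{th:MM1} (to avoid inflating the systematic bias by $\sqrt{2}$) matches exactly how the paper structures the argument, and the only cosmetic difference is your choice $X=u$, $Y=v$ instead of the paper's $X=u^\ast$, $Y=v^\ast$, which you correctly note only conjugates the estimate.
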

\begin{proof}
Take 
\[
r=\ceil{\frac{16\phi}{\xi^2}}.
\]
and consider the $(r,\varepsilon,\phi)$-approximate importance matrix sketch $\Sigma$ of $u$ associated with $\OSQ{\varepsilon,\phi}(u)$.
The standard description of this sketch can be computed at cost $O(r\cdot \costosq{u})$ via Lemma \ref{lemma:createsk}.
We apply Theorem~\ref{th:MM1} with $n=n'=1$ and $X=u^\ast$ and $Y=v^\ast$ (remember that $u^\ast$ and $v^\ast$ denote the complex conjugate of $u$ and $v$). Consider the complex number $\beta=X^\dagger \bar\Sigma^\dagger \bar\Sigma Y$, which can be computed at cost $O(r\cdot (\costosq{u}+\costq{v}))$. The first part of the proof of Theorem~\ref{th:MM1} shows that the inequality
\[
\abs{\ex{\beta}-(u,v)}\le 2\sqrt{2\phi\varepsilon}\norm{u}\norm{v}
\]
holds. The second part of the proof of Theorem~\ref{th:MM1} shows that the inequality
\[
\abs{\beta-\ex{\beta}}\le \frac{\xi}{\sqrt{2}}\norm{u}\norm{v}.
\]
holds with probability at least $3/4$.
By using Lemma \ref{lemma:powering}, we can get an estimate $\alpha$ such that $\abs{\alpha-\ex{\beta}}\le \xi\norm{u}\norm{v}$ holds with probability at least $1-\delta$. From the triangle inequality we thus get
\[
\Prob{\abs{\alpha-(u,v)}\le\left(2\sqrt{2\phi\varepsilon}+\xi\right)\norm{u}\norm{v}}\ge 1-\delta,
\]
as claimed.
\end{proof}

As a corollary, we can estimate values of the form $u^\dagger Av$ as well:
\begin{corollary}\label{expest}
For any $\varepsilon\in[0,1]$, any $\delta\in(0,1]$, any $\xi>0$ and any $\phi\ge 1$,
assume that 
\begin{itemize}
\item
we have $\OSQ{\varepsilon,\phi}(A)$ for a nonzero matrix $A\in\Comp^{m\times m}$,
\item
we have $\Q{}(u)$ and $\Q{}(v)$ for two nonzero vectors $u,v\in\Comp^m$ and know $\norm{u}$ and $\norm{v}$.
\end{itemize}
 We can then output an estimator $\alpha$ such that $\abs{\alpha-u^\dagger A v}\le\left(4\sqrt{\varepsilon}+\xi\right)\norm{u}\fnorm{A}\norm{v}$ holds with probability at least $1-\delta$ at cost $O\left(\frac{\phi\log(1/\delta)}{\xi^2}(\costosq{A}+\costq{u}+\costq{v})\right)$.
\end{corollary}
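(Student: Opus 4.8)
The plan is to derive Corollary \ref{expest} from Proposition \ref{prop:IP} by a vectorization trick that turns the bilinear form $u^\dagger A v$ into the inner product of two long vectors. Writing $\mathrm{vec}(\cdot)$ for row-major vectorization, set $\mathbf a=\mathrm{vec}(A)\in\Comp^{m^2}$ (so $\mathbf a_{(i,j)}=A(i,j)$) and $\mathbf b=\mathrm{vec}(uv^\dagger)\in\Comp^{m^2}$ (so $\mathbf b_{(i,j)}=u(i)\overline{v(j)}$). Then $\sum_{i,j}\mathbf a_{(i,j)}\,\overline{\mathbf b_{(i,j)}}=\sum_{i,j}\overline{u(i)}\,A(i,j)\,v(j)=u^\dagger A v$, while $\norm{\mathbf a}=\fnorm{A}$ and $\norm{\mathbf b}=\norm{u}\norm{v}$. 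Query access to $\mathbf b$ (return $u(i)\overline{v(j)}$ on input $(i,j)$) and knowledge of $\norm{\mathbf b}$ are immediate from $\Q{}(u)$, $\Q{}(v)$ and the knowledge of $\norm{u},\norm{v}$. So it suffices to equip $\mathbf a$ with approximate oversampling-and-query access of the right parameters and feed $(\mathbf a,\mathbf b)$ into Proposition \ref{prop:IP}.

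The main step is to implement approximate oversampling-and-query access to $\mathbf a$ from $\OSQ{\varepsilon,\phi}(A)$. Let $\bar A$ be the matrix underlying $\OSQ{\varepsilon,\phi}(A)$, so that $\SQ{\varepsilon}(\bar A)$ holds, $\abs{\bar A(i,j)}\ge\abs{A(i,j)}$ for all $(i,j)$, and $\fnorm{\bar A}=\phi\fnorm{A}$; put $\bar{\mathbf a}=\mathrm{vec}(\bar A)$. Then $\abs{\bar{\mathbf a}_{(i,j)}}\ge\abs{\mathbf a_{(i,j)}}$ and $\norm{\bar{\mathbf a}}=\fnorm{\bar A}=\phi\fnorm{A}$, so $\bar{\mathbf a}$ is a valid oversampling vector for $\mathbf a$; query access to $\bar{\mathbf a}$ and the value $\norm{\bar{\mathbf a}}$ come from $\Q{}(\bar A)$ and $\SQ{\varepsilon}(\row{\bar A})$. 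It remains to produce approximate samples from $p_{\bar{\mathbf a}}$, for which I would use a two-stage procedure: draw a row index $i$ from the distribution $\tilde p_{\row{\bar A}}$ guaranteed by $\SQ{\varepsilon}(\row{\bar A})$, then draw a column index $j$ from the distribution $\tilde p_{\bar A(i,\cdot)}$ guaranteed by $\SQ{\varepsilon}(\bar A(i,\cdot))$, and output $(i,j)$. Since $p_{\bar{\mathbf a}}(i,j)=\abs{\bar A(i,j)}^2/\fnorm{\bar A}^2$ factors exactly as $p_{\row{\bar A}}(i)\,p_{\bar A(i,\cdot)}(j)$, a hybrid argument (insert $p_{\row{\bar A}}(i)\,\tilde p_{\bar A(i,\cdot)}(j)$ and use the triangle inequality) bounds the total variation distance between the output distribution and $p_{\bar{\mathbf a}}$ by $\stat{\tilde p_{\row{\bar A}}-p_{\row{\bar A}}}+\max_i\stat{\tilde p_{\bar A(i,\cdot)}-p_{\bar A(i,\cdot)}}\le 2\varepsilon$, at cost $O(\costosq{A})$ per sample. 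This yields approximate oversampling-and-query access to $\mathbf a$ at per-operation cost $O(\costosq{A})$.

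Finally I would invoke Proposition \ref{prop:IP} on the pair $(\mathbf a,\mathbf b)$ with the parameters $\delta$ and $\xi$, which outputs an estimator $\alpha$ with the claimed bias and success probability and, after carrying the vectorization bookkeeping above through its complexity bound, at the claimed cost. One routine point must be checked to see that everything is executable within that cost: the proof of Proposition \ref{prop:IP} (via Theorem \ref{th:MM1}) requires, for each sampled coordinate $(i,j)$, a test of membership in a threshold set $\Gamma$; for our vectors that set has the form $\{(i,j):\abs{u(i)}\,\abs{v(j)}\ge\kappa\,\abs{A(i,j)}\}$ for an explicitly known constant $\kappa$ (built from $\norm{u},\norm{v},\fnorm{A},\phi,\varepsilon$), so the test uses only the single entries $\abs{u(i)},\abs{v(j)},\abs{A(i,j)}$ available through $\Q{}(u),\Q{}(v),\Q{}(A)$ and never a full row norm of $A$ — this is precisely what keeps the reduction sublinear. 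The main obstacle in this plan is the middle paragraph: making the vectorized sampling rigorous, i.e.\ verifying that the two-stage procedure realizes a distribution within total variation distance $2\varepsilon$ of $p_{\bar{\mathbf a}}$, and carefully tracking how the oversampling parameter and the sampling error of $\OSQ{\varepsilon,\phi}(A)$ propagate through the vectorization into the bias, the failure probability, and the cost of the final estimate.
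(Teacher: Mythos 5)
Your proposal is correct and follows essentially the same route as the paper's own proof: the paper also vectorizes $A$ and $vu^\dagger$, implements $\OSQ{2\varepsilon,\phi}$ of the vectorized $A$ by sampling a row index from $\tilde p_{\row{\bar A}}$ and then a column index from $\tilde p_{\bar A(i,\cdot)}$ with the same hybrid total-variation bound of $2\varepsilon$, and then invokes Proposition \ref{prop:IP}. The only detail the paper treats explicitly that you leave implicit is the case where the sampled row of $\bar A$ has zero norm (so that $\SQ{\varepsilon}(\bar A(i,\cdot))$ is unavailable); the paper then outputs a fixed column index, and since such rows carry zero mass under the ideal distribution this is absorbed into the same $2\varepsilon$ bound — a routine fix to your two-stage sampler.
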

\begin{proof}
Observe that $u^\dagger A v=\Tr(u^\dagger A v)=\Tr(Av u^\dagger)=(x,y)$, where $x$ is the vector of length~$m^2$ obtained by concatenating the entries of matrix $A$ row by row, and $y$ is the vector of length~$m^2$ obtained by concatenating the entries of the matrix $vu^\dagger$ column by column. 

We show how to implement $\OSQ{2\varepsilon,\phi}(x)$ at cost $O(\costosq{A})$. We define $\bar x$ as the vector of length~$m^2$ obtained by concatenating the entries of matrix $\bar A$ row by row, where $\bar A$ represents the matrix in Definition \ref{def:qos-access-mat}. We index the entries of $x$ and $\bar x$ by $(i,j)\in\set{1}{m}\times\set{1}{m}$, so that $x(i,j)=A(i,j)$ and $\bar x(i,j)=\bar A(i,j)$.
Query access to $x$ and $\bar x$, as well as the norms $\norm{x}$ and $\norm{\bar x}$, can trivially be obtained from $\OSQ{\varepsilon,\phi}(A)$. Let us write $\Delta=\{i\in\set{1}{m}\:|\:\norm{\bar A(i,\cdot)}\neq 0\}$
To sample from~$\bar x$, we sample an index $i\in\set{1}{m}$ from $\tilde p_{\brow{A}}$. If $i\in\Delta$, then we sample an index $j\in\set{1}{m}$ from $\tilde p_{\bar A(i,\cdot)}$ and output the index $(i,j)$. If $i\notin\Delta$, then we output the index $(i,1)$.
Let $P\colon\set{1}{m}\times\set{1}{m}\to[0,1]$ denote the resulting probability distribution. 
For conciseness, let us write $p=p_{\row{\bar A}}$, $\tilde p=\tilde p_{\row{\bar A}}$ and $p_i=p_{\bar A(i,\cdot)}$, $\tilde p_i=\tilde p_{\bar A(i,\cdot)}$ for all $i\in\set{1}{m}$.
We have 
\begin{align*}
\stat{P-p_{\bar x}}&=\frac{1}{2}\sum_{i\in\Delta}\sum_{j=1}^m\abs{\tilde p(i)\tilde p_i(j)-\frac{\abs{\bar A(i,j)}^2}{\fnorm{\bar A}^2}}+
\frac{1}{2}\sum_{i\notin\Delta}\abs{\tilde p(i)-0}\\
&=\frac{1}{2}\sum_{i\in\Delta}\sum_{j=1}^m\abs{\tilde p(i)\tilde p_i(j)- p(i)  p_i(j)}+
\frac{1}{2}\sum_{i\notin\Delta}\abs{\tilde p(i)-p(i)}\\
&=\frac{1}{2}\sum_{i\in\Delta}\sum_{j=1}^m\abs{\tilde p(i)\tilde p_i(j)- \tilde p(i)  p_i(j)}+
\frac{1}{2}\sum_{i\in\Delta}\sum_{j=1}^m\abs{\tilde p(i) p_i(j)- p(i)  p_i(j)}+
\frac{1}{2}\sum_{i\notin\Delta}\abs{\tilde p(i)-p(i)}\\
&=\sum_{i\in\Delta}\tilde p(i)\stat{\tilde p_i-p_i}+
\frac{1}{2}\sum_{i\in\Delta}\abs{\tilde p(i)-p(i)}
+
\frac{1}{2}\sum_{i\notin\Delta}\abs{\tilde p(i)-p(i)}\\
&
\le 2\varepsilon.
\end{align*}

We can implement $\Q{}(y)$ at cost $\costq{u}+\costq{v}$. We have $\norm{x}=\fnorm{A}$ and $\norm{y}=\norm{u}\norm{v}$.
We then use Proposition \ref{prop:IP} to estimate the inner product, which gives the claimed approximation factor and complexity.
\end{proof}

\subsection{Dequantizing the Quantum Singular Value Transformation}\label{sub:app2}
In this subsection we consider the following problem.

\begin{center}
\fbox{
\begin{minipage}{15 cm} \vspace{2mm}

{\sf Quantum Singular Value Transformation} 

\begin{description}[listparindent=-10mm]
\item[Input:]a matrix $A\in\Comp^{m\times n}$ with $\fnorm{A}=1$, a vector $b\in\Comp^n$ with $\norm{b}=1$, an even polynomial $\myp$ of degree $d\ge 2$ such that $\abs{\myp(x)}\le 1$ for all $x\in[-1,1]$, and two parameters $\delta\in(0,1]$ and $\eta\in(0,1/2]$.
\item[Goal:]
find with probability at least $1-\delta$ a vector $v\in \Comp^n$ such that 
\begin{equation}\label{cond:qsvt}
\norm{v-\myp(\sqrt{A^\dagger A})b}\le \eta\norm{\myp(\sqrt{A^\dagger A})b}.
\end{equation}
\end{description}
\end{minipage}
}
\end{center}

Quantum algorithms for the Quantum Singular Value Transformation \cite{Gilyen+STOC19} output with probability at least $1-\delta$ a quantum state proportional to the vector $\myp(\sqrt{A^\dagger A})b$ in time 
\[
\tilde O\left(\frac{d\log(1/\delta)}{\norm{p(\sqrt{A^\dagger A})b}}\right)
\]
when $A$ and $b$ are given in QRAM.  The dequantization framework from \cite{Chia+JACM22} assumes that we have perfect sampling-and-query access to $A$ and $b$ (i.e., $\SQ{}(A)$ and $\SQ{}(b)$), and gives a classical algorithm
that after a preprocessing stage running in time 
\[
\tilde O\left(
\frac{d^{16}\log^3(1/\delta)}{\eta^6\norm{\myp(\sqrt{A^\dagger A})b}^6}
\right),
\] 
implements $\RSQ{0,\delta',\eta'}(v)$ for a vector $v$ satisfying Condition~(\ref{cond:qsvt}) with probability at least $1-\delta$
at cost
\[
\tilde O\left( \frac{d^{12}\log^2(1/\delta)\log(1/\delta')}{\eta^4\eta'^2\norm{\myp(\sqrt{A^\dagger A})b}^6}\right),
\]
for any $\delta',\eta'\in (0,1]$.
We show how to achieve a similar running time when given only $\SQ{\varepsilon}(A)$ and $\SQ{\varepsilon}(b)$ for $\varepsilon>0$, when $\varepsilon$ is small enough. More generally, we also consider the case where we have $\OSQ{\varepsilon,\phi}(A)$ and $\OSQ{\varepsilon,\phi}(b)$ with $\phi>1$. 

\begin{theorem}\label{th:deq-evenSVT}
There exists an absolute constant $\tau\in(0,1)$ such that the following holds.
Assume that we have $\OSQ{\varepsilon,\phi}(A)$ and $\OSQ{\varepsilon,\phi}(b)$ at cost $\costosq{A}=\costosq{b}=\tilde O(1)$ for some $\phi\ge 1$ and $\varepsilon\ge 0$. For any $\varepsilon',\delta',\eta'\in (0,1]$, if 
\begin{equation}\label{cond5}
\varepsilon\le \varepsilon'\cdot 
\frac{\tau\eta^2\norm{\myp(\sqrt{A^\dagger A})b}^4}{\phi^3d^8\log(1/\delta)}
\end{equation}
then there is a classical algorithm that after a preprocessing stage running in time 
\[
\tilde O\left(
\frac{\phi^6 d^{16}\log^3(1/\delta)}{\eta^6\norm{\myp(\sqrt{A^\dagger A})b}^6}
\right)
\]
implements $\RSQ{\varepsilon',\delta',\eta'}(v)$ for a vector $v$ satisfying Condition (\ref{cond:qsvt}) with probability at least $1-\delta$ at cost
\[
\tilde O\left( \frac{\phi^5d^{12}\log^2(1/\delta)\log(1/\delta')}{\eta^4\eta'^2\norm{\myp(\sqrt{A^\dagger A})b}^6}\right).
\]
\end{theorem}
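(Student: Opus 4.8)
The plan is to follow the dequantization pipeline of Chia--Gily\'en--Li--Lin--Tang--Wang for the even polynomial QSVT, but replacing every use of perfect sampling-and-query access with its $\varepsilon$-approximate counterpart developed in the preceding sections, and carefully tracking how the perturbation parameter $\varepsilon$ propagates through the chain of approximations so that the final additive error is bounded by $\varepsilon'\norm{\myp(\sqrt{A^\dagger A})b}$ (or rather can be absorbed into the $\eta$-relative error). Concretely, write $\myp(x)=a_0+a_2x^2+\cdots+a_dx^d$, so that $\myp(\sqrt{A^\dagger A})b = a_0 b + \sum_{k\ge 1} a_k (A^\dagger A)^k b$. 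The key object to approximate is $f(A^\dagger A)b$ where $f$ is (essentially) $\myp$ with the constant term stripped, so that $f$ is $(L,\bar L)$-smooth with $L,\bar L=\poly(d)$ on the relevant spectral window (this uses $\abs{\myp(x)}\le 1$ on $[-1,1]$ and $\fnorm{A}=1$, hence $\norm{A^\dagger A}\le 1$, together with standard bounds on coefficients/derivatives of such polynomials). Apply Theorem~\ref{th:SVT} with $X=A$: it produces (implicitly, via the standard descriptions of two approximate importance matrix sketches $S$ and $T$ — constructible efficiently by Lemma~\ref{lemma:createsk}) matrices $R=SA\in\Comp^{r\times n}$ and $C=SAT^\dagger\in\Comp^{r\times c}$ with $r,c=\poly(\log(1/\delta),1/\gamma,\fnorm{A})=\poly(\log(1/\delta),d/\eta,1/\norm{\myp(\sqrt{A^\dagger A})b})$, such that $\fnorm{R^\dagger \bar f(CC^\dagger) R - f(A^\dagger A)}\le \gamma$ with probability $\ge 1-\delta$, provided $\varepsilon$ is small enough (the first bound in~\eqref{cond1}, combined with the scaling of $\gamma$).

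Next I would form the proxy vector $\hat v := a_0 b + R^\dagger \bar f(CC^\dagger) R b$ as the intended output. Three things must be done. First, bound $\norm{\hat v - \myp(\sqrt{A^\dagger A})b}$: the matrix error $\gamma$ times $\norm{b}=1$ controls most of it, but note $\norm{Rb}$ vs $\norm{A^\dagger A b}$-type quantities require the consistency bounds from the proof of Theorem~\ref{th:SVT} ($\fnorm{R}^2\le 2$, $\spec(CC^\dagger)\subseteq\spec_\chi(A^\dagger A)$, etc.); choosing $\gamma=\Theta(\eta\norm{\myp(\sqrt{A^\dagger A})b})$ gives Condition~\eqref{cond:qsvt} with room to spare. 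Second — and this is the crux of the ``robust'' part — we cannot output $\hat v$ itself as a dense vector; we must implement $\RSQ{\varepsilon',\delta',\eta'}(\hat v)$. For this, observe $\hat v$ is a linear combination of the rows of a small explicitly-known matrix: indeed $R^\dagger \bar f(CC^\dagger) R b = \sum_{i=1}^r \lambda_i R(i,\cdot)^\dagger$ where $\lambda = \bar f(CC^\dagger)(Rb)\in\Comp^r$ is a length-$r$ vector computed by first estimating $Rb$ (each entry $[Rb](i)=\alpha_i (A(s_i,\cdot),b)$ — but $(A(s_i,\cdot),b)$ is an inner product we estimate via Proposition~\ref{th:IP} / Proposition~\ref{prop:IP}, picking up the $\sqrt{2\phi\varepsilon}$-type bias, which is why~\eqref{cond5} has that shape), then computing $\bar f$ of the explicit $r\times r$ matrix $CC^\dagger$ by diagonalization in $\poly(r)$ classical time. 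We also have $\OSQ{\varepsilon,2\phi}(R^\dagger)$ from Lemma~\ref{lemma:sample2}, and $\OSQ{\varepsilon,\phi}(b)$ by hypothesis, so by the linear-combination closure (Proposition~\ref{prop:lin1}) we get $\OSQ{\varepsilon,\phi''}(\hat v)$ for some $\phi'' = \poly(\phi, d, 1/\eta, 1/\norm{\myp(\sqrt{A^\dagger A})b})$ — here the $\phi''$ blows up by a factor involving $\norm{\hat v}^{-2}\ge \Omega(\norm{\myp(\sqrt{A^\dagger A})b}^{-2})$ and $\fnorm{R}^2\le 2$, which is the source of the large polynomial exponents. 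Third, convert this oversampling access to randomized sampling access: apply Proposition~\ref{prop:oversampling} to get $\RSQ{3\varepsilon\phi'',\delta',\eta'}(\hat v)$ at cost $\tilde O(\phi'' \log(1/\delta')/\eta'^2 \cdot \costosq{\hat v})$; demanding $3\varepsilon\phi''\le\varepsilon'$ (and $\varepsilon\phi''<1/2$) is exactly the constraint that forces $\varepsilon$ to be as small as~\eqref{cond5}.

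Finally I would assemble the costs. The preprocessing is dominated by constructing the sketches $S,T$ (cost $\tilde O(r)$ and $\tilde O(c)$ via Lemma~\ref{lemma:createsk}), estimating the $r$ inner products defining $Rb$ each to the required precision (cost $\tilde O(r/\xi^2\cdot\log(1/\delta))$ with $\xi$ polynomially small), and diagonalizing $CC^\dagger$ ($\poly(r,c)$ time); plugging in $r,c = \poly(\log(1/\delta), d/\eta, 1/\norm{\myp(\sqrt{A^\dagger A})b})$ and bookkeeping the exponents reproduces the stated preprocessing bound $\tilde O(\phi^6 d^{16}\log^3(1/\delta)/(\eta^6 \norm{\myp(\sqrt{A^\dagger A})b}^6))$. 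The per-sample cost is governed by Propositions~\ref{prop:lin1} and~\ref{prop:oversampling}: each sample requires $\tilde O(\phi'')$ repetitions of a rejection step, each of which does an $O(r)$-entry query to the stored $\lambda$ and to the rows of $R$, giving $\tilde O(\phi^5 d^{12}\log^2(1/\delta)\log(1/\delta')/(\eta^4\eta'^2\norm{\myp(\sqrt{A^\dagger A})b}^6))$ after substituting the parameter sizes. I expect the \textbf{main obstacle} to be the careful choice of the intermediate precision parameters $\gamma,\chi,\xi$ and the oversampling bookkeeping so that (a) all the conditions~\eqref{cond1}--\eqref{cond3} of Theorem~\ref{th:SVT} are simultaneously satisfied, (b) the accumulated bias from the approximate-sampling inner-product estimates plus the $3\varepsilon\phi''$ from robust rejection sampling stays below $\varepsilon'$, which pins down the precise form of~\eqref{cond5}, and (c) the exponents of $\phi$, $d$, $1/\eta$ and $1/\norm{\myp(\sqrt{A^\dagger A})b}$ in $\phi''$ — and hence in the final complexities — come out matching the claimed $\phi^6 d^{16}$, $\phi^5 d^{12}$; tracking these multiplicatively through Lemma~\ref{lemma:sample2} and Proposition~\ref{prop:lin1} is delicate but routine, exactly mirroring~\cite{Chia+JACM22} with $\varepsilon=0$.
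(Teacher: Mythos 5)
Your plan follows the paper's proof almost step for step: strip the constant term, pass to the $(L,\bar L)$-smooth function with $L=O(d^2)$, $\bar L=O(d^4)$, apply Theorem~\ref{th:SVT} with $\chi=\infty$ and $\gamma=\Theta(\eta\smallnorm{\myp(\sqrt{A^\dagger A})b})$ via the sketches $S,T$ built by Lemma~\ref{lemma:createsk}, approximate $Rb$, compute $\bar f(CC^\dagger)$ explicitly, view $v=R^\dagger\bar f(CC^\dagger)u+\myp(0)b$ as a linear combination of the rows of $R$ and $b$, and finish with Proposition~\ref{prop:lin1} followed by Proposition~\ref{prop:oversampling}; you also correctly identify $3\varepsilon\varphi\le\varepsilon'$ with $\varphi=O(\phi^3d^8\log(1/\delta)\eta^{-2}\smallnorm{\myp(\sqrt{A^\dagger A})b}^{-4})$ as what pins down Condition~(\ref{cond5}). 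The one genuinely different sub-step is your estimation of $Rb$: you estimate each entry $[Rb](i)=\alpha_i(A(s_i,\cdot)^\ast,b)$ separately with the inner-product estimator (Proposition~\ref{prop:IP}, i.e.\ the Theorem~\ref{th:MM1}-type sketch, incurring a $\sqrt{\phi\varepsilon}$-type bias), whereas the paper forms one $(r',\varepsilon,\phi',\phi)$-approximate \emph{joint} sketch of $R^\dagger$ and $b$ and invokes Theorem~\ref{th:MM}, getting a bias linear in $\varepsilon$ and an exponentially better dependence on the failure probability in one shot. Your variant still fits under Condition~(\ref{cond5}) (which is quadratically stronger than what a $\sqrt{\varepsilon}$ bias needs, so only the constant $\tau$ changes), but it carries two costs: you must union-bound and allocate precision over the $r$ entries, and, more importantly, if you sample on the $A(s_i,\cdot)$ side the relevant oversampling parameter is the per-row ratio $\norm{\bar A(s_i,\cdot)}^2/\norm{A(s_i,\cdot)}^2$, which is not bounded by $\phi$ for an individual row, so the per-entry cost of Proposition~\ref{prop:IP} is not controlled as stated; this is fixable (sample on the $b$ side, using $\norm{\bar A(s_i,\cdot)}$ as the known norm upper bound, or simply use the paper's joint sketch), but it needs to be said.

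One concrete slip to correct: for the final linear-combination step, Proposition~\ref{prop:lin1} requires oversampling-and-query access to the \emph{rows} of $R=SA$ (since $v-\myp(0)b$ is a combination of those rows), which is supplied by Lemma~\ref{lemma:sample1}, not by Lemma~\ref{lemma:sample2}; the latter gives $\OSQ{\varepsilon,\phi'}((SA)^\dagger)$, i.e.\ access organized by the columns of $R$, and is what you need \emph{earlier} — to build the sketch $T$ of $R^\dagger$ (and, in the paper, the joint sketch of $R^\dagger$ and $b$). As cited, the closure step would not go through; with Lemma~\ref{lemma:sample1} substituted, your bookkeeping of $\varphi$ and of the preprocessing/per-sample costs matches the paper's.
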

\begin{proof}
Our proof follows the same strategy as in \cite{Chia+JACM22}.
Let us consider the polynomial~$\myq$ such that $\myp(x)-\myp(0)=\myq(x^2)$ and define the function $f\colon\Real\to \Comp$ as follows:
\[
f(x)=
\begin{cases}
\myq(-1)&\textrm{if }x\le -1,\\
\myq(x)&\textrm{if }x\in[-1,1],\\
\myq(1)&\textrm{if }x\ge 1.
\end{cases}
\]

As shown in Lemma 6.3 of \cite{Chia+JACM22}, the function $f$ is $L$-Lipschitz on $\Real$ with $L=O(d^2)$, the function $\bar f$ is $\bar L$-Lipschitz on $\Real$ with $\bar L=O(d^4)$, and we have 
\[
L_{\textrm{max}}:=\max_{x\in \Real}\abs{\bar f(x)}=O(d^2).
\]

Let us set 
\begin{align*}
r&= \ceil{\max\left(
\frac{9}{2}\phi^2\ln\left(6/\delta\right),
\frac{112\phi^2\ln(6/\delta)L^2}{\left(\frac{\eta}{2}\norm{\myp(\sqrt{A^\dagger A})b}\right)^2}\right)
}
=O\left(\frac{\phi^2d^4\log(1/\delta)}{\eta^2\norm{\myp(\sqrt{A^\dagger A})b}^2}\right),\\
c&= 
\ceil{\frac{448\phi^2\ln(6/\delta)\bar L^2}{\left(\frac{\eta}{2}\norm{\myp(\sqrt{A^\dagger A})b}\right)^2}}=O\left(\frac{\phi^2d^8\log(1/\delta)}{\eta^2\norm{\myp(\sqrt{A^\dagger A})b}^2}\right).
\end{align*}

We first consider the $(r,\varepsilon,\phi)$-approximate importance sketch $S\in\Real^{r\times m}$ of $A$ associated with $\OSQ{\varepsilon,\phi}(A)$, and construct its standard representation via Lemma \ref{lemma:createsk} at cost $\tilde O(r)$. Let us write $R=SA$. Observe that by taking the constant $\tau$ small enough, any $\varepsilon$ satisfying Condition~(\ref{cond5}) also satisfies the condition $\varepsilon\le \phi/6$. From Lemma \ref{lemma:norm}, with probability at least $1-\delta/3$ we thus have
\begin{equation}\label{eq:SVT1-k}
\fnorm{R}^2\in\left[\frac{1}{2} \fnorm{A}^2,\frac{3}{2}\fnorm{A}^2\right].
\end{equation}
We assume below that this inequality holds. 
From Lemma \ref{lemma:sample2}, we can then implement $\OSQ{\varepsilon,\phi'}(R^\dagger)$ with $\phi'\le 2\phi$ at cost $\tilde O(r)$.

We now consider the $(c,\varepsilon,\phi')$-approximate importance sketch $T\in\Real^{c\times n}$ of $R^\dagger$ associated with $\OSQ{\varepsilon,\phi'}(R^\dagger)$, and construct its standard representation via Lemma \ref{lemma:createsk} at cost $O(c\cdot \costosq{R^\dagger})=\tilde O(cr)$.
Let us write $C=SAT^\dagger$. We now apply Theorem \ref{th:SVT} (with $\chi=\infty$ since our bounds on the Lipschitz constants $L$ and $\bar L$ hold over $\Real$). Observe that by taking the constant $\tau$ small enough, any $\varepsilon$ satisfying Condition (\ref{cond5}) also satisfies Condition (\ref{cond1}) with $\gamma=\frac{\eta}{2}\norm{\myp(\sqrt{A^\dagger A})b}$.
Theorem~\ref{th:SVT} thus shows that 
\[
\fnorm{R^\dagger \bar{f}(CC^\dagger)R -f(A^\dagger A)}\le \frac{\eta}{2}\norm{\myp(\sqrt{A^\dagger A})b}
\]
holds with probability at least $1-\delta/3$. Note that we then have 
\[
\norm{R^\dagger \bar{f}(CC^\dagger)Rb +\myp(0)b-\myp(\sqrt{A^\dagger A})b}\le \fnorm{R^\dagger \bar{f}(CC^\dagger)R -f(A^\dagger A)}\norm{b}\le \frac{\eta}{2}\norm{\myp(\sqrt{A^\dagger A})b}.
\]
We now explain how to implement access to a good approximation $v$ of the vector $R^\dagger \bar{f}(CC^\dagger)Rb +\myp(0)b$. 

We first explain how to explicitly compute a vector $u\in\Comp^{r}$ close to $Rb$. First, we take
\[
r'=\ceil{\frac{7\ln(6/\delta)\phi^2}
{\left(\eta\norm{\myp(\sqrt{A^\dagger A})b}/(16L_{\textrm{max}})\right)^{2}}
}
=
O\left(\frac{\phi^2d^4\log(1/\delta)}{\eta^2\norm{\myp(\sqrt{A^\dagger A})b}^2}\right)
\]
and consider the $(r',\varepsilon,\phi',\phi)$-approximate joint importance sketch $\Sigma\in\Real^{r'\times n}$ of $R^\dagger$ and $b$ associated with $\OSQ{\varepsilon,\phi'}(R^\dagger)$ and $\OSQ{\varepsilon,\phi}(b)$, and construct its standard representation, which can be done at cost $O(r'\cdot (\costosq{R^\dagger}+\costosq{b}))=\tilde O(rr')$ via Lemma \ref{lemma:createsk}. We set $u=R\Sigma^\dagger \Sigma b$. Note that $u$ can be computed explicitly at cost $O(rr')$ by computing explicitly $R\Sigma^\dagger\in \Comp^{r\times r'}$ and $\Sigma b\in\Comp^{r'}$, and then performing explicitly the matrix-vector product. By taking the constant $\tau$ small enough, we can ensure that any $\varepsilon$ satisfying Condition (\ref{cond5}) also satisfies the inequality $\varepsilon\le \frac{\eta\norm{p(\sqrt{A^\dagger A})b}}{32L_{\textrm{max}}\sqrt{2}\phi}$. Theorem \ref{th:MM} thus guarantees that 
\begin{align*}
\norm{u-Rb}\le 
\left(\frac{1}{16L_{\textrm{max}}}+\frac{1}{16L_{\textrm{max}}}\right)
\eta\norm{\myp(\sqrt{A^\dagger A})b}\fnorm{R}\norm{b}\le
\frac{\eta}{4L_{\textrm{max}}}\norm{\myp(\sqrt{A^\dagger A})b}
\end{align*}
with probability at least $1-\delta/3$. We assume below that this is the case. Note that 
\begin{equation}\label{eq:SVT2}
\norm{u}\le \norm{Rb}+\frac{\eta}{4L_{\textrm{max}}}\norm{\myp(\sqrt{A^\dagger A})b}\le 2+\frac{\eta}{4L_{\textrm{max}}}=O(1).
\end{equation}
Since $u\in\Comp^r$ and $C\in\Comp^{r\times c}$, we can compute explicitly the vector $w=\bar{f}(CC^\dagger) u$ using $O(r^2c+dr+r^2)$ arithmetic operations by first computing explicitly $\bar{f}(CC^\dagger)$ and then performing the matrix-vector product. 

All the steps we have described so far corresponds to the preprocessing stage. The overall complexity of this stage is 
\[
\tilde O\left( rr'+r^2c+dr\right)=
\tilde O\left(r^2c\right)=
\tilde O\left(
\frac{\phi^6 d^{16}\log^3(1/\delta)}{\eta^6\norm{\myp(\sqrt{A^\dagger A})b}^6}
\right).
\]

Let us write $v=R^\dagger \bar f(CC^\dagger) u+\myp(0)b$. Putting everything together, the following upper bound holds with probability at least $1-\delta$:
\begin{align*}
\norm{v-\myp(\sqrt{A^\dagger A})b}&=
\norm{R^\dagger \bar f(CC^\dagger) u-f(A^\dagger A)b}&\\
&\le 
\norm{R^\dagger \bar{f}(CC^\dagger)(u-Rb)}+
\norm{R^\dagger \bar{f}(CC^\dagger)Rb -f(A^\dagger A)b}\\
&\le \norm{R^\dagger}\norm{\bar{f}(CC^\dagger)}\norm{u-Rb}+\norm{R^\dagger \bar{f}(CC^\dagger)Rb -f(A^\dagger A)b}\\
&\le 2L_{\textrm{max}}\norm{u-Rb}+\fnorm{R^\dagger \bar{f}(CC^\dagger)R -f(A^\dagger A)}\\
&\le \eta \norm{\myp(\sqrt{A^\dagger A})b}.
\end{align*}

Finally, we explain how to implement approximate randomized sampling-and-query access from $v$.
Observe that $v$ is a linear combination of $r+1$ vectors: a linear combination of the~$r$ columns of $R^\dagger$ (i.e., rows of $R$) and $b$, where the coefficients of the linear combination are given by $w$ and $\myp(0)$. Via Lemma \ref{lemma:sample1}, we can implement $\OSQ{\varepsilon,\phi''}(R)$ with $\phi''\le 2\phi$ at cost $\tilde O(1)$. Proposition \ref{prop:lin1} guarantees that we can implement $\OSQ{\varepsilon,\varphi}(v)$ at cost $\tilde O(r)$, with 
\[
\varphi=O\left((r+1)\frac{\phi''\sum_{i=1}^r\abs{w(i)}^2\norm{R(i,\cdot)}^2+\phi\abs{\myp(0)}^2\norm{b}^2}{\norm{v}^2}\right).
\]
Using $\norm{w}\le L_{\textrm{max}}\norm{u}$, Inequalities (\ref{eq:SVT1-k}) and (\ref{eq:SVT2}), the assumption $\norm{b}=1$ and the fact $\norm{v}=\Theta(\smallnorm{\myp(\sqrt{A^\dagger A})b})$, which holds since we are assuming $\eta\le 1/2$, we get 
\[
\varphi=O\left(r\frac{\phi L^2_{\textrm{max}}\fnorm{R}^2+\phi}{\norm{v}^2}\right)
=O\left( \frac{\phi^3 d^8\log(1/\delta)}{\eta^2\norm{\myp(\sqrt{A^\dagger A})b}^4}\right).
\]
 By taking the constant~$\tau$ small enough, any $\varepsilon$ satisfying Condition (\ref{cond5}) also satisfies the inequality $3\varepsilon \varphi\le \varepsilon'$.
We then use Proposition \ref{prop:oversampling} to get $\RSQ{\varepsilon',\delta',\eta'}(v)$ 
at cost
\[
O\left(\frac{\varphi\log(1/\delta')}{\eta'^2}\costosq{v}\right)=
\tilde O\left(\frac{\varphi\log(1/\delta')}{\eta'^2}r\right)=
\tilde O\left( \frac{d^{12}\phi^5\log^2(1/\delta)\log(1/\delta')}{\eta^4\eta'^2\norm{\myp(\sqrt{A^\dagger A})b}^6}\right),
\]
as claimed.
\end{proof}

\subsection{Dequantizing the sparse Quantum Singular Value Transformation}\label{sub:app8}
In this subsection, we discuss the main problem considered by Gharibian and Le Gall \cite{GG22}. 

For any integer $s\ge 1$, we say that a matrix is $s$-sparse if it contains at most $s$ nonzero entries per row and column. The problem considered in \cite[Section 4]{GG22} is as follows.

\vspace{4mm}
\begin{center}
\fbox{
\begin{minipage}{0.96\textwidth} \vspace{2mm}

{\sf Sparse Quantum Singular Value Transformation (SparseQSVT)} 
\begin{description}[listparindent=-10mm]
\item[Input:]
an $s$-sparse matrix $A\in\Comp^{m\times n}$ for $s\ge 2$ such that $\norm{A}\le 1$, two vectors $u,v\in \Comp^{n}$ such that $\norm{u}\le 1$ and $\norm{v}\le 1$, an
even polynomial $\myp\in\Real[x]$ of degree $d$ with $|\myp(x)|\le 1$ for all $x\in[-1,1]$, a parameter $\eta\in(0,1]$.
\vspace{1mm}

\item[Goal:] compute an estimate $\hat{z}\in \Comp$ such that $|\hat z-v^\dagger \myp(\sqrt{A^\dagger A})u|\le \eta$.
\end{description}
\end{minipage}
}
\end{center}
\vspace{4mm}

The main difference with the problem considered in Section \ref{sub:app2} is that the normalization on $A$ is now $\norm{A}\le 1$ instead of $\fnorm{A}=1$. Note that this is a significantly weaker assumption since the inequality $\norm{A}\le\fnorm{A}$ always holds and $\fnorm{A}$ can be as large as $\sqrt{n}\norm{A}$. On the other hand, with this weaker assumption the problem can only be solved efficiently in the classical setting for special types of matrices, such as $s$-sparse matrices. The main dequantization result from \cite{GG22} indeed shows that when given $\Q{}(A)$, $\Q{}(u)$ and $\SQ{}(v)$  with costs $\costq{A}=\costq{u}=\costsq{v}=\tilde O(1)$, the problem {\sf SparseQSVT} can be solved classically with probability at least $1-1/\poly(n)$ in $\tilde O(s^{d}/\eta^2)$ time.\footnote{The result from \cite{GG22} is actually slightly more general: it also works when $v$ can be accessed by a slight generalization of sampling-and-query access (called ``$\zeta$-sampling-access'' in \cite{GG22}), where the sampling is done with small multiplicative error (see Definition 2.4 in \cite{GG22}).} This result is obtained by showing how to implement efficiently query access to the vector $\myp(\sqrt{A^\dagger A})u$ when $A$ is $s$-sparse, and then using the inner product estimation technique from \cite{TangSTOC19}. By using instead our technique for robust estimation of the inner product, we show that this dequantization result even holds when considering $\SQ{\varepsilon}(v)$ for $\varepsilon>0$, when $\varepsilon$ is small enough.
\begin{theorem}\label{th:sparseQSVT}
Given $\Q{}(A)$, $\Q{}(u)$ and $\SQ{\varepsilon}(v)$ with $\varepsilon\le \eta/9$ and costs $\costq{A}=\costq{u}=\costsq{v}=\tilde O(1)$, the problem {\sf SparseQSVT} can be solved classically with probability at least $1-1/\poly(n)$ in $\tilde O(s^{d}/\eta^2)$ time.
\end{theorem}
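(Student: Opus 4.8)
\emph{Proposed approach.} The plan is to reuse the two-stage strategy of Gharibian and Le Gall \cite{GG22}: first build query access to the vector $w:=\myp(\sqrt{A^\dagger A})u$ exploiting the sparsity of $A$, and then estimate the scalar $v^\dagger w=(w,v)$ by an inner product estimation — except that, where \cite{GG22} used the non-robust estimator of \cite{TangSTOC19}, we will instead invoke our robust estimator, Proposition \ref{th:IP}, to cope with having only $\SQ{\varepsilon}(v)$ rather than $\SQ{}(v)$.

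First I would implement $\Q{}(w)$ at cost $\tilde O(s^d)$, exactly as in \cite{GG22}. Since $\myp$ is even of degree $d$ we may write $\myp(\sqrt{A^\dagger A})=\sum_{k=0}^{d/2}a_{2k}(A^\dagger A)^k$, so, using that $A^\dagger A$ is Hermitian, $w(i)=\sum_{k=0}^{d/2}a_{2k}\bigl((A^\dagger A)^k e_i\bigr)^\dagger u$ for every $i$. As $A$ is $s$-sparse, $A^\dagger A$ is $s^2$-sparse, hence $(A^\dagger A)^k e_i$ has support of size at most $s^{2k}\le s^d$ and can be computed explicitly by $k$ successive sparse matrix–vector products (each product by $A^\dagger A=A^\dagger(A\,\cdot\,)$ multiplies the support size by at most $s^2$ and costs time proportional to it, using sparse access to $A$). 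Pairing the resulting sparse vector with $u$ then costs at most $s^d$ queries to $u$, so one entry of $w$ is produced in $\costq{w}=\tilde O(s^d)$ time. Nothing here depends on $\varepsilon$.

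Next I would estimate $(w,v)$ robustly. Because $\norm{A}\le 1$ the eigenvalues of $\sqrt{A^\dagger A}$ lie in $[0,1]\subseteq[-1,1]$, so $\smallnorm{\myp(\sqrt{A^\dagger A})}\le \max_{x\in[-1,1]}\abs{\myp(x)}\le 1$ and thus $\norm{w}\le\norm{u}\le 1$; also $\norm{v}\le 1$. I would then apply Proposition \ref{th:IP} with $v$ playing the role of the vector accessed by $\SQ{\varepsilon}(\cdot)$ and $w$ the role of the vector accessed by $\Q{}(\cdot)$, using $1$ as the (valid) norm overestimate for $w$, a target precision $\xi=\Theta(\eta)$, and failure probability $\delta=1/\poly(n)$. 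This returns $\hat z$ with
\[
\bigl|\hat z-v^\dagger\myp(\sqrt{A^\dagger A})u\bigr|=\bigl|\hat z-(w,v)\bigr|\le\bigl(2\sqrt{2\varepsilon}+\xi\bigr)\norm{v}\le 2\sqrt{2\varepsilon}+\xi
\]
with probability $\ge 1-1/\poly(n)$, at cost $O\!\bigl(\tfrac{\log(1/\delta)}{\xi^2}(\costsq{v}+\costq{w})\bigr)=\tilde O(s^d/\eta^2)$. The hypothesis on $\varepsilon$ is exactly what is needed to make the bias term, together with $\xi$, fall below $\eta$. Finally, the downstream results of \cite[Sections 4--5]{GG22} — constant-precision ground-energy estimation of local Hamiltonians from an approximately sampleable guiding state, and the implications for the quantum PCP and NLSS conjectures — carry over by substituting $\SQ{\varepsilon}$ for $\SQ{}$ throughout.

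The only genuinely new difficulty beyond \cite{GG22} is the sampling bias: for $i\sim\tilde p_v$ the naive estimator $w(i)\norm{v}^2/v(i)$ can be wildly off when $v(i)$ is tiny while $\tilde p_v(i)$ is not. This is precisely the obstacle that the coordinate-partitioning trick inside Proposition \ref{th:IP} resolves — the indices with $\abs{v(i)}$ small relative to $\abs{w(i)}$ are discarded, trading the dangerous bias for a controlled $O(\sqrt\varepsilon)$ bias, and crucially the discarding threshold is read off from $\abs{w(i)}$ (computable in the first stage) rather than from the unknown $\tilde p_v(i)$. The remaining ingredients — the $s^d$ factor, the $1/\eta^2$ factor, the sparse-access bookkeeping, and the high-probability amplification — are all inherited verbatim from \cite{GG22}.
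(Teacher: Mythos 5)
Your proposal is correct and follows essentially the same route as the paper: implement query access to $w=\myp(\sqrt{A^\dagger A})u$ at cost $\tilde O(s^d)$ via the sparse construction of \cite{GG22} (the paper simply cites their Lemma 8, which you re-sketch), note $\norm{w}\le\norm{u}\le 1$, and then feed $\SQ{\varepsilon}(v)$ and $\Q{}(w)$ with norm overestimate $1$, precision $\xi=\Theta(\eta)$ and $\delta=1/\poly(n)$ into Proposition \ref{th:IP}, yielding bias $(2\sqrt{2\varepsilon}+\xi)\norm{v}\le\eta$ at cost $\tilde O(s^d/\eta^2)$. The only differences are cosmetic (the paper works with the complex conjugate of $\myp(\sqrt{A^\dagger A})u$ to match its inner-product convention, and it, like you, glosses the final numerical check that the hypothesis on $\varepsilon$ makes $2\sqrt{2\varepsilon}+\xi\le\eta$), so no substantive gap.
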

\begin{proof}
Lemma 8 in \cite{GG22} shows that given $\Q{}(A)$ and $\Q{}(u)$ at cost $\costq{A}=\costq{u}=\tilde O(1)$, we can implement query access to $\myp(\sqrt{A^\dagger A})u$ at cost $\tilde O(s^{d})$. Let $w\in \Comp^n$ denote the complex conjugate of the vector $\myp(\sqrt{A^\dagger A})u$.  We immediately get $\Q{}(w)$ from $\Q{}(\myp(\sqrt{A^\dagger A})u)$. 

Observe that $\norm{w}=\smallnorm{\myp(\sqrt{A^\dagger A})u}\le\norm{u}\le 1$ from our assumptions on $\myp$ and $\norm{u}$. We have 
\[
(v,w)=v^\dagger \myp(\sqrt{A^\dagger A})u.
\]
We can then use Proposition \ref{th:IP} and output at cost 
\[
O\left(\frac{\log(n)}{\eta^2}(\costsq{v}+\costq{w})\right)=
O\left(\frac{s^d\log(n)}{\eta^2}\right),
\]
an estimator $\hat z$ such that 
\[
\abs{\hat z-v^\dagger \myp(\sqrt{A^\dagger A})u}\le(2\sqrt{2\varepsilon}+\frac{\eta}{100})\norm{v}\le \eta
\]
holds with probability at least $1-1/\poly(n)$.
\end{proof}

As mentioned in the introduction, using Theorem \ref{th:sparseQSVT} we can generalize the other dequantization results in \cite[Section 4]{GG22}, and in particular obtain an efficient classical algorithm computing a  constant-precision estimation of the ground state of a local Hamiltonians when given $\SQ{\varepsilon}(u)$ for a guiding vector $u$ that has constant overlap with the ground state (Theorem 1 in~\cite{GG22} showed this result only for the case $\varepsilon=0$). Similarly, the implications to the Quantum PCP conjecture and the No Low-Energy Samplable States (NLSS) conjecture discussed in \cite[Section~5]{GG22} also generalize to the case of approximate sampling-and-query access to the witness states (i.e., $\SQ{\varepsilon}(\ket{\varphi})$ for a witness $\ket{\varphi}$ instead of $\SQ{}(\ket{\varphi})$).
\subsection{Dequantizing quantum algorithms for supervised clustering}\label{sub:app3}
In the supervised clustering problem considered by Lloyd, Mohseni, and Rebentrost \cite{Lloyd+13}, the main computational task reduces to the following problem. 

\begin{center}
\fbox{
\begin{minipage}{14 cm} \vspace{2mm}

{\sf Supervised Clustering}
\begin{description}
\item[Input:]
a matrix $M\in\Real^{n\times d}$ with no all-zero row, a row-vector $w\in\Real^{1\times n}$ and parameters $\eta,\delta\in(0,1)$.
\item[Goal:] output an estimation $\alpha$ of $\norm{wM}^2$ such that 
\[
\abs{\alpha-\norm{wM}^2}\le\eta\fnorm{M}^2\norm{w}^2
\]
holds with probability at least $1-\delta$.
\end{description}
\end{minipage}
}
\end{center}

Ref.~\cite{Lloyd+13} proposed a quantum algorithm for supervised clustering by solving the above problem using the SWAP test in time $\tilde O\left(\frac{\log(1/\delta)}{\eta}\right)$, assuming that the data is stored in QRAM. Prior dequantization works \cite{Chia+JACM22,Tang21} showed how to construct classical algorithms solving this problem in time $O\left(\frac{\log(1/\delta)}{\eta^2}(\costsq{M}+\costq{w})\right)$ when given $\SQ{}(M)$ and $\Q{}(w)$.
 Here we show how that dequantization is possible even when given only $\SQ{\varepsilon}(M)$ for $\varepsilon$ small enough.
\begin{theorem}\label{th:clus}
Given $\SQ{\varepsilon}(M)$ and $\Q{}(w)$ for $\varepsilon\le \frac{\eta^2}{25}$, there is a classical algorithm that solves the problem {\sf Supervised Clustering} at cost $ O\left(\frac{\log(1/\delta)}{\eta^2}(\costsq{M}+\costq{w})\right)$.
\end{theorem}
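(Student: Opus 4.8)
The plan is to reduce $\norm{wM}^2$ to a single cheap importance-sampling estimator and then make that estimator robust using the thresholding trick from Proposition~\ref{th:IP}. Write $\norm{wM}^2=\sum_{i,j}w(i)w(j)\langle M(i,\cdot),M(j,\cdot)\rangle$, and consider the random variable $Y$ obtained as follows: sample a row index $i$ from $\tilde p_{\row{M}}$, an independent row index $j$ from $\tilde p_{\row{M}}$, and a column index $k$ from $\tilde p_{M(i,\cdot)}$, then output $Y=\frac{w(i)\fnorm{M}^2}{M(i,k)}\cdot\frac{w(j)M(j,k)\fnorm{M}^2}{\norm{M(j,\cdot)}^2}$. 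In the perfect-sampling case ($\varepsilon=0$), a direct calculation using $p_{\row{M}}(i)=\norm{M(i,\cdot)}^2/\fnorm{M}^2$ and $p_{M(i,\cdot)}(k)=M(i,k)^2/\norm{M(i,\cdot)}^2$ gives $\ex{Y}=\norm{wM}^2$ and $\ex{Y^2}=\fnorm{M}^4\norm{w}^4$, hence $\var{Y}\le\fnorm{M}^4\norm{w}^4$. Thus averaging $r=\Theta(1/\eta^2)$ independent copies of $Y$ and applying Chebyshev's inequality yields an estimate within $\tfrac{\eta}{4}\fnorm{M}^2\norm{w}^2$ of $\norm{wM}^2$ with probability at least $3/4$, which the powering lemma (Lemma~\ref{lemma:powering}) boosts to probability $1-\delta$ at cost $O\!\left(\frac{\log(1/\delta)}{\eta^2}(\costsq{M}+\costq{w})\right)$.

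With only approximate sampling access the estimator $Y$ is not robust, because of the divisions by $M(i,k)$ and by $\norm{M(j,\cdot)}^2$ (a small $\norm{M(j,\cdot)}$ that still carries non-negligible mass under $\tilde p_{\row{M}}$ can make the estimator blow up). Following Proposition~\ref{th:IP}, I would zero out $Y$ whenever $\abs{M(i,k)}\le\theta_1\abs{w(i)}$ or $\norm{M(j,\cdot)}\le\theta_2\abs{w(j)}$, for thresholds $\theta_1,\theta_2$ proportional to $\sqrt{\varepsilon}$ times the appropriate norm ratios (so $\theta_1$ of order $\sqrt{\varepsilon}\,\fnorm{M}/(\sqrt{d}\,\norm{w})$ and $\theta_2$ of order $\sqrt{\varepsilon}\,\fnorm{M}/\norm{w}$, mirroring the choice $\theta(i)=\sqrt{2\varepsilon}\abs{v(i)}$ there). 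As in the proof of Proposition~\ref{th:IP}, this yields a biased estimator whose bias splits into a ``disregarded mass'' part, bounded on the thresholded set using the defining inequality of that set together with Cauchy--Schwarz (which produces a factor $(\sum_i\abs{w(i)}\norm{M(i,\cdot)})^2\le\norm{w}^2\fnorm{M}^2$), and a ``wrong distribution'' part, bounded off the thresholded set by the complementary inequality together with $\stat{\tilde p-p}\le\varepsilon$; both come out to $O(\sqrt{\varepsilon})\fnorm{M}^2\norm{w}^2$, while one checks the variance stays $O(\fnorm{M}^4\norm{w}^4)$. Combining the $\tfrac{\eta}{4}\fnorm{M}^2\norm{w}^2$ statistical error with the $O(\sqrt{\varepsilon})\fnorm{M}^2\norm{w}^2$ bias and using $\varepsilon\le\eta^2/25$, the total error is at most $\eta\fnorm{M}^2\norm{w}^2$, as required.

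Each sample of the thresholded estimator costs $O(\costsq{M}+\costq{w})$ --- two samples from $\tilde p_{\row{M}}$, one from $\tilde p_{M(i,\cdot)}$, a constant number of queries to $M$ and $w$, and access to $\fnorm{M}$ --- so the overall cost is $O\!\left(\frac{\log(1/\delta)}{\eta^2}(\costsq{M}+\costq{w})\right)$, matching the perfect-sampling bounds of \cite{Chia+JACM22,Tang21}. The step I expect to be the main obstacle is exactly the robustification: unlike in Proposition~\ref{th:IP}, the estimator here contains two separate non-robust divisions (one at the level of rows of $M$, one at the level of entries), so one must introduce two coupled thresholds and verify that, for each of them, the ``disregarded'' and ``wrong-distribution'' bias contributions both stay $O(\sqrt{\varepsilon})$ and that neither thresholding inflates the variance beyond $O(\fnorm{M}^4\norm{w}^4)$ --- the delicate point, as in the one-division case, being to tune the thresholds (which involve $\fnorm{M}$ and $\norm{w}$) so that the two complementary estimates balance. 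Alternatively, the same result follows by running the perfect-sampling algorithm of \cite{Chia+JACM22,Tang21} with its inner-product subroutine replaced by the robust estimator of Proposition~\ref{th:IP}.
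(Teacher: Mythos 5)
Your reduction is essentially the right one: your estimator $Y$ is exactly the inner-product estimator applied to the pair of vectors $u,v\in\Real^{dn^2}$ (with $u_{i,j,k}=M(i,j)\norm{M(k,\cdot)}$ and $v_{i,j,k}=w_iw_kM(k,j)/\norm{M(k,\cdot)}$, so that $(u,v)=\norm{wM}^2$, $\norm{u}=\fnorm{M}^2$, $\norm{v}=\norm{w}^2$). But your primary route leaves the crucial step---the robustification---undone, and the two-threshold scheme as you parameterize it does not appear to close. Your own disregarded-mass bounds force $\theta_1\lesssim\sqrt{\varepsilon}\,\fnorm{M}/(\sqrt{d}\,\norm{w})$ and $\theta_2\lesssim\sqrt{\varepsilon}\,\fnorm{M}/\norm{w}$, hence $\theta_1\theta_2\lesssim\varepsilon\fnorm{M}^2/(\sqrt{d}\,\norm{w}^2)$; on the other hand, off the thresholded set the best uniform bound on $\abs{Y}$ is $\fnorm{M}^4/(\theta_1\theta_2)$, so the ``wrong-distribution'' bias estimated (as you propose) by $\sup\abs{Y}$ times the total variation error is of order $\varepsilon\fnorm{M}^4/(\theta_1\theta_2)\gtrsim\sqrt{d}\,\fnorm{M}^2\norm{w}^2$, not $O(\sqrt{\varepsilon})\fnorm{M}^2\norm{w}^2$. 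To make both bias pieces $O(\sqrt{\varepsilon})$ via this argument you would need $\theta_1\theta_2\gtrsim\sqrt{\varepsilon}\,\fnorm{M}^2/\norm{w}^2$, which contradicts the previous constraint for small $\varepsilon$. So with two decoupled thresholds the analysis you sketch fails; this is a genuine gap, not just an omitted routine verification.

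The paper sidesteps this entirely by \emph{not} redoing any thresholding for this problem: it treats $\norm{wM}^2$ as a single inner product $(u,v)$ and invokes Proposition~\ref{prop:IP} (equivalently Proposition~\ref{th:IP}) as a black box, whose single threshold on the combined ratio $\abs{v_{i,j,k}}/\abs{u_{i,j,k}}$ automatically couples the two divisions you treat separately. The only new work in the paper's proof is the ingredient your closing sentence glosses over: to apply the robust inner-product estimator one must actually \emph{implement} $\SQ{\varepsilon'}(u)$ for the composite $dn^2$-dimensional vector $u$. The paper does this by sampling two row indices from $\tilde p_{\row{M}}$ and one column index from $\tilde p_{M(i,\cdot)}$ and proving, by a telescoping/hybrid computation, that the resulting distribution is within total variation distance $3\varepsilon$ of $p_u$; the bias then becomes $2\sqrt{6\varepsilon}+\xi$, which is below $\eta$ precisely because $\varepsilon\le\eta^2/25$. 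If you replace your main route by this argument---i.e., prove the $\SQ{3\varepsilon}(u)$ implementation and then cite Proposition~\ref{th:IP}---you recover the paper's proof; as written, neither of your two routes supplies that step.
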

\begin{proof}
We first observe that $\norm{wM}^2=(u,v)$ for the vectors $u,v\in\Real^{dn^2}$ defined as follows. The coordinates of $u$ and $v$ are indexed by triples $(i,j,k)\in\set{1}{n}\times\set{1}{d}\times\set{1}{n}$.  We set
\begin{align*}
u_{i,j,k}&=M(i,j)\norm{M(k,\cdot)},\\
v_{i,j,k}&=\frac{w_iw_kM(k,j)}{\norm{M(k,\cdot)}}
\end{align*}
for each $(i,j,k)\in\set{1}{n}\times\set{1}{d}\times\set{1}{n}$.
We have 
\begin{align*}
(u,v)&=\sum_{j=1}^d\left(\sum_{i=1}^n  w_iM(i,j)\right)\left(\sum_{k=1}^n  w_kM(k,j)\right)=\norm{wM}^2=(u,v),\\
\norm{u}&=\sqrt{\sum_{i=1}^n\sum_{j=1}^d\sum_{k=1}^n \abs{M(i,j)}^2\norm{M(k,\cdot)}^2}=\fnorm{M}^2\\
\norm{v}&=\sqrt{\sum_{i=1}^n\sum_{j=1}^d\sum_{k=1}^n\frac{
\abs{w_i}^2\abs{w_k}^2\abs{M(k,j)}^2}
{\norm{M(k,\cdot)}^2}}=\norm{w}^2.
\end{align*}

Note that from $\SQ{\varepsilon}(M)$ and $\Q{}(w)$ we can implement $\Q{}(u)$ and $\Q{}(v)$ at cost $O(\costsq{M}+\costq{w})$. In order to implement sample access to $u$, we sample $i$ and $k$ from $\tilde p_{\row{M}}$, then sample $j$ from $\tilde p_{M(i,\cdot)}$. Let $P(i,j,k)$ denote the resulting probability distribution. 
For conciseness, let us write $p=p_{\row{M}}$, $\tilde p=\tilde p_{\row{M}}$ and $p_i=p_{M(i,\cdot)}$, $\tilde p_i=\tilde p_{M(i,\cdot)}$ for each $i\in\set{1}{n}$.
We have
\begin{align*}
\stat{P-p_u}&=\frac{1}{2}\sum_{i=1}^n\sum_{j=1}^d\sum_{k=1}^n \abs{P(i,j,k)-p_u(i,j,k)},\\
&=\frac{1}{2}\sum_{i=1}^n\sum_{j=1}^d\sum_{k=1}^n \abs{
\tilde p(i)\tilde p_{i}(j)\tilde p(k)-
\frac{\norm{M(i,\cdot)}^2}{\fnorm{M}^2}\frac{\abs{M(i,j)}^2}{\norm{M(i,\cdot)}^2}\frac{\norm{M(k,\cdot)}^2}{\fnorm{M}^2}
}\\
&=\frac{1}{2}\sum_{i=1}^n\sum_{j=1}^d\sum_{k=1}^n \abs{
\tilde p(i)\tilde p_{i}(j)\tilde p(k)-
p(i)p_{i}(j)p(k)
}\\
&\le\frac{1}{2}\sum_{i=1}^n\sum_{j=1}^d\sum_{k=1}^n 
\abs{\tilde p(i)\tilde p_{i}(j)\tilde p(k)-p(i)\tilde p_{i}(j)\tilde p(k)}+
\abs{ p(i)\tilde p_{i}(j)\tilde p(k) - p(i) p_{i}(j)\tilde p(k)}\\
&\hspace{70mm}+
\abs{p(i) p_{i}(j)\tilde p(k)-p(i)p_{i}(j)p(k)}\\
&=\frac{1}{2}\sum_{i=1}^n\abs{\tilde p(i)-p(i)}\sum_{j=1}^d\tilde p_{i}(j)\sum_{k=1}^n \tilde p(k)+
\frac{1}{2}\sum_{i=1}^n p(i) \sum_{j=1}^d\abs{\tilde p_{i}(j)-p_i(j)}\sum_{k=1}^n \tilde p(k)
\\
&\hspace{70mm}
+\frac{1}{2}\sum_{i=1}^np(i)\sum_{j=1}^d p_{i}(j)\sum_{k=1}^n \abs{\tilde p(k)-p(k)}\\
&= \stat{\tilde p -p}+\sum_{i=1}^np(i)\stat{\tilde p_i-p_i}+\stat{\tilde p -p}\\
&\le
3\varepsilon.
\end{align*}
 This means that we can implement $\SQ{3\varepsilon}(u)$ at cost $\tilde O(\costsq{M})$. 

We can now use Proposition \ref{prop:IP}: we can output an estimator $\alpha$ such that 
\[
\abs{\alpha-(u,v)}\le \left(2\sqrt{6\varepsilon}+\frac{\eta}{100}\right)\norm{u}\norm{v}=
\left(2\sqrt{6\varepsilon}+\frac{\eta}{100}\right)\fnorm{M}^2\norm{w}^2\le 
\eta\fnorm{M}^2\norm{w}^2
\]
 holds with probability at least $1-\delta$ at cost $\tilde O\left(\frac{\log(1/\delta)}{\eta^2}(\costsq{M}+\costq{w})\right)$.
\end{proof}
\subsection{Dequantizing quantum algorithms for recommendation systems}\label{sub:app4}
In recommendation systems, the main computational task reduces to sampling from a row of a low-rank matrix close to the input matrix (which corresponds to the users' preference matrix). In particular, quantum algorithms \cite{Kerenidis+ITCS17} and their dequantization \cite{TangSTOC19,Gilyen+STOC19} consider and solve the following version.

\begin{center}
\fbox{
\begin{minipage}{15 cm} \vspace{2mm}

{\sf Recommendation Systems}

\begin{description}
\item[Input:]
a matrix $A\in\Real^{m\times n}$, an index $i\in\set{1}{m}$, and four parameters $\sigma,\nu\in(0,1]$ and $\xi,\tau\in(0,1/2]$.
\item[Goal:] 
sample from the probability distribution $p_{\hat{A}(i,\cdot)}$ up to $\delta$ error in total variation distance,
for a matrix $\hat{A}\in\Real^{m\times n}$ such that $\fnorm{\hat{A}-A_{\sigma,\xi}}\le\nu\fnorm{A}$, where $A_{\sigma,\xi}=f(A)$ for some arbitrary function $f$ such that 
\begin{equation}\label{eq:cond-rec}
\begin{cases}
f(x)=x&\textrm{ if }x\ge \sigma(1+\xi)\\
f(x)=0&\textrm{ if }x< \sigma(1-\xi)\\
f(x)\in[0,x]& \textrm{ if } x\in[\sigma(1-\xi),\sigma(1+\xi))
\end{cases}
\end{equation}
holds.
\end{description}
\end{minipage}
}
\end{center}\vspace{2mm}

In applications to recommendation systems, the parameter $\xi$ is taken as a constant (e.g., $\xi=1/6$). 
The matrix $A_{\sigma,\xi}$ represents some kind of low-rank approximation of $A$. The intuition is that if the singular values of the matrix $A$ ``concentrate'' on a few high values, then by choosing $\sigma$ large enough the matrix $A_{\sigma,\xi}$ will have low rank (since $A_{\sigma,\xi}$ has only a few nonzero singular values) and will also be a good approximation of $A$ (since the spectrum concentrates on these high singular values). This typically corresponds to the case where the ratio $\fnorm{A}/\sigma$ is small. 

This parameter $K=\frac{\fnorm{A}^2}{\sigma^2}$ is indeed crucial for analyzing the performance of quantum algorithms for recommendation systems and their dequantization (these algorithms are efficient only when~$K$ is small). We assume that $K\ge 1$, since this is the interesting regime. Kerenidis and Prakash \cite{Kerenidis+ITCS17} have shown how to solve {\sf Recommendation Systems} in time $\tilde O(\sqrt{K})$ when~$A$ is given in QRAM. Chia, Gily\'en, Li, Lin, Tang and Wang~\cite{Chia+JACM22} gave a classical algorithm solving the problem in time
$
\tilde O\left(
\frac{K^{8}\log^3(1/\delta)}{\xi^6\nu^6}+
\frac{K^3\log^2(1/\delta)}{\xi^2\nu^2}\frac{\norm{A(i,\cdot)}^2}{\norm{\hat A(i,\cdot)}^2}
\right)
$ given $\SQ{}(A)$ with $\costsq{A}=\tilde O(1)$, which improved the complexity of the first dequantization algorithm by Tang \cite{TangSTOC19}. We show how to obtain the same complexity when given only $\SQ{\varepsilon}(A)$ for $\varepsilon$ small enough, or more generally even $\OSQ{\varepsilon,\phi}(A)$ with $\phi>1$.

\begin{theorem}\label{th:Recom}
Assume that we have $\OSQ{\varepsilon,\phi}(A)$ with $\costsq{A}=\tilde O(1)$, for any $\phi\ge 1$ and any
\[ 
\varepsilon\le \min\left(\frac{\nu\xi}{48 K^2\phi},
\frac{\delta\norm{\hat A(i,\cdot)}^2}{45K\phi^2\norm{A(i,\cdot)}^2}\right).
\]
There is a classical algorithm that solves the problem {\sf Recommendation Systems} in time 
\[
\tilde O\left(
\frac{K^{8}\phi^6\log^3(1/\delta)}{\xi^6\nu^6}+
\frac{K^3\phi^6\log^2(1/\delta)}{\xi^2\nu^2}\frac{\norm{A(i,\cdot)}^2}{\norm{\hat A(i,\cdot)}^2}
\right).
\]
\end{theorem}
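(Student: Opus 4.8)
The plan is to follow the dequantization strategy of Chia, Gily\'en, Li, Lin, Tang and Wang~\cite{Chia+JACM22} for recommendation systems, replacing each perfect-sampling ingredient by its approximate counterpart developed earlier in this paper. First I would reduce the problem to approximating the singular value transformation $f(A^\dagger A)$: since $A_{\sigma,\xi}=A\cdot g(A^\dagger A)$ for an appropriate ``soft thresholding'' function $g$ built from $f$ (with $g(x)=f(\sqrt{x})/\sqrt{x}$, which is bounded and Lipschitz on the relevant spectrum because the threshold $\sigma$ separates the small eigenvalues from $0$), it suffices to compute a sketch of this transformation. I would check that the corresponding $h(x)$ (with $h(0)=0$) is $(L,\bar L)$-smooth on $\spec_\chi(A^\dagger A)$ with $L,\bar L=\poly(1/\xi,1/\sigma)$ and a spectral gap parameter $\chi=\Theta(\sigma^2\xi)$, so that Theorem~\ref{th:SVT} applies. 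This is the place where the condition $\varepsilon\le \nu\xi/(48K^2\phi)$ enters: it is exactly what is needed to satisfy Condition~(\ref{cond1}) with the relevant values of $\gamma$ (proportional to $\nu\fnorm{A}$), $L$, $\bar L$, $\chi$, after substituting $\fnorm{A}^2=K\sigma^2$ and $\bar L\fnorm{A}^4=\poly(K)/\xi$ etc.

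Next I would set $r,c=\poly(K,\phi,1/\xi,1/\nu,\log(1/\delta))$ according to Conditions~(\ref{cond2}) and~(\ref{cond3}), construct via Lemma~\ref{lemma:createsk} the $(r,\varepsilon,\phi)$-approximate importance sketch $S$ of $A$ and then (using Lemma~\ref{lemma:norm} to guarantee $\fnorm{SA}^2\ge \tfrac12\fnorm{A}^2$ with high probability, and Lemma~\ref{lemma:sample2} to get $\OSQ{\varepsilon,O(\phi)}(R^\dagger)$ with $R=SA$) the $(c,\varepsilon,O(\phi))$-approximate importance sketch $T$ of $R^\dagger$. Theorem~\ref{th:SVT} then gives, with probability $\ge 1-O(\delta)$, a matrix $C=SAT^\dagger$ such that $\fnorm{R^\dagger \bar h(CC^\dagger)R - h(A^\dagger A)}\le \gamma$. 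Setting $\hat A = A\cdot R^\dagger \bar h(CC^\dagger)R$ (an implicit low-rank object: $R^\dagger\in\Comp^{n\times r}$, $\bar h(CC^\dagger)\in\Comp^{r\times r}$ computed explicitly in $\poly(r,c)$ time), the bound $\fnorm{\hat A - A_{\sigma,\xi}}\le \fnorm{A}\fnorm{R^\dagger\bar h(CC^\dagger)R - h(A^\dagger A)}\le \gamma\fnorm{A}/\fnorm{A}\cdots$ — more precisely I'd track the normalization so that the final Frobenius error is $\le \nu\fnorm{A}$ as required. This gives the preprocessing-time term $\tilde O(K^8\phi^6\log^3(1/\delta)/(\xi^6\nu^6))$, which is $\tilde O(r^2 c)$ up to the dependence on $\norm{A(i,\cdot)}/\norm{\hat A(i,\cdot)}$.

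Then, to produce a sample from row $i$ of $\hat A$, I would observe that $\hat A(i,\cdot)$ is a linear combination of the rows of $R$ (with coefficient vector given by $A(i,\cdot)R^\dagger \bar h(CC^\dagger)$, computed explicitly once), so Proposition~\ref{prop:lin1} gives $\OSQ{\varepsilon,\varphi}(\hat A(i,\cdot))$ with $\varphi = O(r\cdot \phi\cdot \fnorm{R}^2 \cdot \|\text{coeffs}\|^2 / \|\hat A(i,\cdot)\|^2)$, which after plugging in $\fnorm{R}^2=O(\fnorm{A}^2)=O(K\sigma^2)$ and the bound $\|\bar h\|_{\max}=O(1/\sigma)$ works out to $\varphi = O(K\phi^2 \cdot \|A(i,\cdot)\|^2/\|\hat A(i,\cdot)\|^2 \cdot \poly(1/\xi))$ or similar. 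Finally Proposition~\ref{prop:oversampling} converts $\OSQ{\varepsilon,\varphi}(\hat A(i,\cdot))$ into $\RSQ{\varepsilon',\delta,\eta}$-style access with a $\delta$-error total-variation sample, at cost $\tilde O(\varphi\log(1/\delta))$; since Proposition~\ref{prop:oversampling} requires $3\varepsilon\varphi\le$ the desired TV bound, this is precisely where the second constraint $\varepsilon\le \delta\|\hat A(i,\cdot)\|^2/(45 K\phi^2\|A(i,\cdot)\|^2)$ is used. Summing the two contributions gives the claimed running time.

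The main obstacle I anticipate is bookkeeping the normalizations and the interplay between the three error sources —  the bias $2\varepsilon\sqrt{\phi\phi'}$ from Theorem~\ref{th:MM} propagated through Theorem~\ref{th:SVT}, the $3\varepsilon\varphi$ bias of Proposition~\ref{prop:oversampling}, and the sketch-size tails — so that the final guarantees ($\fnorm{\hat A - A_{\sigma,\xi}}\le\nu\fnorm{A}$ and a $\delta$-close sample) hold simultaneously with probability $\ge 1-\delta$ while keeping all the $\poly(K,\phi,1/\xi,1/\nu)$ exponents matching the $\varepsilon=0$ result of~\cite{Chia+JACM22}. In particular, verifying that $h$ (equivalently $\bar h$) is $(L,\bar L)$-smooth on $\spec_\chi(A^\dagger A)$ with the right constants — which hinges on the threshold $\sigma$ creating a genuine gap away from $0$, analogous to Lemma~6.3 of~\cite{Chia+JACM22} but for the thresholding function rather than a polynomial — is the technical heart of the argument; everything else is assembling the lemmas of Sections~\ref{sub:over}--\ref{sub:app1}.
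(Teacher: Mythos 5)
There is a genuine gap in the sampling phase. You set $\hat A = A\, R^\dagger \bar t(CC^\dagger) R$ and plan to sample from $\hat A(i,\cdot)$ by "computing the coefficient vector $A(i,\cdot)R^\dagger \bar t(CC^\dagger)$ explicitly once" and then invoking Proposition~\ref{prop:lin1}. But computing $A(i,\cdot)R^\dagger$ exactly requires $r$ inner products between the $n$-dimensional row $A(i,\cdot)$ and the rows of $R$, i.e.\ $\Theta(nr)$ arithmetic operations — a cost linear in the ambient dimension $n$, which is incompatible with the claimed $\tilde O(\poly(K,\phi,1/\xi,1/\nu,\log(1/\delta)))$ running time (the whole point of the dequantization framework is to avoid any polynomial dependence on $m,n$). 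The paper's proof avoids this by inserting one more layer of sketching: it takes an $(r',\varepsilon,\phi,\phi')$-approximate \emph{joint} importance sketch $\Sigma_i$ of $A(i,\cdot)^\dagger$ and $R^\dagger$ (Definition~\ref{definition:mixeda}, constructed via Lemma~\ref{lemma:createsk}), replaces $A(i,\cdot)R^\dagger$ by $A(i,\cdot)\Sigma_i^\dagger\Sigma_i R^\dagger$ (computable in $O(r'+rr')$ time), and correspondingly defines $\hat A = A' R^\dagger \bar t(CC^\dagger) R$ with $A'(j,\cdot)=A(j,\cdot)\Sigma_j^\dagger\Sigma_j$. Theorem~\ref{th:MM} then controls the new error $\norm{A'(j,\cdot)R^\dagger - A(j,\cdot)R^\dagger}\le \sigma\nu\norm{A(j,\cdot)}/4$ (with a union bound over the $m$ rows), and this extra term must be propagated both into the bound $\fnorm{\hat A - A_{\sigma,\xi}}\le\nu\fnorm{A}$ and into the bound on the oversampling parameter $\varphi$ before Proposition~\ref{prop:oversampling} is applied; it is also part of what the first constraint on $\varepsilon$ has to absorb (the condition $2\varepsilon\sqrt{\phi\phi'}\le \sigma\nu/(16\fnorm{A})$), not only Condition~(\ref{cond1}) of Theorem~\ref{th:SVT}. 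Without this step your algorithm produces a matrix that is close to $A_{\sigma,\xi}$ but cannot be accessed in the stated time.

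Two smaller points. First, your smoothness discussion is more complicated than necessary: the thresholding function $t$ used here (and its companion $\bar t$) is Lipschitz on all of $\Real$ with constants $O(1/(\xi\sigma^2))$ and $O(1/(\xi\sigma^4))$, so the paper applies Theorem~\ref{th:SVT} with $\chi=\infty$ and no spectral-gap argument is needed. Second, the error bound you sketch should use $\fnorm{\hat A - A_{\sigma,\xi}}\le \norm{A}\cdot\fnorm{R^\dagger\bar t(CC^\dagger)R - t(A^\dagger A)}$ (spectral norm on the left factor), which with $\gamma=\nu/2$ and $\norm{A}\le\fnorm{A}$ gives the required $\nu\fnorm{A}$ only after also accounting for the $A'$-versus-$A$ discrepancy above; the vague "track the normalization" step is precisely where the missing joint-sketch error enters.
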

\begin{proof}
Our proof follows the same approach as in \cite{Chia+JACM22}.
Let us define the function
\[
\begin{cases}
t(x)=1&\textrm{ if } x\ge (1+\xi)^2\sigma^2,\\
t(x)=0&\textrm{ if } x<(1-\xi)^2\sigma^2,\\
t(x)=\frac{x-(1-\xi)^2\sigma^2}{4\xi\sigma^2}&\textrm{ if } x\in[(1-\xi)^2\sigma^2,(1+\xi)^2\sigma^2),
\end{cases}
\]
and set $A_{\sigma,\xi}=A \:t(A^\dagger A)$. Note that this matrix satisfies the conditions of Equations (\ref{eq:cond-rec}).
We have 
\[
\begin{cases}
\bar t(x)=1/x&\textrm{ if } x\ge (1+\xi)^2\sigma^2,\\
\bar t(x)=0&\textrm{ if } x<(1-\xi)^2\sigma^2,\\
\bar t(x)=\frac{1}{4\xi\sigma^2}-\frac{(1-\xi)^2\sigma^2}{4\xi\sigma^2x}&\textrm{ if } x\in[(1-\xi)^2\sigma^2,(1+\xi)^2\sigma^2).\\
\end{cases}
\]
Observe that $t(x)$ is $\frac{1}{4\xi\sigma^2}$-Lipschitz over $\Real$ and $\bar t(x)$ is $\frac{1}{4\xi(1-\xi)^2\sigma^4}$-Lipschitz over $\Real$. Also observe that 
\begin{equation}\label{equp}
\max_{x\in\Real}\{\bar t(x)\}= \frac{1}{(1+\xi)^2\sigma^2}\le \frac{1}{\sigma^2}.
\end{equation}

We choose appropriate integers $r\in\set{1}{m}$ and $c\in\set{1}{n}$ such that 
\begin{align*}
r&= \Theta\left(
\phi^2\log(1/\delta)\left(1+\frac{\fnorm{A}^4}{\xi^2\sigma^4\nu^2}\right)
\right)=\Theta\left(\frac{K^2\phi^2\log(1/\delta)}{\xi^2\nu^2}\right),\\
c&= \Theta\left(
\frac{\phi^2\log(1/\delta)\fnorm{A}^{8}}{\xi^2(1-\xi^2)^2\sigma^8\nu^2}
\right)=
\Theta\left(\frac{K^4\phi^2\log(1/\delta)}{\xi^2\nu^2}\right).
\end{align*} 
Consider the $(r,\varepsilon,\phi)$-approximate importance sketch $S\in\Real^{r\times m}$ of $A$ associated with $\OSQ{\varepsilon,\phi}(A)$. 
From Lemma \ref{lemma:norm}, with probability at least $1-\delta/5$ we have\footnote{Note that the inequality $\varepsilon\le \frac{\nu\xi}{48\phi K^2}$ implies that $2\varepsilon\phi\le 1/3$, which is enough to get the bounds on $\fnorm{R}^2$.}
\[
\fnorm{R}^2\in\left[\frac{1}{2} \fnorm{A}^2,\frac{3}{2}\fnorm{A}^2\right].
\]
We assume below that this inequality holds. Let us write $R=SA$. From Lemma \ref{lemma:sample2}, we can implement $\OSQ{\varepsilon,\phi'}(R^\dagger)$ with $\phi'=2\phi$ at cost $\tilde O(r)$. We now consider the $(c,\varepsilon,\phi')$-approximate importance sketch $T\in\Real^{c\times n}$ of $R^\dagger$ associated with $\OSQ{\varepsilon,\phi'}(R^\dagger)$.
Let us write $C=SAT^\dagger$. Theorem~\ref{th:SVT} (with $\chi=\infty$ since our Lipschitz constants for $t(x)$ and $\bar t(x)$ hold over $\Real$) guarantees that the inequality
\begin{equation}\label{eq:ss1}
\fnorm{R^\dagger \bar t(CC^\dagger) R- t(A^tA)}\le \frac{\nu}{2}
\end{equation}
holds with probability at least $1-\delta/5$.\footnote{Note that the inequality $\varepsilon\le \frac{\nu\xi}{48\phi K^2}$ implies that $\varepsilon$ satisfies Condition (\ref{cond1}) of the statement of Theorem \ref{th:SVT} with $\gamma=\nu/2$.} Assume below that this is the case.
Additionally, we have 
\begin{equation}\label{eq:cond2}
\norm{\sqrt{\bar t(CC^\dagger)}R}=\sqrt{\norm{R^\dagger \bar t(CC^\dagger) R}}\le 
\sqrt{\norm{t(A^\dagger A)}+\nu/2}\le \sqrt{1+\nu/2}.
\end{equation}

We set $\hat A=A'R^\dagger \bar t(CC^\dagger) R$, where $A'\in\Real^{m\times n}$ is defined below. First, observe that since we have $\OSQ{\varepsilon,\phi}(A)$, for each $j\in\set{1}{m}$ we have $\OSQ{\varepsilon,\phi}(A(j,\cdot))$ and thus $\OSQ{\varepsilon,\phi}(A(j,\cdot)^\dagger)$ as well. For some integer $r'\in\set{1}{m}$, we consider the $(r',\varepsilon,\phi,\phi')$-approximate joint matrix sketch $\Sigma_j\in\Real^{r'\times n}$ of $A(j,\cdot)^\dagger$ and $R^\dagger$ associated with $\OSQ{\varepsilon,\phi}(A(j,\cdot)^\dagger)$ and $\OSQ{\varepsilon,\phi'}(R^\dagger)$. We then set the $j$-th row of the matrix $A'$ as $A'(j,\cdot)=A(j,\cdot)\Sigma_j^\dagger \Sigma_j$. By taking
\begin{align*}
r'&= \Theta\left(
\frac{K\phi^2\log(m/\delta)}{\nu^2}
\right),
\end{align*} 
Theorem \ref{th:MM} guarantees that for each $j\in\set{1}{m}$, the inequality 
\begin{equation}\label{eq:123}
\norm{A'(j,\cdot)R^\dagger-A(j,\cdot)R^\dagger}\le 
\left(
2\varepsilon\sqrt{\phi\phi'}+\frac{\sigma\nu}{16\fnorm{A}}
\right)\norm{A(j,\cdot)^\dagger}\fnorm{R^\dagger}
\le \frac{\sigma\nu\norm{A(j,\cdot)}}{4}
\end{equation}
holds with probability at least $1-\delta/(5m)$, since $2\varepsilon\sqrt{\phi\phi'}\le \frac{\sqrt{2}\phi\nu\xi}{24\phi K^2}\le \frac{\sigma\nu}{16\fnorm{A}}$.
The inequality 
\begin{equation}\label{eq:11}
\fnorm{A'R^\dagger-AR^\dagger}=\sqrt{\sum_{j=1}^m \norm{A'(j,\cdot)R^\dagger-A(j,\cdot)R^\dagger}^2}\le \frac{\sigma\nu\fnorm{A}}{4}.
\end{equation}
thus holds with probability at least $1-\delta/5$. 

Putting everything together, the following holds with probability at least $1-3\delta/5$:
\begin{align*}
\fnorm{\hat A- A_{\sigma,\eta}}&=
\fnorm{(A'R^\dagger -AR^\dagger)\bar t(CC^\dagger) R
+AR^\dagger \bar t(CC^\dagger) R- A\: t(A^\dagger A)}\\
&\le \fnorm{(A'R^\dagger -AR^\dagger)\bar t(CC^\dagger) R}
+\fnorm{AR^\dagger \bar t(CC^\dagger) R- A\: t(A^\dagger A)}\\
&\le
\fnorm{A'R^\dagger -AR}\norm{\sqrt{\bar t(CC^\dagger)}}\norm{\sqrt{\bar t(CC^\dagger)}R}
+\norm{A}\fnorm{R^\dagger \bar t(CC^\dagger) R- t(A^\dagger A)}\\
&\le \frac{\sigma\nu\fnorm{A}}{4}\frac{1}{\sigma}\sqrt{1+\nu/2}+\norm{A}\frac{\nu}{2}\\
&\le \nu \fnorm{A},
\end{align*}
where we used Inequalities (\ref{equp}), (\ref{eq:ss1}), (\ref{eq:cond2}) and  (\ref{eq:11})  to derive the third inequality.
We assume below that this upper bound holds.

Finally, we explain how to sample from the row-vector $\hat A(i,\cdot)$.
We first compute explicitly the row-vector $x=A'(i,\cdot) R^\dagger \bar t(CC^\dagger)=A(i,\cdot)\Sigma_i^\dagger \Sigma_i R^\dagger \bar t(CC^\dagger)\in\Real^{1\times r}$ as follows. We compute the matrix $C$ explicitly, then compute the singular value decomposition of $CC^\dagger$, and then the matrix $\bar t(CC^\dagger)$. This can be done in $O(r^2c)$ time. Then, we compute the vector $A(i,\cdot)\Sigma_i^\dagger \Sigma_i R^\dagger$ explicitely in $O(r'+rr')$ time by computing the row-vector $A(i,\cdot)\Sigma_i^\dagger\in \Real^{1\times r'}$ and the matrix $\Sigma_i R^\dagger\in \Real^{r'\times r}$, and then computing their product. We compute $x$ by computing the product of this vector with the matrix $\bar t(CC^\dagger)$, which can be done in $O(r^2)$ time. Finally, in order to sample from $\hat A(i,\cdot)=xR$, we use Proposition \ref{prop:lin1}. We can get $\OSQ{\varepsilon,\varphi}(xR)$ with 
\begin{align}
\varphi&= 
r
\frac{\sum_{i=1}^r \phi\abs{x_i}^2 \norm{R(i,\cdot)}^2}{\norm{\hat A(i,\cdot)}^2}
\le
\frac{\sum_{i=1}^r \phi^2  \abs{x_i}^2 \fnorm{A}^2}{\norm{\hat A(i,\cdot)}^2}
=
\frac{\phi^2\norm{x}^2 \fnorm{A}^2}{\norm{\hat A(i,\cdot)}^2}
\label{ineq:phi}
\end{align}
at cost $O(r)$, where we used Inequality (\ref{eq:up}). We can upper bound $\varphi$ as follows:
\begin{align*}
\varphi&\le\left(\norm{A(i,\cdot)R^\dagger \bar t(CC^\dagger)}+\norm{(A'(i,\cdot)-A(i,\cdot))R^\dagger \bar t(CC^\dagger)}\right)^2
\frac{\phi^2\fnorm{A}^2}{\norm{\hat A(i,\cdot)}^2}\\
&\le
\left(
\norm{A(i,\cdot)}\norm{R^\dagger \sqrt{\bar t(CC^\dagger)}}\norm{\sqrt{\bar t(CC^\dagger)}}
+
\fnorm{(A'(i,\cdot)-A(i,\cdot))R^\dagger} \norm{\bar t(CC^\dagger)}
\right)^2
\frac{\phi^2\fnorm{A}^2}{\norm{\hat A(i,\cdot)}^2}.
\end{align*}
Using (\ref{equp}), (\ref{eq:cond2}) and (\ref{eq:123}), we obtain
\begin{align*}
\varphi
&\le
\left(
\norm{A(i,\cdot)}\frac{\sqrt{1+\nu/2}}{\sigma}
+\frac{\sigma\nu\norm{A(i,\cdot)}}{4}\frac{1}{\sigma^2}
\right)^2
\frac{\phi^2\fnorm{A}^2}{\norm{\hat A(i,\cdot)}^2}\\
&\le
\left(
\sqrt{1+\nu/2}
+\frac{\nu}{4}
\right)^2
\frac{\phi^2\norm{A(i,\cdot)}^2\fnorm{A}^2}{\sigma^2\norm{\hat A(i,\cdot)}^2}\\
&\le
3\phi^2
\frac{\norm{A(i,\cdot)}^2\fnorm{A}^2}{\sigma^2\norm{\hat A(i,\cdot)}^2}\\
&=
3 K\phi^2
\frac{\norm{A(i,\cdot)}^2}{\norm{\hat A(i,\cdot)}^2}.
\end{align*}

Finally, we use Proposition \ref{prop:oversampling} to implement $\RSQ{\varepsilon',\frac{\delta}{5},1}(xR)$ with $\varepsilon'\le 3\varepsilon\varphi$ at cost $O(\phi r\log(1/\delta))$. In particular, we can sample with probability at least $1-\delta/5$ (which gives overall success probability at least $1-4\delta/5$) from a distribution $ p$ such that 
\[
\stat{p -p_{xR}}\le 3\varepsilon\varphi\le \delta/5.
\]
Putting the failure probability into the total variation distance, we get overall total variation distance at most $\delta$, as desired (remember that $xR=\hat A(i,\cdot)$). The overall complexity is 
\[
O\left(
r^2c+rr'+r^2+\varphi r \log(1/\delta)
\right)=
O\left(
\frac{K^{8}\phi^6\log^3(1/\delta)}{\xi^6\nu^6}+
\frac{K^3\phi^6\log^2(1/\delta)}{\xi^2\nu^2}\frac{\norm{A(i,\cdot)}^2}{\norm{\hat A(i,\cdot)}^2}
\right),
\]
as claimed.
\end{proof}
\subsection{Dequantizing quantum algorithms for low-rank matrix inversion}\label{sub:app5}
In this subsection we consider the following problem. We again write $K=\fnorm{A}^2/\sigma^2$ and assume that $K\ge 1$.
\begin{center}
\fbox{
\begin{minipage}{15 cm} \vspace{2mm}

{\sf Matrix Inversion}

\begin{description}
\item[Input:]
a nonzero matrix $A\in\Real^{m\times n}$, a vector $b\in \Comp^m$, and four parameters $\sigma,\delta\in(0,1]$ and $\eta,\xi\in(0,1/2]$.
\item[Assumption:]
$b\in W_\sigma(A)$, where $W_\sigma(A)$ denotes the subspace of $\Comp^m$ corresponding to the right singular vectors of $A$ associated with singular values at least $\sigma$.
\item[Goal:] 
find a vector $\hat x$ such that 
\begin{equation}\label{eq:condinv}
\norm{\hat x-x^\ast}\le \eta\norm{x^\ast}
\end{equation}
holds for $x^\ast=A^+_{\sigma,\xi}b$ with probability at least $1-\delta$,
where $A^+_{\sigma,\xi}=f(A)$ for some arbitrary function $f$ such that 
\begin{equation}\label{eq:condf}
\begin{cases}
f(x)=1/x&\textrm{ if }x\ge \sigma\\
f(x)=0&\textrm{ if }x< \sigma(1-\xi)\\
f(x)\in[0,1/\sigma]& \textrm{ if } x\in[\sigma(1-\xi),\sigma)
\end{cases}
\end{equation}
holds.
\end{description}
\end{minipage}
}
\end{center}\vspace{2mm}

Note that the assumption $b\in W_\sigma(A)$ is only needed to ensure that 
the inequality 
\begin{equation}\label{condb}
\frac{\norm{b}}{\norm{A}}\le \norm{x^\ast}
\end{equation}
holds. This inequality makes easier to state the complexity of our algorithm and has been used (implicitly) in \cite{Chia+JACM22} as well.

The problem {\sf Matrix Inversion} can be considered as a low-rank version of the matrix inversion problem solved by the HHL quantum algorithm \cite{HHL09}, which has been studied in several dequantization works \cite{Chia+20,Chia+JACM22,Gilyen+18}. Ref.~\cite{Chia+JACM22}, in particular, showed that given perfect oversampling-and-query access to $A$ (i.e., $\OSQ{0,\phi}(A)$) and $\Q{}(b)$ at cost $\costosq{A}=\costq{b}=\tilde O(1)$, after a preprocessing step running in
$
\tilde O\left(
\frac{K^{14}\phi^6\log^3(1/\delta)}{\xi^6\eta^6}\right)
$
time it is possible to implement $\RSQ{0,\delta',\eta'}(\hat x)$ for a vector $\hat x$ satisfying Condition (\ref{eq:condinv}) with probability at least $1-\delta$ at cost
\[
\tilde O\left(
\frac{K^{7}\phi^4\log(1/\delta)\log(1/\delta')}{\xi^2\eta^2\eta'^2}
\frac{\norm{x^\ast}^2}{\norm{\hat x}^2}
\right),
\]
for any $\delta',\eta'\in (0,1)$.
We show how to obtain the same complexity when given only $\OSQ{\varepsilon,\phi}(A)$ for $\varepsilon>0$, when $\varepsilon$ is small enough.

\begin{theorem}\label{th:inv}
There exists an absolute constant $\tau\in(0,1]$ such that the following holds.
Assume that we have $\OSQ{\varepsilon,\phi}(A)$ and $\Q{}(b)$ at cost $\costosq{A}=\costq{b}=\tilde O(1)$, for some $\phi\ge 1$ and $\varepsilon\ge 0$. For any $\varepsilon',\delta',\eta'\in (0,1)$, if 
\begin{equation}\label{cond6}
\varepsilon\le 
\frac{\tau\eta^2}{K^3}
\hspace{4mm}\textrm{ and }\hspace{4mm}
\varepsilon\le \varepsilon'\cdot \frac{\tau}{\phi^2K^3}
\end{equation}
then there is a classical algorithm that after a preprocessing stage running in time 
\[
O\left(
\frac{K^{14}\phi^6\log^3(1/\delta)}{\xi^6\eta^6}\right)
\]
implements $\RSQ{\varepsilon',\delta',\eta'}(\hat x)$ for a vector $\hat x$ satisfying Condition (\ref{eq:condinv}) with probability at least $1-\delta$ at cost
\[
O\left(
\frac{K^{7}\phi^4\log(1/\delta)\log(1/\delta')}{\xi^2\eta^2\eta'^2}
\frac{\norm{x^\ast}^2}{\norm{\hat x}^2}
\right).
\]
\end{theorem}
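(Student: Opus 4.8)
The plan is to follow the same architecture used in the proof of Theorem \ref{th:deq-evenSVT} and in \cite{Chia+JACM22}, replacing the polynomial transformation by the function $f$ defining $A^+_{\sigma,\xi}$. First I would observe that $f$ (appropriately extended to be constant outside the relevant range, as in the recommendation-systems proof) is $(L,\bar L)$-smooth on $\Real$ with $L=O(1/(\xi\sigma^3))$ and $\bar L=O(1/(\xi\sigma^5))$, and that $\max_x|\bar f(x)| = L_{\textrm{max}} = O(1/\sigma^2)$; these are exactly the Lipschitz bounds used in \cite{Chia+JACM22} for the matrix-inversion function. Write $x^\ast = f(A)b = A^+_{\sigma,\xi}b$. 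The target is to produce $\RSQ{\varepsilon',\delta',\eta'}$-access to a vector $\hat x$ with $\norm{\hat x-x^\ast}\le\eta\norm{x^\ast}$. Note $f(A)b = \bar f(A^\dagger A)A^\dagger b$ since $f(x)=\bar f(x^2)\cdot x$, so it suffices to approximate $\bar f(A^\dagger A)$ sketched and apply it to (an approximation of) $A^\dagger b$.

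The concrete steps, in order: (1) choose $r=\Theta\!\big(K^2\phi^2\log(1/\delta)/(\xi^2\eta^2)\big)$ and $c=\Theta\!\big(K^4\phi^2\log(1/\delta)/(\xi^2\eta^2)\big)$ as dictated by Conditions (\ref{cond2})--(\ref{cond3}) of Theorem \ref{th:SVT} with $\gamma$ a suitable multiple of $\eta\norm{x^\ast}$, using (\ref{condb}) to convert $\norm{x^\ast}$-relative error into absolute quantities in $\fnorm{A},\norm{b},\sigma$; (2) build the $(r,\varepsilon,\phi)$-approximate importance sketch $S$ of $A$ via Lemma \ref{lemma:createsk}, and apply Lemma \ref{lemma:norm} (the first part of (\ref{cond6}) forces $2\varepsilon\phi\le 1/3$, so $\fnorm{R}^2\in[\tfrac12\fnorm{A}^2,\tfrac32\fnorm{A}^2]$ with probability $\ge 1-\delta/\mathrm{const}$), set $R=SA$, and get $\OSQ{\varepsilon,2\phi}(R^\dagger)$ from Lemma \ref{lemma:sample2}; (3) build the $(c,\varepsilon,2\phi)$-approximate sketch $T$ of $R^\dagger$, set $C=SAT^\dagger$, and invoke Theorem \ref{th:SVT} (with $\chi=\infty$) to get $\fnorm{R^\dagger\bar f(CC^\dagger)R - f(A^\dagger A)}\le\gamma$; checking that $\varepsilon$ from (\ref{cond6}) satisfies (\ref{cond1}) with the chosen $\gamma$ is a routine substitution using $\gamma=\Theta(\eta\norm{x^\ast})$ and $L,\bar L=O(1/(\xi\sigma^3)),O(1/(\xi\sigma^5))$; (4) explicitly compute a vector $u\in\Comp^r$ approximating $Rb'$ for $b'$ an approximation of $A^\dagger b$ — here one uses a joint importance sketch of $R^\dagger$ (or of $A$) and $b$ and Theorem \ref{th:MM} to control $\norm{u-RA^\dagger b}$ up to $O(\eta\norm{x^\ast}/L_{\textrm{max}})$, exactly as the vector $u$ was handled in the proof of Theorem \ref{th:deq-evenSVT}; (5) explicitly form $w=\bar f(CC^\dagger)u\in\Comp^r$ in $O(r^2c)$ time (diagonalize the $r\times r$ matrix $CC^\dagger$, apply $\bar f$ to its eigenvalues); (6) set $\hat x = R^\dagger w$, a linear combination of the $r$ rows of $R$ with coefficients $w$, and chain the triangle-inequality estimates $\norm{\hat x - x^\ast}\le\norm{R^\dagger\bar f(CC^\dagger)}\norm{u-RA^\dagger b} + \fnorm{R^\dagger\bar f(CC^\dagger)R-f(A^\dagger A)}\norm{b}$ plus the $b\in W_\sigma(A)$ bound, to land at $\eta\norm{x^\ast}$ with probability $\ge 1-\delta$; (7) obtain $\OSQ{\varepsilon,2\phi}(R)$ from Lemma \ref{lemma:sample1}, apply Proposition \ref{prop:lin1} to get $\OSQ{\varepsilon,\varphi}(\hat x)$ with $\varphi = O\!\big(r\,\phi\,\norm{w}^2\fnorm{R}^2/\norm{\hat x}^2\big)$, bound $\norm{w}\le L_{\textrm{max}}\norm{u}=O(\norm{RA^\dagger b}/\sigma^2)=O(\fnorm{A}\norm{b}/\sigma^2)$ so that, after dividing by $\norm{\hat x}^2=\Theta(\norm{x^\ast}^2)$ and using (\ref{condb}), $\varphi = O\!\big(K^2\phi^2\log(1/\delta)\,\xi^{-2}\eta^{-2}\cdot\norm{x^\ast}^2/\norm{\hat x}^2 \cdot \text{(poly }K)\big)$ — matching the claimed exponents; (8) check the second part of (\ref{cond6}) guarantees $3\varepsilon\varphi\le\varepsilon'$, and apply Proposition \ref{prop:oversampling} to convert $\OSQ{\varepsilon,\varphi}(\hat x)$ into $\RSQ{\varepsilon',\delta',\eta'}(\hat x)$ at cost $O(\varphi\log(1/\delta')/\eta'^2 \cdot\costosq{\hat x})$, with $\costosq{\hat x}=\tilde O(r)$, yielding the stated query cost. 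The preprocessing cost is dominated by step (5)'s $O(r^2c)=O(K^{14}\phi^6\log^3(1/\delta)/(\xi^6\eta^6))$.

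The main obstacle I expect is the bookkeeping in steps (3), (4), and (7): one must verify that the single hypothesis (\ref{cond6}) on $\varepsilon$ — with its absolute constant $\tau$ chosen once at the end — simultaneously implies (a) $2\varepsilon\phi\le 1/3$ for Lemma \ref{lemma:norm}, (b) Condition (\ref{cond1}) of Theorem \ref{th:SVT} for the chosen $\gamma$ (which hides factors $L\fnorm{A}^2=\Theta(K/(\xi\sigma))$ and $\bar L\fnorm{A}^4=\Theta(K^2/(\xi\sigma))$), (c) the smallness needed so that Theorem \ref{th:MM} gives $\norm{u-RA^\dagger b}\le O(\eta\norm{x^\ast}/L_{\textrm{max}})$ with the bias term $2\varepsilon\sqrt{\phi\phi'}$ absorbed, and (d) $3\varepsilon\varphi\le\varepsilon'$ for Proposition \ref{prop:oversampling}. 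Each of these converts into an inequality of the form $\varepsilon\le c_i\cdot\eta^2/(\text{poly}(K,\phi))$ or $\varepsilon\le c_i\cdot\varepsilon'/(\text{poly}(K,\phi))$, and one simply takes $\tau$ smaller than the minimum of the $c_i$; the only genuinely delicate point is tracking the powers of $K$ and $\phi$ through the $L,\bar L,L_{\textrm{max}}$ substitutions and through $\varphi$, since a slip there changes the exponents in the final complexity. There is also a minor subtlety in (4)–(6) about whether to approximate $A^\dagger b$ before or after applying the sketch $S$; following \cite{Chia+JACM22} one keeps $b$ exact (it is only queried, not sampled) and sketches the product $R^\dagger\leftrightarrow A^\dagger$ pairing, which is why only $\Q{}(b)$ (not $\SQ{}(b)$) is needed — this is worth stating explicitly to justify the hypothesis "$\Q{}(b)$" in the theorem.
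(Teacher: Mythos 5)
Your outline follows the same architecture as the paper's proof (sketch $R=SA$ and $C=SAT^\dagger$, apply Theorem \ref{th:SVT} to the truncated-inverse function, approximate $RA^\dagger b$, set $\hat x=R^\dagger\bar t(CC^\dagger)u$, then Proposition \ref{prop:lin1} and Proposition \ref{prop:oversampling}), but two quantitative steps fail as written. First, Theorem \ref{th:SVT} is applied to $A^\dagger A$, whose eigenvalues are the \emph{squared} singular values, so the function you must feed it is $t(y)=1/y$ for $y\ge\sigma^2$ (interpolated down to $0$ below $(1-\xi)^2\sigma^2$); its Lipschitz constants are $L=\Theta\big(1/(\xi\sigma^4)\big)$ and $\bar L=\Theta\big(1/(\xi\sigma^6)\big)$, not $O(1/(\xi\sigma^3))$ and $O(1/(\xi\sigma^5))$, and the correct accuracy target is the absolute quantity $\gamma=\Theta(\eta/\fnorm{A}^2)$, which becomes $\tfrac{\eta}{2}\norm{x^\ast}$ only after multiplying by $\norm{A^\dagger b}$ and invoking (\ref{condb}). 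Plugging these into Conditions (\ref{cond2})--(\ref{cond3}) forces $r=\Theta\big(\phi^2K^4\log(1/\delta)/(\xi^2\eta^2)\big)$ and $c=\Theta\big(\phi^2K^6\log(1/\delta)/(\xi^2\eta^2)\big)$; your $r,c$ are the recommendation-systems values and are too small by a factor $K^2$ each. The slip is visible in your own accounting: with your $r,c$ the preprocessing $O(r^2c)$ is $O\big(K^{8}\phi^6\log^3(1/\delta)/(\xi^6\eta^6)\big)$, not the $K^{14}$ you claim. Relatedly, in step (7) the bound $\norm{w}\le L_{\textrm{max}}\norm{u}$ with $L_{\textrm{max}}=O(1/\sigma^2)$ is the wrong scale for this function ($\norm{\bar t(CC^\dagger)}\le 1/\sigma^4$, and what is actually used is $\norm{R^\dagger\bar t(CC^\dagger)}=O(1/\sigma^3)$); in the correct computation of $\varphi$ the factor $r$ cancels against $\norm{R(i,\cdot)}^2\le\phi\fnorm{A}^2/r$, leaving $\varphi=O\big(\phi^2K^3\norm{x^\ast}^2/\norm{\hat x}^2\big)$ with no $\log(1/\delta)$, $\xi$ or $\eta$ factors, so your expression ``$\varphi=O(K^2\phi^2\log(1/\delta)\xi^{-2}\eta^{-2}\cdot(\mathrm{poly}\ K))$'' does not establish the stated query cost (which comes out as $\varphi\cdot r=\phi^4K^7\log(1/\delta)/(\xi^2\eta^2)$ times $\log(1/\delta')/\eta'^2$ and the norm ratio).

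Second, step (4) cannot be run as described: an approximate \emph{joint} importance sketch of $R^\dagger$ and $b$ (Theorem \ref{th:MM}) requires (over)sampling access to both, and the theorem only supplies $\Q{}(b)$ --- you notice this at the end, yet the body still relies on the two-sided bias $2\varepsilon\sqrt{\phi\phi'}$, and the vague alternative ``or of $A$'' is not worked out. What the paper does (and what you need) is to estimate each scalar $R(i,\cdot)A^\dagger b=b^\dagger A R(i,\cdot)^\dagger$ separately via Corollary \ref{expest}, which samples from $A$ (not from $b$) and only queries $b$ and $R(i,\cdot)$, boosted by the powering lemma with a union bound over the $r$ entries. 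The price is a one-sided bias of order $\sqrt{\varepsilon}\,\norm{R(i,\cdot)}\fnorm{A}\norm{b}$ per entry; absorbing this into $\tfrac{\eta\sigma^3}{4}\norm{x^\ast}$ via (\ref{condb}) is exactly what produces the requirement $\sqrt{\varepsilon}\lesssim\eta/K^{3/2}$, i.e.\ the first half of Condition (\ref{cond6}) with its $\eta^2/K^3$ shape --- your joint-sketch route would only need $\varepsilon\lesssim\eta/\mathrm{poly}(K)$, a sign that it is not the mechanism the hypothesis was written for. With these two corrections (the $K^4$ and $K^6$ sketch sizes, and the entrywise Corollary~\ref{expest} estimation of $RA^\dagger b$), the remaining steps of your outline --- Lemma \ref{lemma:norm}, Lemma \ref{lemma:sample2}, the triangle-inequality chain, and Propositions \ref{prop:lin1} and \ref{prop:oversampling} with $3\varepsilon\varphi\le\varepsilon'$ --- do go through as in the paper.
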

\begin{proof}
Our proof follows the same approach as in \cite{Chia+JACM22}.
We define the function
\[
\begin{cases}
t(x)=1/x&\textrm{ if } x\ge \sigma^2,\\
t(x)=\frac{x-(1-\xi)^2\sigma^2}{\xi(2-\xi)\sigma^4}&\textrm{ if } x\in[(1-\xi)^2\sigma^2,\sigma^2),\\
t(x)=0&\textrm{ if } x<(1-\xi)^2\sigma^2,
\end{cases}
\]
and set $A^+_{\sigma,\xi}=t(A^\dagger A)\: A^\dagger b$. Note that this matrix satisfies the conditions of Equation (\ref{eq:condf}).
Observe that $t(x)$ is $L$-Lipschitz over $\Real$ and $\bar t(x)$ is $\bar L$-Lipschitz over $\Real$ with
\[
L=\frac{1}{\xi(2-\xi)\sigma^4}
\hspace{5mm}\textrm{ and }  \hspace{5mm}
\bar L=\frac{1}{\xi(2-\xi)(1-\xi)^2\sigma^6}.
\]
 Also observe that 
\begin{equation}\label{equp2}
\max_{x\in\Real}\{ t(x)\}= \frac{1}{\sigma^2} \hspace{5mm}\textrm{ and }  \hspace{5mm}\max_{x\in\Real}\{ \bar t(x)\}= \frac{1}{\sigma^4}
\end{equation}

We will set \begin{align*}
r&= \Theta\left(
\frac{\phi^2\fnorm{A}^8L^2\log(1/\delta)}{\eta^2}\right)=
\Theta\left(
\frac{\phi^2K^4\log(1/\delta)}{\xi^2\eta^2}\right)
,\\
c&= \Theta\left(
\frac{\phi^2\fnorm{A}^{12}\bar L^2\log(1/\delta)}{\eta^2}\right)=
\Theta\left(
\frac{\phi^2K^6\log(1/\delta)}{\xi^2\eta^2}\right).
\end{align*} 
Consider the $(r,\varepsilon,\phi)$-approximate importance sketch $S\in\Real^{r\times m}$ of $A$ associated with $\SQ{\varepsilon}(A)$ and write $R=SA$. Observe that by taking the constant $\tau$ small enough, any $\varepsilon$ satisfying Condition~(\ref{cond6}) also satisfies the condition $\varepsilon\le \phi/3$.  From Lemma \ref{lemma:norm}, we can then guarantee that with probability at least $1-\delta/3$
\begin{equation}\label{eq:cond8}
\fnorm{R}^2\in\left[\frac{1}{2} \fnorm{A}^2,\frac{3}{2}\fnorm{A}^2\right]
\end{equation}
holds.
We assume below that this is the case. 
From Lemma \ref{lemma:sample2}, we can implement $\OSQ{\varepsilon,\phi'}(R^\dagger)$ with $\phi'\le 2\phi$ at cost $\tilde O(r)$. We now consider the $(c,\varepsilon,\phi')$-approximate importance sketch $T\in\Real^{c\times n}$ of $R^\dagger$ associated with $\OSQ{\varepsilon,\phi'}(R^\dagger)$
and write $C=SAT^\dagger$. Observe that by taking the constant $\tau$ small enough, any $\varepsilon$ satisfying Condition~(\ref{cond6}) also satisfies Condition~(\ref{cond1}) of Theorem~\ref{th:SVT} with $\gamma=\frac{\eta}{2\fnorm{A}^2}$ and $\chi=\infty$. Theorem \ref{th:SVT} thus guarantees that the inequality
\begin{equation}\label{eq:deriv3}
\fnorm{R^\dagger \bar t(CC^\dagger) R- t(A^tA)}\le \frac{\eta}{2\fnorm{A}^2}
\end{equation}
holds with probability at least $1-\delta/3$.
Assume below that this is the case.

Observe that 
\begin{equation}\label{eq:com}
\norm{R^\dagger \bar t(CC^\dagger)}\le 
\norm{R^\dagger \sqrt{\bar t(CC^\dagger)}}\norm{\sqrt{\bar t(CC^\dagger)}}\le
\sqrt{\norm{t(AA^\dagger)}+\frac{\eta}{2\fnorm{A}^2}}\cdot \frac{1}{\sigma^2}\le \frac{2}{\sigma^3},
\end{equation}
where we used the assumption $\eta\le 1/2$ (which implies $\eta\le K/2$ since $K\ge 1$).

We now explain how to get an approximation $u$ of the vector $RA^\dagger b\in\Comp^{r}$. For each $i\in\set{1}{r}$, observe that $b^\dagger A R(i,\cdot)^\dagger =R(i,\cdot) A^\dagger b$. Observe that by taking the constant $\tau$ small enough, we can guarantee that any $\varepsilon$ satisfying Condition~(\ref{cond6}) also satisfies the inequality $\sqrt{\varepsilon}\le \frac{\eta}{64K^{3/2}}$.
Since we have $\OSQ{\varepsilon,\phi}(A)$ and both $\Q{}(R(i,\cdot)^\dagger)$ and $Q{}(b)$, we use Corollary \ref{expest} to compute at cost
$O\left(\frac{\phi K^3\log(r/\delta)}{\eta^2}\right)$
an estimate $\alpha_i$ such that 
\[
\abs{\alpha_i-R(i,\cdot) A^\dagger b}\le \left(4\sqrt{\varepsilon}+\frac{\eta}{16K^{3/2}}\right)\norm{R(i,\cdot)}\fnorm{A}\norm{b}\le \frac{\eta}{8K^{3/2}}\norm{R(i,\cdot)}\fnorm{A}\norm{b}
\]
holds with probability at least $1-\delta/(3r)$. We set $u(i)=\alpha_i$. The total time complexity of computing the whole vector $u$ is 
\begin{equation}\label{eq:31}
O\left(r \frac{\phi K^3\log(r/\delta)}{\eta^2}\right)=
\tilde O\left(\frac{\phi^3K^7\log^2(1/\delta)}{\xi^2\eta^4}\right).
\end{equation}
Then with probability at least $1-\delta/3$ we have 
\begin{equation}\label{eq:deriv1}
\norm{RA^\dagger b- u}= \sqrt{\sum_{i=1}^r \abs{\alpha_i-R(i,\cdot) A^\dagger b}^2}
\le \frac{\eta\sigma^3}{8\fnorm{A}^3}\fnorm{R}\fnorm{A}\norm{b}
\le \frac{\eta\sigma^3}{4}\norm{x^\ast},
\end{equation}
where the last inequality uses (\ref{condb}) and (\ref{eq:cond8}).

We set $\hat x =R^\dagger \bar t(CC^\dagger) u$.
Putting everything together, the following inequalities hold with probability at least $1-\delta$: 
\begin{align*}
\norm{\hat x- x^\ast}&\le 
\norm{R^\dagger \bar t(CC^\dagger) (RA^\dagger b- u)}
+
\norm{R^\dagger \bar t(CC^\dagger) RA^\dagger b- x^\ast}\\
&\le 
\norm{R^\dagger \bar t(CC^\dagger)} \norm{RA^\dagger b- u}
+
\norm{R^\dagger \bar t(CC^\dagger) R- t(A^tA)}\norm{A^\dagger b}\\
&\le 
\frac{\eta}{2}\norm{x^\ast}
+
\frac{\eta}{2\fnorm{A}^2}\norm{A} \norm{b}\\
&\le \eta\norm{x^\ast}, 
\end{align*}
where we used (\ref{eq:deriv3}), (\ref{eq:com}) and (\ref{eq:deriv1}) for the third inequality, and (\ref{condb}) for the last inequality.

Finally, we explain how to implement sampling-and-query access to $\hat x$. We can compute explicitly the matrix $\bar t(CC^\dagger)$ in $O(r^2c)$ time, and then compute the vector $\bar t(CC^\dagger)u\in\Comp^r$ in $O(r^2)$ time.  The overall complexity of this step is 
\[
O\left(
\frac{\phi^6K^{14}\log^3(1/\delta)}{\xi^6\eta^6}\right),
\]
which dominates the complexity of the preprocessing stage (compare with Equation (\ref{eq:31})). 

Since $\hat x$ is a linear combination of the columns of $R^\dagger$ (i.e., the rows of $R$), we can then use Proposition \ref{prop:lin1} to implement $\OSQ{\varepsilon,\varphi}(\hat x)$ at cost $O(r)$ with 
\begin{align*}
\varphi= r
\frac{\sum_{i=1}^r 2\phi|[\bar t(CC^\dagger) u](i)|^2\norm{R(i,\cdot)^\dagger}^2}{\norm{\hat x}^2}
\le
\frac{4\phi^2\norm{\bar t(CC^\dagger) u}^2\fnorm{A}^2}{\norm{\hat x}^2},
\end{align*}
where we used Equation (\ref{eq:up}). Using (\ref{equp2}), (\ref{eq:com}) and (\ref{eq:deriv1}) we further have:
\begin{align*}
\varphi
&=O\left(
\frac{\phi^2(\norm{\bar t(CC^\dagger) RA^\dagger b}+\norm{\bar t(CC^\dagger)(RA^\dagger b-u)})^2\fnorm{A}^2}
{\norm{\hat x}^2}
\right)\\
&=O\left(
\frac{\phi^2(\norm{\bar t(CC^\dagger) R}\norm{A^\dagger} \norm{b}+\norm{\bar t(CC^\dagger)}\norm{RA^\dagger b-u})^2\fnorm{A}^2}
{\norm{\hat x}^2}\right)\\
&=O\left(
\frac{\phi^2\left(\frac{2}{\sigma^3}\norm{A}\norm{b}+\frac{1}{\sigma^4}\frac{\eta\sigma^3}{4}\norm{x^\ast}\right)^2\fnorm{A}^2}
{\norm{\hat x}^2}\right)\\
&=O\left(
\frac{\phi^2\left(\frac{2}{\sigma^3}\fnorm{A}^2\norm{x^\ast}+\frac{1}{\sigma^4}\frac{\eta\sigma^3}{4}\norm{x^\ast}\right)^2\fnorm{A}^2}
{\norm{\hat x}^2}\right)
\end{align*}
Using the assumption $\eta\le 1/2$ (which again implies $\eta\le K/2$ since $K\ge 1$), we obtain
\[
\varphi
=O\left(
\frac{\phi^2K^3 \norm{x^\ast}^2}
{\norm{\hat x}^2}\right).
\]
When the inequality $\norm{\hat x- x^\ast}\le\eta\norm{x^\ast}$ holds (which happens with probability at least $1-\delta$), we get the upper bound $\varphi=O(\phi^2K^3)$ since by assumption we have $\eta\le 1/2$. By taking the constant~$\tau$ small enough, any $\varepsilon$ satisfying Condition (\ref{cond6}) also satisfies the inequality $3\varepsilon \varphi\le \varepsilon'$. We then use Proposition \ref{prop:oversampling} to implement $\RSQ{\varepsilon',\delta',\eta'}(\hat x)$ at cost 
\[
O\left(\frac{\varphi \log(1/\delta')}{\eta'^2}r\right)
=
O\left(
\frac{\phi^4K^{7}\log(1/\delta)\log(1/\delta')}{\xi^2\eta^2\eta'^2}
\frac{\norm{x^\ast}^2}{\norm{\hat x}^2}
\right),
\]
as claimed.
\end{proof}

\section*{Acknowledgments}
The author is grateful to Andr{\'{a}}s Gily{\'{e}}n, Will Rosenbaum, Ewin Tang and Santosh Vempala for helpful correspondence. He also thanks Ansis Rosmanis for valuable discussions. The author was supported by JSPS KAKENHI grants Nos.~JP19H04066, JP20H05966, JP20H00579, JP20H04139, JP21H04879, JP24H00071 and MEXT Quantum Leap Flagship Program (MEXT Q-LEAP) grant No.~JPMXS0120319794.

\bibliographystyle{plain}
\bibliography{dequantization}
\end{document}